\newtheorem{theorem}{Theorem}[section]
\newtheorem{corollary}{Corollary}
\newtheorem{lemma}[theorem]{Lemma}
\newtheorem{proposition}{Proposition}
\theoremstyle{definition}
\newtheorem{definition}[theorem]{Definition}
\newcommand{\diff}{\mathrm{d}}
\newcommand{\e}{\operatorname{e}}
\newcommand{\Tr}{\operatorname{Tr}}
\newcommand{\MM}{\operatorname{\mathcal{M}}}
\newcommand{\NN}{\operatorname{\mathcal{N}}}
\title[Quantum Radon Transform] 
      {Quantum Tomography and the Quantum Radon Transform}
\author[Alberto Ibort and Alberto L\'opez-yela]{}
\subjclass{Primary: 58F15, 58F17; Secondary: 53C35.}
 \keywords{Radon Transform, Quantum Tomography, $C^*$-algebras, group representations, positive-type functions.}
 \email{albertoi@math.uc3m.es}
 \email{alyela@math.uc3m.es}
\thanks{$^*$ Corresponding author: Alberto Ibort}
\begin{document}
\maketitle

\centerline{\scshape Alberto Ibort$^*$}
\medskip
{\footnotesize
 \centerline{Instituto de Ciencias
Matem\'{a}ticas (CSIC - UAM - UC3M - UCM) ICMAT and}
   \centerline{Depto. de Matem\'aticas, Univ. Carlos III de Madrid,}
   \centerline{ Avda. de la Universidad 30, 28911 Legan\'es, Madrid, Spain}
} 

\medskip

\centerline{\scshape Alberto L\'opez-Yela}
\medskip
{\footnotesize
 \centerline{ Dpto. de Teor\'ia de la se\~nal y comunicaciones, Univ. Carlos III de Madrid,}
   \centerline{Avda. de la Universidad 30, 28911 Legan\'es, Madrid, Spain}
}

\bigskip

 \centerline{(Communicated by the associate editor name)}

\begin{abstract}
A general framework in the setting of $C^*$-algebras for the tomographical description of states, that includes, among other tomographical schemes, the classical Radon transform, quantum state tomography and group quantum tomography, is presented.  

Given a $C^*$-algebra, the main ingredients for a tomographical description of its states are identified: A  generalized sampling theory and a positive transform.  A generalization of the notion of dual tomographic pair provides the background for a sampling theory on $C^*$-algebras and, an extension of Bochner's theorem for functions of positive type, the positive transform.

The abstract theory is realized by using dynamical systems, that is, groups represented on $C^*$-algebra.  Using a fiducial state and the corresponding GNS construction, explicit expressions for tomograms associated with states defined by density operators on the corresponding Hilbert spade are obtained.   In particular a general quantum version of the classical definition of the Radon transform is presented.    
The theory is completed by proving that if the representation of the group is square integrable, the representation itself defines a dual tomographic map and explicit reconstruction formulas are obtained by making a judiciously use of the theory of frames.
A few significant examples are discussed that illustrates the use and scope of the theory.
\end{abstract}

\tableofcontents

\section{Introduction}\label{sec:introduction}
In 1917 Johann Karl August Radon  \cite{Ra17} introduced the transformation that carries his name and that allows to recover a function (regular enough) from its averages over a family of lines.  The Radon transform provides the mathematical backbone of modern Computerized Axial Tomography (CAT for short, see Fig. \ref{Section_CAT}) \cite{Na86}.   

A similar idea was proposed \cite{Vo89}, \cite{Be87}, and implemented experimentally \cite{Sm93}, to reconstruct the Wigner function  associated with the state of a quantum system. The proposed method shares the essence of the tomographic method: performing many measurements through slices of the given function (the Wigner quasidistribution associated with the quantum state in this case) to obtain a marginal probability distribution (the tomogram),  from which the original object can be reconstructed by means of an generalized inverse Radon transform, hence the given name of \textit{quantum state tomography} to these techniques. 

More formally, let $f(q,p)$ be a Schwartz function on $\mathbb{R}^2$. The \textit{Radon Transform} of $f$ is defined as:
\begin{equation}\label{Radon_Transform}
\mathcal{W}_f(X,\mu,\nu) = \int_{l(X,\mu,\nu)} f\big(q(s),p(s)\big)\diff s = \int\limits_{\hspace{.5cm}\mathbb{R}^2}\hspace{-.19cm} f(q,p)\delta(X-\mu q-\nu p)\diff q\diff p \, ,
\end{equation}
where $l(X,\mu,\nu)$ denotes the  line $X - \mu q - \nu p = 0$, in the $(q,p)$ plane we integrate over, and  $\delta (X- \mu q - \nu p)$ the delta distribution along $l(X,\mu,\nu)$.  The original function $f$ can be recovered by means of the following expression \cite{Ma05}:
\begin{equation}\label{Inverse_Radon_Transform_2}
f(q,p)=\frac{1}{(2\pi)^2}\hspace{-.3cm}\int\limits_{\hspace{.5cm}\mathbb{R}^3}\hspace{-.2cm}\mathcal{W}_f(X,\mu,\nu)\e^{-i(X-\mu q-\nu p)}\diff X\diff\mu\diff\nu.
\end{equation}

\begin{figure}[htp]
\begin{center}
 \includegraphics[width=5in]{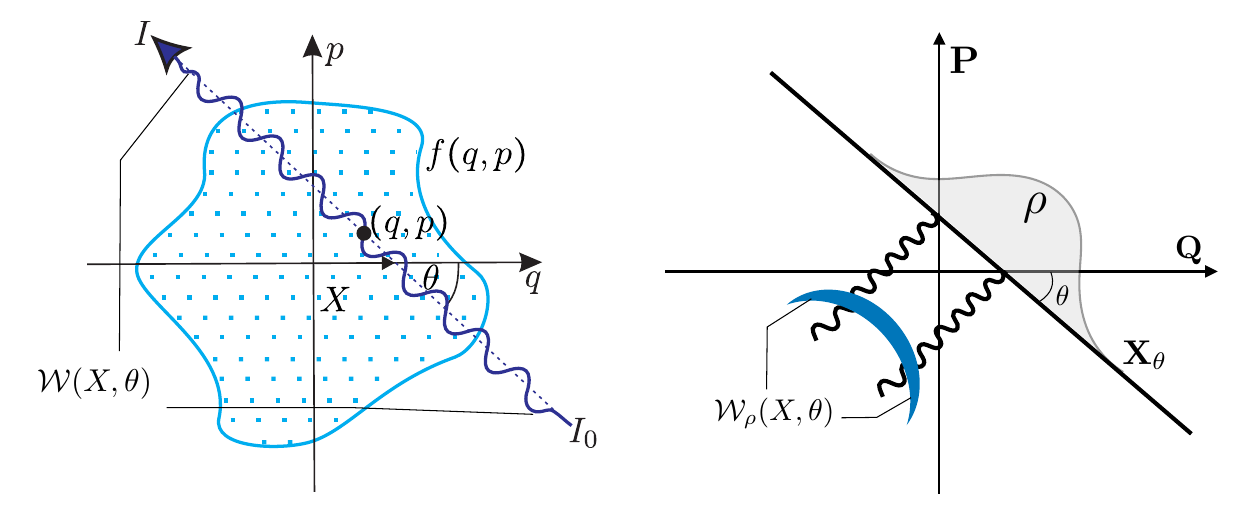}\\
  \caption{Left: Section of a body irradiated with gamma radiation and the measure of the tomogram $\mathcal{W}(X,\theta)$. Right:  Scheme of quantum state tomography. The quantum tomogram $\mathcal{W}_\rho (X, \theta)$ obtained by homodyne measuring the quadrature operator $\mathbf{X}_\theta = X -\textbf{Q}\cos\theta-\textbf{P}\sin\theta$.}\label{Section_CAT}
  \end{center}
\end{figure}

The state of a classical system (where, for simplicity, we will just consider a $2$-dimensional phase space with coordinates $(q,p)$) can be described by a classical probability distribution $w (q,p)$.   The expected value of an observable $A$, a bounded function of position and momentum $A = A(q,p)$, is given by:
\begin{equation}\label{eq:classical_average}
\langle A \rangle_w = \int A(q,p) w(q,p) \diff q \diff p \, . 
\end{equation}
We can use the Radon Transform to describe the state $w = w(q,p)$ (and to reconstruct it) from its associated tomograms: 
$$
\mathcal{W}_w(X,\mu, \nu ) = \int w(q,p) \delta (X - \mu q - \nu p) \diff q \diff p\, .
$$

If we consider a quantum system instead of a classical one, its states are described by density operators $\boldsymbol{\rho}$, that is, by positive normalized trace class operators on a certain Hilbert space $\mathcal{H}$.  Observables are self-adjoint operators $\mathbf{A}$ on such Hilbert space and the expected value of the observable $\textbf{A}$ in a state $\boldsymbol{\rho}$ is given by the trace of its projection onto the state: 
\begin{equation}\label{trace_measurement}
\langle \mathbf{A}\rangle_{\boldsymbol{\rho}} = \Tr\left(\boldsymbol{\rho}\mathbf{A} \right).
\end{equation}
One way to draw a relation between the classical average formula \eqref{eq:classical_average} and the quantum one \eqref{trace_measurement} will consist of associating a numerical function $A(q,p)$ to the quantum observable $\mathbf{A}$, and a suitable `distribution'  function $w_{\boldsymbol{\rho}} (q,p)$ to the density operator $\boldsymbol{\rho}$ such that the expectation value of the quantum observable would be given as:
$$
\langle \mathbf{A}\rangle_{\boldsymbol{\rho}}  = \int A(q,p) w_{\boldsymbol{\rho}} (q,p) \diff q \diff p \,.
$$ 

Surprisingly perhaps, this problem has a solution. It is given by the quasi-distribution  $w_{\boldsymbol{\rho}}$ introduced 
by E. Wigner  \cite{Wi32}, universally known as the Wigner distribution, with the aim of  studying the quantum corrections to classical statistical mechanics \cite{Er07} and given by:
\begin{equation}\label{Wigner_function}
w_{\boldsymbol{\rho}} (q,p) = \frac{1}{(2\pi\hbar)}\hspace{-.1cm}\int\limits_{\hspace{.3cm}\mathbb{R}}\hspace{-.1cm} \left\langle q-\frac{y}{2}\right|\boldsymbol{\rho}\left|q+\frac{y}{2}\right\rangle\e^{ipy/\hbar}\diff y \, ,
\end{equation}
The Radon Transform of the Wigner distribtution $w_{\boldsymbol{\rho}}$ is the tomogram $\mathcal{W}_{\boldsymbol{\rho}}$ mentioned above used in quantum state tomography:
\begin{equation}\label{Tomogram_quantum_to_classic_4}
\mathcal{W}_{\boldsymbol{\rho}}(X,\theta) = \hspace{-.26cm}\int\limits_{\hspace{.5cm}\mathbb{R}^2}\hspace{-.19cm}w_{\boldsymbol{\rho}}(q,p)\delta(X -q\cos\theta-p\sin\theta)\diff q\diff p,
\end{equation}
Remarkably enough (see Sect. \ref{section_tomograms_group}, Thm.\ref{tom_cart} below, for the explanation of this general fact), the quantum tomogram $\mathcal{W}_{\boldsymbol{\rho}}$ associated with the state $\boldsymbol{\rho}$ can be expressed as:
\begin{equation}\label{eq:quantum_radon_heisenberg}
\mathcal{W}_{\boldsymbol{\rho}} (X,\mu, \nu) = \Tr\!\big(\boldsymbol{\rho}\, \delta(X\mathbbm{1}-\textbf{Q}\cos\theta-\textbf{P}\sin\theta)\big) = \langle \delta(X\mathbbm{1}-\textbf{Q}\cos\theta-\textbf{P}\sin\theta)\rangle_{\boldsymbol{\rho}} \, ,
\end{equation}
where $\textbf{Q}$ and $\textbf{P}$ are the quantum position and momentum operators respectively (see Fig. \ref{Section_CAT}, right, for a schematic description of the experimental setting used to measure $\mathcal{W}_{\boldsymbol{\rho}}$), and $\tan \theta = \mu/ \nu$, making it more apparent the parallelism with the tomogram of a classical state introduced before:
\begin{equation}\label{Radon_quantization}
\mathcal{W}_w (\mu, \nu) =  \int w(q,p) \delta (X - \mu q - \nu p) \diff q \diff p =   \langle \delta(X-q\cos\theta-p\sin\theta)\rangle_w \, .
\end{equation} 
The Wigner function $w_{\boldsymbol{\rho}} (q,p)$ can also be easily recovered from the tomogram \eqref{Tomogram_quantum_to_classic_4}:
\begin{equation}\label{Inverse_Radon_Transform_Wigner}
w_{\boldsymbol{\rho}} (q,p) =\frac{1}{(2\pi)^2}\hspace{-0.1cm}\int \mathcal{W}_{\boldsymbol{\rho}}(X,\theta)\e^{-ik(X-q\cos\theta-p\sin\theta)}k\hspace{0.02cm}\diff\theta\diff k\diff X \, ,
\end{equation}
and the matrix elements  in coordinate representation of the density operator $\boldsymbol{\rho}$ are given by the Fourier Transform of the Wigner function:
\begin{equation}\label{matrix_element_rho}
\boldsymbol{\rho}(q,q')=\hspace{-.1cm}\int \e^{-ip(q'-q)/\hbar} w_{\boldsymbol{\rho}}\left(\frac{q'+q}{2},p\right)\diff p \, .
\end{equation}

A number of other different quantum tomographic techniques adapted to different systems, such as spin systems \cite{Ma97}, molecular vibrational states \cite{Op99}, modes of radiation fields \cite{Ar00}, ultracold gases \cite{Ca08}, etc., have been proposed.   For systems exhibiting a group symmetry a unified theory based on the theory of group representations, and consequently called \textit{group quantum tomography}, has been developed by D'Ariano \textit{et al} (see for instance \cite{Ar00b}, \cite{Ca00} and references therein).   More general situations where, for instance, no group theoretical background was available, have also been discussed and a mature mathematical and physical theoretical background has been laid (see for instance the monograph \cite{Da03}, the review \cite{Ib09}, and references therein).   

A systematic exploration of the mathematical background for the various quantum tomographic frameworks based on the so called quantizer-dequantizer formalism, has been conducted  and the underlying mathematical and physical problems have been described \cite{Ma05}, \cite{Ib10}, \cite{Fa10}, \cite{As15}, \cite{As15_2}.   From these efforts it emerges that an analysis based on the algebraic description of quantum systems, i.e., using the $C^*$-algebric picture of quantum mechanics, would be relevant \cite{Ib11}.  

The description of a physical system always involves the selection of its algebra of observables $\mathcal{O}$ and a family of states $\mathcal{S}$. The outputs of measuring a given observable $A\in\mathcal{O}$, when the system is in the state $\rho\in\mathcal{S}$, are described by a probability measure $\mu_{A,\rho}$ on the real line, such that $\mu_{A,\rho}(\Delta)$ is the probability that the output of $A$ belongs to the subset $\Delta\subset\mathbb{R}$. Thus, a measure theory (or better a theory of measurement) for the physical system under consideration, is a pairing between states $\rho$ and observables $A$, that assigns Borel probability measures $\mu_{A,\rho}$ to them. Then, the expected value of the observable $A$ in the state $\rho$ is given by:
\begin{equation}
\langle A\rangle_\rho= \int \lambda\, \diff\mu_{A,\rho}(\lambda).
\end{equation}
The $C^*$-algebraic description of quantum systems assumes that the algebra of observables of a given quantum system is a $C^*$-algebra $\mathcal{A}$ (actually observables are its self-adjoint elements) and the states of the system are the mathematical states (i.e., positive, normalized linear forms) of the $C^*$-algebra.

Thus the purpose of this work will be to develop a $C^*$-algebra based Quantum Tomography, aiming to provide, not only a unified picture of all previous tomographic analysis of quantum systems, but a natural extension of the Radon transform that could be used effectively in other areas of application.     

Thus, the main ingredients needed to construct a tomographic picture of a system in a $C^*$-algebraic framework, will be analysed:  A sampling theory on $C^*$-algebras and, a positive, Bochner-like, transform. The objective of the first will be to provide a sufficient number of samples of the state, and the second will transform them into a probability distribution, the seeked tomograms, on a given auxiliary space.   

The notion of a $C^*$-dual tomographic pair will be introduced that will abstract both, the Marmo-Manko quantizer-dequantizer formalism and D'Ariano's quorum-quantum estimator settings for quantum tomography.   At the same time necessary and sufficient conditions for the reconstruction of the original state will be provided.    

A particularly interesting framework for the application of the theory is provided by dynamical systems, that is, representations of groups in $C^*$-algebras.  This setting extends in a natural way the group quantum tomography developed by D'Ariano \textit{et al} and, consequently, will be analyzed in depth showing the existence of a natural Radon transform associated with a reference or fiducial state.   This extension of the Radon transform unifies both, the sampling and the positive transformation needed to develop a tomographic picture of the theory, it will provide a natural extension to Eq. (\ref{eq:quantum_radon_heisenberg}), and could, consequently, be called a \textit{quantum Radon transform}.      

Moreover, a reconstruction theorem, relying on the theory of frames applied to square integrable representations of groups, will be proved.   The particular instance of finite and compact groups will be used to illustrate the theory, as well as the Weyl-Heisenberg group and its associated Wigner-Weyl tomographic picture.

The rest of the paper will be organized as follows.  Section \ref{section_C-algebras} will be devoted to summarize basic notations and constructions of the theory of $C^*$-algebras. In Section \ref{sec:tomographicC*}, the main notions of a tomographic theory in $C^*$-algebras will be discussed.  Section \ref{section_particular group} will be devoted to develop the theory of quantum tomography on $C^*$-algebras based on dynamical systems; there both the natural $C^*$-extension of the Radon transform and a reconstruction theorem will be discussed.   Finally, in Section \ref{sec:examples}, some particular situations and applications illustrating the theory will be exhibited.


\section{$C^{*}$--algebras and Quantum Tomography}\label{section_C-algebras}

Given a quantum mechanical systems with associated Hilbert space $\mathcal{H}$, the self-adjoint part of the algebra of bounded operators $\mathcal{B}(\mathcal{H})$ is usually considered as the algebra of (bounded) observables of the system, however, as von Neumann \cite{Ne30} pointedly realized, it is often necessary to consider more general algebras of observables.  In what follows we will assume that observables are elements of a $C^*$--algebra $\mathcal{A}$ or, more especifically, of a von Neumann algebra, that is, a concrete realization of a $C^*$-algebra as a closed subalgebra of the $C^*$-algebra of bounded operators of a complex separable Hilbert space.

\subsection{States, observables and $C^*$-algebras}
A  $*$--algebra $\mathcal{A}$ is a complex Banach algebra with a norm $\|\hspace{-0.02cm}\cdot\hspace{-0.02cm}\|$ and an \textit{involution} operation $^*$ satisfying:,

\begin{enumerate}

\item $(a^*)^*=a$,

\item $(ab)^*=b^*a^*$,

\item $(a+\lambda b)^*=a^*+\bar{\lambda} b^*$,
\end{enumerate}
for all $a,b\in\mathcal{A}$ and $\lambda\in\mathbb{C}$. A $C^*$--algebra (there is a huge literature on the subejct, we may cite for instance \cite{Pe79} and references therein)  is a $*$--algebra $\mathcal{A}$ such that 
$$
\|a^*a\|=\|a\|^2,\qquad\forall a\in\mathcal{A}\, .
$$
We will assume that the algebras considered here are unital in the sense that there exists an element $\mathbbm{1}$ such that $\mathbbm{1} a=a \mathbbm{1}$ for all $a\in\mathcal{A}$.  

Oservables of the system will correspond to self-adjoint elements, that is elements such that $a^*=a$. The subspace of all self-adjoint elements is denoted by $\mathcal{A}_{sa}$ and constitutes the Lie--Jordan Banach algebra of observables of the corresponding quantum system (see \cite{Fa13} for more details).

In particular, as indicated above, we can consider the $C^*$--algebra $\mathcal{B}(\mathcal{H})$ of all bounded operators on the Hilbert space $\mathcal{H}$ equipped with the operator norm and the involution defined by the adjoint operation, that is, $\textbf{A}^*=\textbf{A}^\dagger$ where $\textbf{A}^\dagger$ denotes the adjoint operator of $\textbf{A}\in\mathcal{B}(\mathcal{H})$.  Observables in such case correspond to self-adjoint operators on $\mathcal{H}$.

The states of the system are normalized positive functionals on $\mathcal{A}$, that is, linear maps $\rho:\mathcal{A}\rightarrow\mathbb{C}$ such that
\begin{equation}
\rho(\mathbbm{1})=1\, , \qquad \rho(a^*a)\geq 0\, ,\qquad \forall a\in\mathcal{A} \, .
\end{equation}

In the case in which $\mathcal{A}=\mathcal{B}(\mathcal{H})$, because of Gleason theorem \cite{Gl57}, states are in one-to-one correspondence with normalized non-negative Hermitean operators $\boldsymbol{\rho}$ acting on the Hilbert space $\mathcal{H}$, often called, as in Sect. \ref{sec:introduction},  density operators.  
The relation between states $\rho$ of the $C^*$--algebra $\mathcal{B}(\mathcal{H})$ and density operators $\boldsymbol{\rho}$ on the Hilbert space $\mathcal{H}$ is given by the formula:
\begin{equation}\label{expected_value_c_star}
\rho(\textbf{A})=\textrm{Tr}(\boldsymbol{\rho}\textbf{A}),\qquad\forall\textbf{A}\in\mathcal{B}(\mathcal{H}).
\end{equation}

The space of states of a given $C^*$--algebra $\mathcal{A}$ will be denoted by $\mathcal{S}(\mathcal{A})$ and it is a convex weak$^*$-compact subset of the topological dual $\mathcal{A}^\prime$ of $\mathcal{A}$ \cite{Al78}.

Notice that according to the physical interpretation of the $C^*$--algebra $\mathcal{A}$ as the the $C^*$-algebra of the Lie-Jordan algebra of observables of a given physical system, when the algebra is commutative, it will be describing a classical system whereas non-commutativity will correspond to ``genuine'' quantum systems.   A unital commutative $C^*$-algebra is the $C^*$-algebra of continuous functions $C(\Omega)$ on a compact space $\Omega$, its observables are continuous real functions and states are defined by probability Radon measures on $\Omega$.  Another instance of classical systems is found when $\Omega$ is a measure space, $\mu$ a probability measure and the von Neumann algebra associate to it is given by the space $L^\infty (\Omega, \mu)$ of essentially bounded functions on $\Omega$.  In this sense the discussion to follow will comprehend classical tomography and the classical Radon transform as well.

A state $\rho$ of the $C^*$--algebra $\mathcal{A}$ represents the state of the physical system under consideration and the number $\rho(a)$, for any given observable $a\in\mathcal{A}_{sa}$, is interpreted as the expected value of the observable $a$ measured in the state $\rho$, consequently, it is also denoted by:
\begin{equation}\label{mean_value_conclusion}
\langle a\rangle_\rho=\rho(a) \, .
\end{equation}
In this sense, Eq.\,\eqref{expected_value_c_star} represents the expected value of the observable described by the operator $\textbf{A}$ when the system is in the state given by the density operator $\boldsymbol{\rho}$.

Each self-adjoint element $a\in\mathcal{A}_{sa}$ defines a continuous affine function $\check{a}$ on the space of states $\mathcal{S}(\mathcal{A})$,
\begin{equation}
\check{a} (\rho) =\rho(a) \, .
\end{equation}
A theorem by Kadison \cite{Ka51} states that the correspondence $a\rightarrow\check{a}$ is an isometric isomorphism from the self-adjoint part of $\mathcal{A}$ onto the space of all continuous affine functions from $\mathcal{S}(\mathcal{A})$ into $\mathbb{R}$. Thus, the self-adjoint part of the algebra of observables $\mathcal{A}_{sa}$ can be recovered directly from the space of states and its complexification provides the whole algebra \cite{Fa13}.


\subsection{The GNS construction}\label{ps_GNS}

The picture of quantum mechanical systems proposed by P.A.M. Dirac \cite{Di81} favours the assignment of a Hilbert space to a given quantum mechanical system.    Moreover starting with the algebraic picture provided by a $C^*$-algebra $\mathcal{A}$, Dirac's Hilbert space is recovered by means of the GNS construction \cite{Ge43,Se47} named after Isra\"il M. Gel'fand, Mark A. Naimark and Irving E. Segal, provided that a state $\rho$ is given.

The Hilbert space $\mathcal{H}_\rho$ is constructed as the completion of the quotient inner product space $\mathcal{A}/\mathcal{J}_\rho$, where $\mathcal{J}_\rho=\left\{a\in\mathcal{A}|\,\rho(a^*a)=0\right\}$,
is the Gel'fand ideal of null elements for $\rho$, and the inner product  $\langle \cdot, \cdot \rangle$ is defined as:
\begin{equation}\label{eq:bracket_rho}
\langle[a],[b]\rangle_\rho=\rho(a^*b),\qquad a,b\in\mathcal{A} \, ,
\end{equation}
where $[a]$ denotes the class $a+\mathcal{J}_\rho$ in the quotient space $\mathcal{A}/\mathcal{J}_\rho$.  

Thus, given a state $\rho$ on a $C^*$--algebra $\mathcal{A}$, there is canonical representation $\pi_\rho$ of $\mathcal{A}$ in the $C^*$--algebra on bounded operators of a Hilbert space $\mathcal{H}_\rho$ defined as:
\phantomsection\label{pi_rep_ps}\begin{equation}\label{pi_rep}
\pi_\rho(a)[b]=[ab],\qquad\forall a,b\in\mathcal{A}.
\end{equation}

The GNS construction provides a cyclic representation $\pi_\rho$ of $\mathcal{A}$ with the cyclic vector corresponding to the unit element $\mathbbm{1}$.   Such vector $\pi_\rho (\mathbbm{1})$ will be called the \textit{vacuum vector} of $\mathcal{H}_\rho$ and denoted by $|0\rangle$. Moreover, we get that the state $\rho$ is also described by
\begin{equation}\label{eq:rhoa}
\rho(a)=\langle0|\pi_\rho(a)|0\rangle,\qquad a\in\mathcal{A} \, .
\end{equation}
In addition, given any element $a\in\mathcal{A}$, we have the associated vector $\pi_\rho(a)|0\rangle=[a \mathbbm{1}]=[a]$ that,  in what follows, will be denoted by  $|a\rangle \in\mathcal{H}_\rho$, thus
\begin{equation}
\pi_\rho(a)|0\rangle=|a\rangle.
\end{equation}
Conversely, each unitary vector $|a\rangle\in\mathcal{H}_\rho$ defines a state on $\mathcal{A}$ by means of
\begin{equation}\label{eq:vector_states}
\rho_a(b)=\langle a|\pi_\rho(b)|a\rangle=\rho(a^*ba) \, .
\end{equation}
Such states will be called \textit{vector states} of the representation $\pi_\rho$. More general states, called normal states, can be defined by means of density operators $\boldsymbol{\sigma}$ in $\mathcal{B}(\mathcal{H}_\rho)$ by the formula:
\begin{equation}\label{folium_rho}
\sigma(a)=\Tr\!\big(\boldsymbol{\sigma}\pi_\rho(a)\big),\qquad\forall a\in\mathcal{A} \, .
\end{equation}
Notice that, in such case,
\begin{equation}\label{folium}
\sigma(\mathbbm{1})=1,\qquad\sigma(a^*a)=\langle a|\boldsymbol{\sigma}|a\rangle\geq 0,\quad\forall a\in\mathcal{A} \, .
\end{equation}
The family of states given by \eqref{folium_rho} is called the \textit{folium} of the representation $\pi_\rho$ (see \cite{Ha96}, page 124).

The tomographic description of the state $\rho$ of a quantum system described by the $C^*$-algebra $\mathcal{A}$ will consist of assigning to this state a probability density $\mathcal{W}_\rho$ in some auxiliary space $\mathcal{N}$, in such a way that given $\mathcal{W}_\rho$ the state $\rho$ can be reconstructed unambiguously \cite{Ib09}, such reconstruction will be called the tomographic problem (see Fig.~\ref{Tom_problem_fig}).
\begin{figure}[h]
\centering
\includegraphics{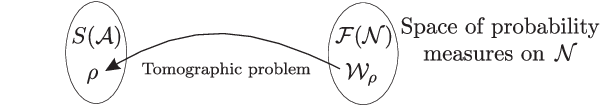}
\caption{Tomographic problem.}
\label{Tom_problem_fig}
\end{figure}

There is, however, not a single ``tomographic theory'', neither a canonical way to construct a tomogram $\mathcal{W}_\rho$ out of $\rho$.  In what follows, we will individuate the main ingredients that a tomographic description must contain: a \textit{Sampling Theory} and a \textit{Positive Transform}. We will discuss these two basic ingredients that together will constitute what we will call a \textit{Quantum Radon Transform}. 


\section{Tomographic description on $C^*$-algebras}\label{sec:tomographicC*}


\subsection{Sampling on $C^*$--algebras}\label{section_Sampling_C}\label{sec:sampling}

Let $\mathcal{A}$ be a $C^*$-algebra with unit $\mathbbm{1}$.  A tomographic theory begins by extracting information from the system, that is, by `sampling it'. This will be achieved by introducing the notion of a tomographic set.

\subsubsection{Tomographic sets: the sampling map}
Consider a family of elements $\{ U(x) \mid x \in \MM \}$ in $\mathcal{A}$ parametrized by an index $x\in \MM$ which can be discrete or continuous. This family can be described by a map $U\colon \MM\rightarrow\mathcal{A}$ where $\MM$ is a measurable space labelling the elements $U(x)$.
\begin{figure}[htbp]
\centering
\includegraphics{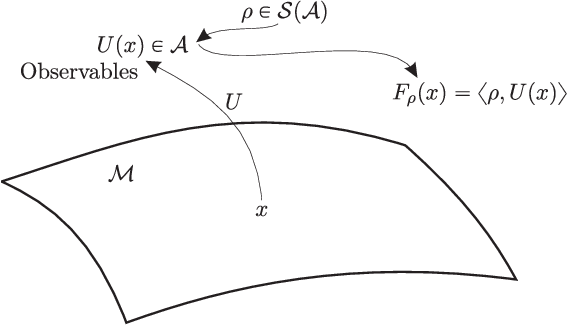}
\caption{Tomographic map $U$.}
\label{Tom_set}
\end{figure}

Given a state $\rho$ and a set $\{ U(x) \mid x \in \MM\}$, we will call the function $F_\rho:\mathcal{M}\rightarrow\mathbb{C}$ defined as:
\begin{equation}
F_\rho(x)=\langle\rho,U(x)\rangle=\rho\big(U(x)\big),\qquad x\in\MM,
\end{equation}
the \textit{sampling function} of $\rho$ with respect to $U$, Fig.~\ref{Tom_set}. In what follows, we will use indistinctly the notation $\rho(a)$ or $\langle\rho,a\rangle$ to denote the evaluation of the state $\rho$ on the element $a\in\mathcal{A}$.

We will say that the map $U$ \textit{separates states} if given two states $\rho$ and $\tilde{\rho}$ there exist $x\in\MM$ such that $F_\rho(x)\neq F_{\tilde{\rho}}(x)$.  

\begin{definition}  Let $\mathcal{A}$ be a $C^*$-algebra and $\MM$ a measurable space with measure $\mu$.  
A family of elements $\mathcal{U} = \left\{\vphantom{a^\dagger}U(x)\,|\,x\in\MM\right\}$ will be called a tomographic set and the map $U \colon \MM \to \mathcal{A}$, $x \mapsto U(x)$, a tomographic map, if it satisfies:
\begin{enumerate}
\item For every state $\rho \in \mathcal{S}(\mathcal{A})$, the sampling function $F_\rho (x ) = \langle \rho, U(x) \rangle$ is $\mu$-measurable and belongs to $L^2(\MM, \mu)$.
\item The map $U$ separates states.
\end{enumerate}
\end{definition}

Given a tomographic set $\mathcal{U}$, the map:
$$
\mathcal{F}_U \colon \mathcal{S}(\mathcal{A}) \to L^2(\MM, \mu) \, ,
$$ 
defined as $\mathcal{F}_U (\rho ) = F_\rho = \langle \rho , U(\cdot ) \rangle$, will be called the sampling map associated to 
$\mathcal{U}$. 

A number of comments are in order here.   In the definition of a tomographic set a different integrability condition could have been chosen, for instance, we could have asked $F_\rho$ to be integrable instead of square integrable.     The square integrability condition for the sampling functions will allow us to use Hilbert space techniques and they will prove to be critical when discussing one of the main families of examples, the so called group quantum tomography (see Sect. \ref{section_recosntruction_states_groups}).     

The notion of tomographic set introduced here is closer in spirit to the notion of frames used in the classical theory of sampling  \cite{Da92}, \cite{Ka94}.  There are various extensions of the classical theory of sampling to quantum systems (see for instance \cite{Fe15}, \cite{Ga19} and references therein). In this sense, notice that the tomographic map $U$ (and a dual tomographic map  $D$ discussed later on) is a far reaching generalization of the notion of frame (and its dual frame).  We will analyse this situation in detail in the case of group quantum tomography in Sect. \ref{section_recosntruction_states_groups}.   We will just point out here that there is a mature theory of sampling on $C^*$-algebras that extends the well-known results of standard sampling theory (see \cite{Fr02} and references therein) but we will not make use of them beyond the notions discussed in this section.

The notion of tomographic map $U$ has appeared in the literature under different names, for instance in the work by D'Ariano \textit{et al} is called a \textit{quorum} \cite{Da03}.


\subsubsection{Tomographic pairs: the reconstruction map}\label{sec:reconstruction}

A tomographic theory should allow for the reconstruction of the state of the system from its samples.  This will be achieved by introducing a map $D$ that is `dual' to the tomographic map $U$.
Thus, given a tomographic map $U$ we would require the existence of another map $D\colon \MM\rightarrow \mathcal{A}'$ 
which reconstructs the state $\rho$ from the sampling functions $F_\rho$, i.e., such that it will satisfy the reconstruction property:
\begin{equation}\label{reconstruction_state_D}
\rho = \int_{\MM} F_\rho (x) D(x) \diff \mu (x) \, ,
\end{equation}
for any state $\rho$.  The family $\mathcal{D} = \{ D(x) \mid x \in \MM\}$ will be called a dual tomographic set.

These considerations will lead us to the following definition:

\begin{definition}\label{biortho}  Let $\mathcal{A}$ be a  $C^*$-algebra.  A map $D \colon \MM \to \mathcal{A}'$ will be called a dual tomographic map of the tomographic map $U \colon \MM \to \mathcal{A}$ if it satisfies:
\begin{enumerate}
\item The function $|| D(x) || = \sup_{|| a || = 1} | \langle D(x), a \rangle |$, belongs to $L^2(\MM, \mu)$.
\item The function $\kappa(y,x) = \langle D(x),U(y)\rangle$, $x,y\in\MM$,
satisfies the reproducing condition:
\begin{equation}\label{eq:reproducing}
\Phi (y) = \hspace{-.26cm}\int\limits_{\hspace{.3cm}\MM}\hspace{-0.1cm}\kappa (y,x)\Phi(x)\diff\mu(x),
\end{equation}
for any $\Phi$ in the range of the sampling map $\mathcal{F}_U$.
\end{enumerate}
Under these conditions we will call the pair of maps $U$, $D$ a tomographic pair, the set $D(x)$ a \textit{dual tomographic set} to $U(x)$ and $\kappa$ will be called the kernel of the tomographic pair.  
\end{definition}

The reproducing equation (\ref{eq:reproducing}) can be understood  in the sense of distributions and the function $\kappa$ can be thought, for instance, to be the Dirac's delta distribution.  In this situation the maps $U,D$ are also said to be biorthogonal.    It is also common to call the functions $F_\rho(x)$ the ``tomographic symbols'' of the state $\rho$, see for instance \cite{As15_2}.

Note that in parallel with the definition of a tomographic set, the first condition in the definition of a dual tomographic set implies that the function $F_a(x) = \langle D(x), a \rangle$ is square integrable.  In fact:
$$
\int_{\MM} |F_a(x)|^2 \, \diff \mu (x) \leq \int_{\MM} ||D(x)||^2  ||a||^2 \diff \mu (x) = || D ||^2_{L^2(\MM)} ||a ||^2 \, .
$$

As in the case of the tomographic map $U$, there is a natural map $\mathcal{F}_D \colon L^2 (\MM, \mu) \to \mathcal{A}'$, defined as:
\begin{equation}\label{eq:reconstruction}
\mathcal{F}_D(\Phi ) = \hspace{-.26cm}\int\limits_{\hspace{.3cm}\MM}\hspace{-0.1cm}\Phi(x)D(x)\diff\mu(x) \, , \qquad \Phi \in L^2(\MM, \mu) \, ,
\end{equation}
Note that the element $\mathcal{F}_D(\Phi )$ is a continuous linear map on $\mathcal{A}$ because:
\begin{eqnarray}\label{eq:continuous}
|\langle \mathcal{F}_D(\Phi ) , a \rangle | &\leq&  \int_{\MM} |\Phi(x)| | F_a(x) |\diff\mu(x) \\ &\leq&  \int_{\MM} |\Phi(x)| \, || D(x) ||\, || a|| \diff\mu(x) \leq    || \Phi ||_{L^2(\MM)} || D ||_{L^2(\MM)} || a || \, , \nonumber
\end{eqnarray}
where the Cauchy-Schwarz inequality has been used in the r.h.s. of the previous equation.

We would like $\mathcal{F}_D$ to be a left inverse of $\mathcal{F}_U$, in which case we will properly call $\mathcal{F}_D$ a reconstruction map as it would allow to reconstruct the state $\rho$ from its sampling function $F_\rho$ by using Eq. (\ref{reconstruction_state_D}).


\subsubsection{Normalization}\label{sec:normalization}
 It is noticeable that the function $F_\rho (x) \langle D(x), a \rangle $ is integrable for all $a \in \mathcal{A}$, in fact:
 $$
 \int_{\mathcal{M}} | F_\rho (x) \langle D(x), a \rangle | \diff \mu (x) \leq || F_\rho ||_{L^2} || D(\cdot) ||_{L^2} || a || \, ,
 $$
 hence we may assume the convenient normalization condition:
 \begin{equation}\label{eq:normalization}
  \int_{\mathcal{M}}  F_\rho (x) \langle D(x), \mathbbm{1} \rangle\,  \diff \mu (x) = 1 \, .
 \end{equation}
 
 If condition (\ref{eq:normalization}) is satisfied we will say that $\widetilde{U}$, $\widetilde{D}$ is a normalized tomographic pair.  In what follows we will always assume this to be the case.


\subsubsection{The reconstruction theorem}

With the notations and assumptions introduced in the previous sections, we have the following theorem:

\begin{theorem}\label{GFT}
Let $U$, $D$ be a normalized tomographic pair on the $C^*$--algebra $\mathcal{A}$. Then the reconstruction map $\mathcal{F}_D  \colon L^2(\MM,\mu)\rightarrow\mathcal{A}^\prime$ is a left-inverse of the sampling map $\mathcal{F}_U \colon \mathcal{S}(\mathcal{A}) \rightarrow L^2(\MM,\mu)$.  In particular the reconstruction equation (\ref{reconstruction_state_D}) holds.
\end{theorem}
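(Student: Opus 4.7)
The plan is to chase the definitions, invoke the reproducing identity once, and close with the separation property. First I would fix $\rho\in\mathcal{S}(\mathcal{A})$ and set $\tilde\rho:=\mathcal{F}_D(\mathcal{F}_U(\rho))=\int_{\MM} F_\rho(x)\,D(x)\,\diff\mu(x)$. By the Cauchy--Schwarz estimate (\ref{eq:continuous}), $\tilde\rho$ is a well-defined element of $\mathcal{A}'$, and the normalization condition (\ref{eq:normalization}) gives $\tilde\rho(\mathbbm{1})=1$ immediately. So the question is really whether $\tilde\rho$ agrees with $\rho$ as functionals on $\mathcal{A}$.

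Next, I would show that the sampling function of $\tilde\rho$ coincides with that of $\rho$. Pairing with $U(y)$ and exchanging the order of integration (justified by the integrability bound that underlies (\ref{eq:continuous})),
\begin{equation*}
F_{\tilde\rho}(y) \;=\; \langle\tilde\rho,U(y)\rangle \;=\; \int_{\MM} F_\rho(x)\,\langle D(x),U(y)\rangle\,\diff\mu(x) \;=\; \int_{\MM} \kappa(y,x)\,F_\rho(x)\,\diff\mu(x).
\end{equation*}
Since $F_\rho$ lies in the range of $\mathcal{F}_U$ by construction, the reproducing property (\ref{eq:reproducing}) of Definition \ref{biortho} collapses the right-hand side to $F_\rho(y)$. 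Thus $\mathcal{F}_U(\tilde\rho)=\mathcal{F}_U(\rho)$.

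Finally I would invoke the separation property of the tomographic map $U$: any two functionals with identical sampling functions must coincide. Combined with the previous step this gives $\tilde\rho=\rho$, which is precisely the reconstruction equation (\ref{reconstruction_state_D}). The subtle point, and the main obstacle, is that separation is phrased in Definition 3.1 for pairs of \emph{states}, so to close the argument cleanly one must either (i) first verify that $\tilde\rho$ is itself a state, i.e.\ that $\tilde\rho(a^\ast a)=\int F_\rho(x)\langle D(x),a^\ast a\rangle\diff\mu(x)\geq0$, which has no manifest sign from the formula and would need an extra input (e.g.\ the image of $\mathcal{F}_D\circ\mathcal{F}_U$ staying inside the weak$^\ast$-compact convex $\mathcal{S}(\mathcal{A})$, checked on a suitable spanning family), or (ii) read the separation hypothesis as the effective injectivity of $\mathcal{F}_U$ on the affine span of states in $\mathcal{A}'$, which is the form the argument actually requires. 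Either reading closes the proof, but this positivity/injectivity bookkeeping is where the real work lies.
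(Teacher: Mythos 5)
Your proof follows essentially the same route as the paper's: define $\tilde\rho=\mathcal{F}_D(F_\rho)$, check continuity via (\ref{eq:continuous}) and normalization via (\ref{eq:normalization}), use the reproducing kernel identity (\ref{eq:reproducing}) to show $\langle\tilde\rho,U(y)\rangle=F_\rho(y)=\langle\rho,U(y)\rangle$, and conclude by separation. Your closing caveat is well taken and, if anything, more careful than the paper itself, which applies the separation property to $\tilde\rho$ without first verifying that $\tilde\rho$ is a state; reading separation as injectivity of $\mathcal{F}_U$ on the relevant subset of $\mathcal{A}'$, as you suggest in option (ii), is the cleanest way to make the argument airtight.
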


\begin{proof}
Let $F_\rho$ be the sampling function associated with the state $\rho$ by the sampling map $\mathcal{F}_U$, i.e., $\mathcal{F}_U(\rho ) = F_\rho$, then, we will show that $\mathcal{F}_D (F_\rho)$ is equal to $\rho$. First, we observe that the functional $\tilde{\rho} = \mathcal{F}_D (F_\rho)$ defined by:
\begin{equation*}
\tilde{\rho}(a)=\hspace{-.26cm}\int\limits_{\hspace{.3cm}\MM}\hspace{-0.1cm} F_\rho(x)\langle D(x),a\rangle\diff\mu(x),\qquad a\in\mathcal{A}\, ,
\end{equation*}
is continuous because of Eq. (\ref{eq:continuous}). 

Moreover, because the tomographic pair is normalized we have:
$$
\tilde{\rho} (\mathbbm{1})= \hspace{-.26cm}\int\limits_{\hspace{.3cm}\MM}\hspace{-0.1cm} F_\rho(x)\langle D(x),\mathbbm{1} \rangle\diff\mu(x)   = 1 \, .
$$

To show that $\rho=\tilde{\rho}$, we observe that:
\begin{eqnarray*}
\langle\tilde{\rho},U(y)\rangle &=& \hspace{-.26cm}\int\limits_{\hspace{.3cm}\MM}\hspace{-0.1cm} F_\rho(x)\langle D(x),U(y)\rangle\diff\mu( x ) \\ &=&   \hspace{-.26cm}\int\limits_{\hspace{.3cm}\MM}\hspace{-0.1cm} F_\rho(x) \kappa (y,x) \diff\mu(x) = F_\rho(y)=\langle\rho,U(y)\rangle \, ,
\end{eqnarray*}
but as $U$ separates states, then $\tilde{\rho} = \rho$.
\end{proof}

The formalism involving a tomographic map $U$ and a tomographic dual map $D$ has been widely used for applications in different settings and it was introduced by G. Marmo and V. Man'ko under the name of the ``quantizer-dequantizer'' formalism (see for instance \cite{Ma02}, \cite{Ci17}).


\subsubsection{Positivity: a simple instance of tomographic pairs}

Given a tomographic set $U$ it is not obvious how to construct a dual tomographic set $D$ for it.   The use of an auxilary state will help to do it.  Let $\rho_0$ a fixed fiducial state on $\mathcal{A}$.  Consider a map $U' \colon \MM \to \mathcal{A}$, then the map $D \colon \MM \to \mathcal{A}'$ defined by:
$$
\langle D(x), a \rangle = \langle \rho_0, U'(x)a \rangle \, ,  \qquad a \in \mathcal{A} \, , 
$$
could be used to try to construct dual tomographic sets.   In particular if the map $U'(x) = U(x)^*$ were used, then the kernel of the tomographic pair would be given as:
$$
\kappa (y,x) = \langle D(x), U(y) \rangle =\langle  \rho_0, U(x)^* U(y) \rangle \, , \qquad x,y \in \MM \, .
$$

In Sect. \ref{sec:dynamics} it will be shown that when the auxiliary space is a group $G$ and the tomographic map $U$ is a square integrable representation this construction actually provides a tomographic pair (with the appropriate normalization).

There is a natural notion of positivity associated with the construction of the sampling function and the kernel in this instance and that will be exploited later on.   We will say that a function $F \colon \MM\times\MM\rightarrow\mathbb{C}$ is of positive type, or positive semidefinite, if for all $N\in\mathbb{N}$, $\xi_i\in\mathbb{C}$ and any $x_i\in\MM$, $i=1,\ldots,N$, it satisfies that
\begin{equation}\label{positive_function}
\sum_{i,j=1}^N\bar{\xi}_i\xi_j F(x_i,x_j)\geq 0.
\end{equation}
Then we have the following lemma:

\begin{lemma}\label{tom_positive}
Given a state $\rho$ and a tomographic set $U\colon \MM\rightarrow\mathcal{A}$ on a $C^*$--algebra $\mathcal{A}$, the two-points sampling function $F_\rho(x,y)=\langle\rho,U(x)^*U(y)\rangle$, $x,y\in\MM$, is of positive type.
\end{lemma}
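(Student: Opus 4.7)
The plan is to reduce the double sum in the definition of a positive-type kernel to a single expression of the form $\rho(a^*a)$ and then invoke positivity of the state $\rho$, since by definition states on a $C^*$-algebra satisfy $\rho(a^*a)\geq 0$ for every $a\in\mathcal{A}$.

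Concretely, given $N\in\mathbb{N}$, scalars $\xi_1,\dots,\xi_N\in\mathbb{C}$ and points $x_1,\dots,x_N\in\mathcal{M}$, I would form the element
\[
a \,=\, \sum_{i=1}^N \xi_i\, U(x_i) \,\in\, \mathcal{A}.
\]
Taking the adjoint, using the antilinearity of $*$, gives $a^* = \sum_i \bar\xi_i\, U(x_i)^*$, hence
\[
a^* a \,=\, \sum_{i,j=1}^N \bar\xi_i\, \xi_j\, U(x_i)^*\, U(x_j).
\]
Now I would apply $\rho$ to both sides: by linearity of $\rho$ we obtain
\[
\rho(a^*a) \,=\, \sum_{i,j=1}^N \bar\xi_i\, \xi_j\, \rho\!\bigl(U(x_i)^*\, U(x_j)\bigr) \,=\, \sum_{i,j=1}^N \bar\xi_i\, \xi_j\, F_\rho(x_i,x_j),
\]
where the last equality is just the definition of the two-point sampling function. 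Since $\rho$ is a state, $\rho(a^*a)\geq 0$, which is exactly the positive-type inequality \eqref{positive_function}.

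There is essentially no obstacle here: the statement is a direct manifestation of the defining positivity property of states, once one recognizes that the double sum is organized as $\rho$ evaluated on a square $a^*a$. The only thing to make sure of is that the sum $a=\sum_i \xi_i U(x_i)$ is a well-defined element of $\mathcal{A}$, which is immediate because $\mathcal{M}$ enters only through finitely many sampled points $x_i$, so no measurability or integrability question arises at this step — the assumption that $F_\rho\in L^2(\mathcal{M},\mu)$ is not even needed for this pointwise positivity claim.
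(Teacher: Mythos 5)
Your proof is correct and follows exactly the same route as the paper's: both recognize the double sum as $\rho$ evaluated on the square $\left(\sum_i \xi_i U(x_i)\right)^*\left(\sum_j \xi_j U(x_j)\right)$ and conclude by positivity of the state. Your closing remark that the $L^2$ condition on $F_\rho$ plays no role here is a accurate observation, though the paper does not comment on it.
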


\begin{proof} It follows after a straightforward computation:
\begin{multline*}
\sum_{i,j=1}^N\bar{\xi}_i\xi_jF_\rho(x_i,x_j)=\hspace{-0.1cm}\sum_{i,j=1}^N\bar{\xi}_i\xi_j\langle\rho,U(x_i)^*U(x_j)\rangle\\
=\langle\rho,\hspace{-0.1cm}\sum_{i,j=1}^N\bar{\xi}_i\xi_jU(x_i)^*U(x_j)\rangle=\langle\rho,\left(\sum_{i=1}^N\xi_iU(x_i)\right)^{\hspace{-0.1cm}*}\hspace{-0.1cm}\left(\sum_{j=1}^N\xi_jU(x_j)\right)\rangle\geq 0.
\end{multline*}
\end{proof}

We will use this notion in establishing the properties of the second ingredient of a tomographic picture of states on a $C^*$-algebra.  Moreover, later on, we will take advantage of this property when dealing with group quantum tomography. 

Summarizing, we may say that the theory sketched in this section consists basically on reconstructing the positive form $\rho$ from a set of samples $F_\rho(x)$ by means of the generalized Fourier transform provided by a tomographic pair, Eq. (\ref{reconstruction_state_D}),  Fig.~\ref{Samp_diagr}.

As it was previously indicated there is a mature theory of frames on Hilbert $C^*$-modules and $C^*$-algebras \cite{Fr02} that could be used for such purpose, however, we will restrict ourselves to the simpler picture described in this section as it fits naturally into the many applications of quantum tomography commented in the introduction. 
\begin{figure}[h]
\centering
\includegraphics{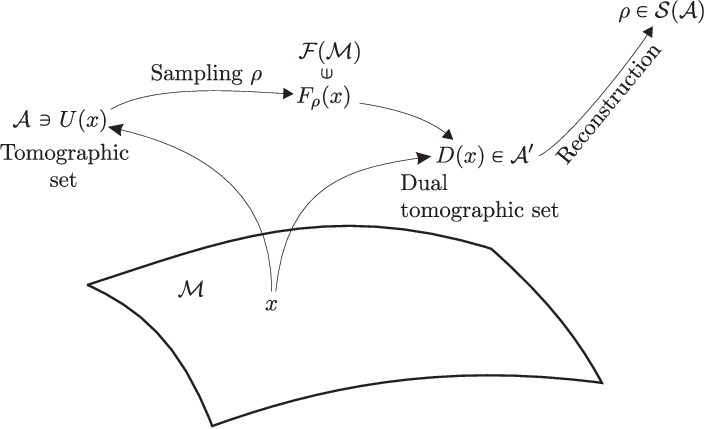}
\caption{Sampling diagram.}
\label{Samp_diagr}
\end{figure}


\subsection{A Positive Transform}\label{ps_GPT}\label{sec:positive}

The problem we are facing now is how to actually `measure' the samples $F_\rho(x)$.  Notice that in general it is not possible to determine the sampling function $F_\rho$ by direct measurements because it is a complex function and the quantities that can be measured are real probability distributions.  For that reason, we need a transformation that will allow us to obtain the sampling function $F_\rho$ from a marginal probability distribution $\mathcal{W}_\rho$.
Thus, the second tool in our tomographic programme for states on $C^*$-algebras is the choice of a \textit{Positive Transform} $\mathcal{R}$ that relates the sampling function $F_\rho$ to probability distributions.

One way to do that is by considering a second auxiliary space $\NN$ that parametrizes a family of distributions $R(y)$ on $\MM$, $y\in\NN$, and define the transform of a function $F$ on $\MM$ as the function $\mathcal{R}(F) \colon \NN \to \mathbb{C}$  given by:
\begin{equation}\label{eq:RF}
\mathcal{R}(F)(y)=\langle R(y),F\rangle \, ,
\end{equation}
where $\langle\cdot,\cdot\rangle$ denotes the natural pairing between of the distribution $R(y)$ and the function $F$.  Again, if $\NN$ is a measure space with measure $\sigma$, we will assume that $R$ is integrable in the weak sense, that is, the function $\mathcal{R}(F)=\langle R(\cdot),F\rangle$ is $\sigma$--integrable for any $F$ integrable.   If we assume in addition that the function $\langle R(y),F\rangle$ is in $L^\infty(\NN, \sigma)$, then we get:
$$
|| \mathcal{R}(F) ||_{L^1(\NN)} \leq \int_{\NN} |\langle R(y),F\rangle | \diff \sigma (y) \leq K \| F \|_\infty \, ,
$$
and $\mathcal{R} \colon C_b(\MM ) \to L^1(\NN , \sigma)$ is continuous.

This is what is done for the Classical Radon Transform.  The $C^*$-algebra would be, for instance, the space of continuous functions on a compact subset $\Omega \in \mathbb{R}^2$.  Then
$\NN$ will be the set of lines $l$ on $\mathbb{R}^2$ and $R$ will be the map $R \colon \NN\rightarrow C(\Omega)^\prime$, 
$R \colon l \rightsquigarrow\delta_l$, where $\delta_l$ is the delta distribution along $l$,
and the tomogram will be (recall  Eq. \eqref{Radon_Transform}):
\begin{equation}\label{eq:Wrho_again}
\mathcal{W}_\rho(l)=\langle R(l), F_\rho \rangle = \hspace{-0.1cm}\int\limits_{\hspace{.3cm}l}\hspace{-0.1cm}F_\rho \big(q(s),p(s)\big)\diff s \, ,
\end{equation}
with $F_\rho (q,p )$ a function describing a state $\rho$ on $C(\Omega)$, i.e., a Radon measure.

Thus, in general, we would like to identify the image $\mathcal{R}(F)$ of the function $F$ as a density probability or, as the Radon-Nikodym derivative of an absolutely continuous measure $\sigma_F$, with respect to the measure $\sigma$, that is, we would like to determine the class of functions $F$ whose transform will be positive.  
For instance, if $\mathcal{N} = \mathcal{M} = \mathbb{R}$, because of Bochner's theorem, the inverse Fourier transform of a Borel probability measure $\sigma$ is a positive definite function $F (x)  = \frac{1}{2\pi} \int  e^{ixy} \diff \sigma (y)$ on $\mathbb{R} = \MM$.  
We have already realized that the function $F_\rho(x,y) = \langle \rho, U(x)^* U(y) \rangle$ associated to a tomographic set $U(x)$ is positive definite (Lemma\,\ref{tom_positive}) and we will use in in the coming section to construct a positive transform.

Then we will say that the map $\mathcal{R}\colon F\rightsquigarrow\mathcal{R}(F)$ is a \textit{Positive Transform} if it maps functions of the form $F_\rho$ on $\MM$ into non-negative functions on $\NN$, i.e., if $F\colon \MM \rightarrow\mathbb{C}$ is in the range of $\mathcal{F}_U$, then:
\begin{equation}
\mathcal{R}(F)(y)=\langle R(y),F\rangle\geq 0,\qquad\forall y\in\NN.
\end{equation}

We will say that $\mathcal{R}$ is non-degenerate if it has a left inverse.
Under this rather long list of conditions, we conclude by noticing that if $\rho$ is a state, chosings the appropriate normalizations, $\mathcal{R}(F_\rho)$ will be a normalized non-negative function on $\NN$, that is:
\begin{equation}
\hspace{-.26cm}\int\limits_{\hspace{.3cm}\NN}\hspace{-0.1cm}\mathcal{R}(F_\rho)(y)\diff\sigma(y)=1.
\end{equation}
Moreover, if we know $\mathcal{R}(F_\rho)$, we could obtain $F_\rho$ by applying a left-inverse map $\mathcal{R}^{-1}$, i.e., $F_\rho=\mathcal{R}^{-1}\circ\mathcal{R}(F_\rho)$. The function $\mathcal{R}(F_\rho)$ will be called the \textit{tomogram} of the state $\rho$ and we will denote it by $\mathcal{W}_\rho$ (see Fig.~\ref{Pos_diagr}):
\begin{equation}\label{tomogram_conclusion}
\mathcal{W}_\rho(y)=\langle R(y),F_\rho\rangle.
\end{equation}
Notice again that the tomogram $\mathcal{W}_\rho(y)$ satisfies that it is a probability density related with the state $\rho$:
\begin{equation}
\mathcal{W}_\rho\geq 0,\qquad\hspace{0cm}\int\limits_{\hspace{.3cm}\NN}\hspace{-0.1cm}\mathcal{W}_\rho(y)\diff\sigma(y)=1.
\end{equation}

\begin{figure}[h]
\centering
\includegraphics{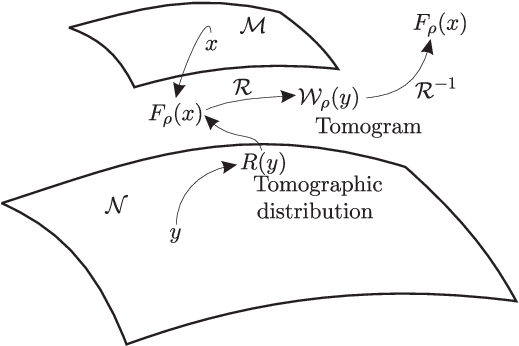}
\caption{Positive Transform diagram.}
\label{Pos_diagr}
\end{figure}


\section{Quantum Tomography and groups}\label{section_particular group}

Given a state $\rho$ on a $C^*$-algebra $\mathcal{A}$, it is not apparent how to construct the two ingredients, a sampling theory and a positive transform, needed for its tomographic description.   There is, however, a situation when both ingredients are interlocked and arise in a natural way.  This happens when there is a group $G$ represented on the system, sometimes called a dynamical system.  
Such situation occurs, for example, in Spin Tomography \cite{Ma97} with the group $SU(N)$ (see Sect. \ref{section_Spin_Tomography}) or, in standard Quantum Tomography, with the Weyl-Heisenberg group (Sect. \ref{resume_heis}).

In this section, a tomographic theory, following the steps laid in Sect. \ref{sec:tomographicC*}, for a class of normal states in dynamical systems on $C^*$-algebras will be described.

\subsection{Equivariant tomographic theories on $C^*$--algebras}\label{section_equivariant}

In many situations of interest, either describing symmetries of the system or the dynamical background of the theory, there is a group present in the system whose states we want to describe tomographically.
We will assume that there is a Lie group $G$ acting on the $C^*$--algebra $\mathcal{A}$, that is, there is a strongly continuous map $T\colon G\rightarrow\textrm{Aut}(\mathcal{A})$, such that:
\begin{equation}
T_e=\mathbbm{1},\qquad T_{g_1}T_{g_2}=T_{g_1g_2},\qquad\forall g_1,g_2\in G.
\end{equation}
The triple $(G,T,\mathcal{A})$ is also called a dynamical system (there is a canonical construction of a $C^*$-algebra, the crossed product algebra, associated with this structure that will not be needed in what follows).

If we have a tomographic theory for the states of $\mathcal{A}$, that is a tomographic pair $U\colon \MM \to \mathcal{A}$, $D\colon \MM \to \mathcal{A}'$ and a positive transform $R \colon \NN \to \mathcal{D}(\MM)$, then we will assume, in order to have a consistent theory, that the group $G$ also acts on the auxiliary spaces $\MM$ and $\NN$.   Such actions will be simply denoted by $x\rightsquigarrow g\cdot x$ and $y\rightsquigarrow g\cdot y$, $g\in G$, $x\in\MM$ and $y\in\NN$, respectively. Then it will be assumed that both maps $U$, $D$ are equivariant, that is:
\begin{equation}\label{equivariant_U}
T_g U(x) = U(g\cdot x)\, , \qquad\forall x\in\MM\quad\mbox{and}\quad\forall g\in G \, ,
\end{equation}
and
 \begin{equation}\label{equivariant_D}
T_{g^{-1}}^* D(x) = D(g\cdot x)\, ,\qquad\forall x\in\MM \quad\mbox{and}\quad\forall g\in G \, ,
\end{equation}
with $T_g^*$ the natural action of $G$ on the space of states, that is $\langle T_g^* \rho, a \rangle = \langle \rho, T_g a\rangle$, $\forall a \in \mathcal{A}, \rho \in \mathcal{S}(\mathcal{A})$.

Under these conditions, it is easy to conclude that the sampling function $F_\rho$ verifies:
\begin{equation}\label{property_equivariant_rho}
F_\rho(g\cdot x)= F_{g^*\rho}(x)\, , \qquad \forall x \in \MM \, , \quad g\in G\, ,
\end{equation}
since
\begin{equation}
F_\rho(g\cdot x) = \langle\rho,U(g\cdot x)\rangle = \langle T_g^*\rho, U(x)\rangle = F_{g^*\rho}(x),
\end{equation}
where $g^*\rho=T_g^*\rho$. Notice that if $\rho$ is an invariant state, $T_g^*\rho=\rho$, then, the corresponding sampling function will be invariant:
\begin{equation}
F_\rho(g\cdot x)=  F_\rho(x),\qquad\forall g\in G,\quad x\in\MM.
\end{equation}

As indicated before, we will also consider that the group $G$ acts on the auxiliary space $\NN$ used to define the Positive Transform. If we assume that the map $R:\NN\rightarrow\mathcal{D}(\MM)$ is equivariant, i.e.:
\begin{equation}
g_*R(y) = R(g^{-1}\cdot y) \, ,
\end{equation}
where $g_*$ indicates now the natural action induced on the space of distributions $\mathcal{D}(\MM)\subset\mathcal{F}(\MM)^\prime$ given by the action of $G$ on $\MM$, more explicitly:
\begin{equation}
\langle g_*R(y),F\rangle=\langle R(y),g^*F\rangle\quad\mbox{and}\quad g^*F(x)=F(g\cdot x).
\end{equation}
If $R$ is actually a positive transform and $\mathcal{W}_\rho$ the tomogram of the state $\rho$, we will get:
\begin{eqnarray}\label{eq:W_equivariance}
\mathcal{W}_\rho(g^{-1}\cdot y) &=& \langle R(g^{-1}\cdot y),F_\rho\rangle = \langle g_* R(y),F_\rho\rangle
\\ &=& \langle R(y),g^* F_\rho\rangle = \langle R(y),F_{g^*\rho}\rangle=\mathcal{W}_{g^*\rho}(y) \, , \nonumber
\end{eqnarray}
and, in the particular instance that  $\rho$ is an invariant state, we conclude by observing that its tomogram will be invariant too:
\begin{equation}
\mathcal{W}_\rho(g\cdot y)=\mathcal{W}_\rho(y),\qquad\forall g\in G \, .
\end{equation}


\subsection{Quantum tomography and dynamical systems}\label{sec:dynamics}

Thus, if we want to construct a tomographic theory using  a group $G$, it is natural to consider auxiliary spaces $\MM$ and $\NN$ which are homogeneous spaces for the group, in particular we can consider the group itself.  Hence, we will concentrate in the particular instance that the auxiliary spaces $\MM$ and $\NN$ are a Lie group $G$.  

Moreover we will assume that the group $G$ is represented by inner autormorphisms of the $C^*$-algebra $\mathcal{A}$, that is, we will assume that the tomographic map $U \colon G\rightarrow\mathcal{A}$ is provided by a  strongly continuous unitary representation $U$ of $G$ on $\mathcal{A}$, thus for all $g,g_1,g_2 \in G$:
\begin{equation}
U(g_1g_2)=U(g_1)U(g_2)\, ,\quad U(e)=\mathbbm{1}\,, \quad \text{and}\quad U(g)^* = U(g)^{-1}=U\left(g^{-1}\right) \, ,
\end{equation}
and the action of $G$ on $\mathcal{A}$ is given by the group homomorphism $T_g:G\rightarrow\textrm{Aut}(\mathcal{A})$:
\begin{equation}
T_g(a)=U(g) a\,U(g)^*,
\end{equation}
with $a\in\mathcal{A}$ and $g\in G$.  We see immediately that
\begin{equation}
U(ghg^{-1})=U(g) U(h)\hspace{0.02cm}U(g)^* =T_g(U(h)) ,\qquad\forall g,h\in G,
\end{equation}
which is the equivariant property \eqref{equivariant_U} for the conjugation action of $G$ on itself.

The sampling function corresponding to the state $\rho$ is given by:
\begin{equation}
F_\rho(g)=\langle\rho,U(g)\rangle \, ,
\end{equation}
and the function $F_\rho \colon G\rightarrow\mathbb{C}$ is of positive type because the map  $F_\rho(g,h) = \langle\rho,U(g)^* U(h)\rangle = F_\rho(g^{-1}h)$ is of positive type, Lemma \ref{tom_positive}, then:
\begin{equation}\label{positivity_functions_2}
\sum_{i,j=1}^N\bar{\xi}_i\xi_jF_\rho(g_i^{-1}g_j)\geq 0,
\end{equation}
for all $N\in\mathbb{N}$, $\xi_i\in\mathbb{C}$, $g_i\in G$ with $i=1,\ldots,N$. Moreover a simple computation shows that $F_\rho$ satisfies \eqref{property_equivariant_rho} too.

If the $C^*$-algebra $\mathcal{A}$ is the $C^*$-algebra of all bounded operators on a Hilbert space $\mathcal{B}(\mathcal{H})$, because of the one-to-one correspondence between states and density operators, the sampling function $F_\rho$ can be written as
\begin{equation}\label{smeared_character_chap_2}
F_\rho(g)=\Tr\!\big(\boldsymbol{\rho}\,U(g)\big),
\end{equation}
for some density operator $\boldsymbol{\rho}$,
then, since the character of a finite-dimensional representation  $U$ of the group $G$ is defined as:
\begin{equation}\label{character_group}
\chi(g)=\Tr\!\big(U(g)\big)\, ,
\end{equation}
we will call the sampling function $F_\rho$ the \textit{smeared character} of the representation $U$ with respect to the state $\rho$ and, consequently, we will denote it in what follows by 
$$
\chi_\rho(g)= F_\rho(g) = \langle \rho, U(g) \rangle \, .
$$   
Note that if $\mathcal{H}$ is finite-dimensional with dimension $n$ and the state $\boldsymbol{\rho}$ is the normalized unit $\frac{1}{n}\mathbbm{1}$, its smeared character is just the (normalized) standard character of the group representation \eqref{character_group}.

Given a state $\rho$, the GNS construction described before, Sect. \ref{ps_GNS}, provides, a representation $\pi_\rho$ of $\mathcal{A}$ in $\mathcal{H}_\rho$, hence we get a strongly continuous unitary representation $U_\rho$ of $G$ given by:
\begin{equation}\label{U_rho_GNS}
U_\rho(g)=\pi_\rho\big(U(g)\big).
\end{equation}
Notice that $U_\rho(g)$ is actually a unitary operator on the Hilbert space $\mathcal{H}_\rho$ because, recall Eq. (\ref{eq:bracket_rho}):
\begin{equation}
\langle U_\rho(g)[a],U_\rho(g)[b]\rangle_\rho=\rho\big((U(g)a)^*(U(g)b)\big)=\rho(a^*b)=\langle[a],[b]\rangle_\rho,
\end{equation}
for all $g\in G$, $[a],[b]\in\mathcal{H}_\rho$.

Now, because of (\ref{eq:rhoa}), the sampling function of a representation $U$ corresponding to a state $\rho$ can be written as
\begin{equation}
\chi_\rho(g)=\langle 0|U_\rho(g)|0\rangle,
\end{equation}
where $|0\rangle$ is the fundamental vector of $\mathcal{H}_\rho$.  Moreover, fixed the state $\rho$, the smeared character of $U$ with respect to any other state $\sigma$ in the folium of $\rho$, Eq.\,\eqref{folium_rho}, will be given by
\begin{equation}\label{character_folium}
\chi_\sigma(g)=\langle\sigma,U(g)\rangle=\Tr\!\left(\boldsymbol{\sigma}\pi_\rho\big(U(g)\big)\vphantom{\text{\larger[3]$\boldsymbol{\mathbbm{1}}$}}\!\right)=\Tr\!\big(\boldsymbol{\sigma}U_\rho(g)\big).
\end{equation}

If $G$ is represented in the $C^*$-algebra $\mathcal{A}$, then the map $U \colon G \to \mathcal{A}$ extends to a $C^*$-algebra homomorphism, denoted as $\pi_U \colon C^*(G) \to \mathcal{A}$ between the envelopping $C^*$ algebra of the group $G$ and $\mathcal{A}$.    Given a state $\rho$ on $\mathcal{A}$, it induces a state $\rho_U$ on $C^*(G)$ as $\rho_U(a) = \rho (\pi_U(a))$ with $a \in C^*(G)$.    States are characterized by means of their smeared characters (or sampling functions):

\begin{theorem}\label{prop_positive}
Let $\rho\,\colon \mathcal{A} \to \mathbb{C}$ be a continuous linear function and consider the smeared character $\chi_\rho (g)=\langle\rho,U(g)\rangle$ where $U$ is a strongly continuous unitary representation of the Lie group $G$ on $\mathcal{A}$.  Then, $\rho$ is a state iff $\chi_\rho = F_\rho$ is a positive type function on $G$ and $\chi_\rho (e) = 1$.
\end{theorem}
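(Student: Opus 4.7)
My plan is to prove the two implications separately. For the forward direction $(\Rightarrow)$, assuming $\rho$ is a state, normalization is immediate from $\chi_\rho(e) = \rho(U(e)) = \rho(\mathbbm{1}) = 1$. To establish that $\chi_\rho$ is of positive type, I would apply Lemma~\ref{tom_positive} to the tomographic map $U\colon G \to \mathcal{A}$: it yields positivity of the two-point function $F_\rho(g,h) = \langle\rho, U(g)^*U(h)\rangle$. Since $U$ is a unitary representation, $U(g)^*U(h) = U(g^{-1}h)$, which reduces this two-point condition to exactly the group positivity condition \eqref{positivity_functions_2} for $\chi_\rho(g^{-1}h)$.

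For the reverse direction $(\Leftarrow)$, suppose $\rho$ is a continuous linear functional with $\chi_\rho$ of positive type and $\chi_\rho(e) = 1$. Normalization again gives $\rho(\mathbbm{1}) = \chi_\rho(e) = 1$. The core task is to verify positivity $\rho(a^*a) \geq 0$ for every $a \in \mathcal{A}$. The first step is a direct calculation: for any finite linear combination $a = \sum_{i=1}^N \xi_i U(g_i)$ with $\xi_i \in \mathbb{C}$ and $g_i \in G$, expanding $a^*a = \sum_{i,j} \bar{\xi}_i \xi_j U(g_i^{-1}g_j)$ and applying $\rho$ gives
\[
\rho(a^*a) \;=\; \sum_{i,j=1}^N \bar{\xi}_i \xi_j\, \chi_\rho(g_i^{-1}g_j) \;\geq\; 0,
\]
directly from the positive-type hypothesis \eqref{positivity_functions_2}.

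The main obstacle is the passage from this finite-combination positivity to positivity on all of $\mathcal{A}$. My plan is to combine continuity of $\rho$ with the fact that elements $a^*a$ with $a$ a finite linear combination of $U(g)$'s are norm-dense in the positive cone of the $C^*$-subalgebra $\overline{\mathcal{A}_U}$ generated by $U(G)$, so continuity propagates the inequality to $\overline{\mathcal{A}_U}$. The cleanest route, and the one I would follow, is via the enveloping construction introduced just before the theorem: $\chi_\rho$ being normalized and positive type corresponds, by the classical bijection, to a state $\tilde{\rho}$ on the group $C^*$-algebra $C^*(G)$, given by $\tilde{\rho}(f) = \int_G f(g)\chi_\rho(g)\,\diff g$ for $f \in L^1(G)$; factoring through the $C^*$-homomorphism $\pi_U \colon C^*(G) \to \mathcal{A}$ one checks $\rho\circ\pi_U = \tilde{\rho}$, which yields positivity of $\rho$ on $\pi_U(C^*(G)) = \overline{\mathcal{A}_U}$, i.e.\ on the subalgebra generated by the representation, which is the natural domain on which the equivalence lives.
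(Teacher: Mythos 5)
Your proposal is correct, and the forward direction coincides with the paper's (normalization at $e$ plus Lemma~\ref{tom_positive} combined with $U(g)^*U(h)=U(g^{-1}h)$). For the converse, however, you take a genuinely more elementary route for the hard step: you first verify $\rho(a^*a)=\sum_{i,j}\bar{\xi}_i\xi_j\chi_\rho(g_i^{-1}g_j)\geq 0$ directly on finite linear combinations $a=\sum_i\xi_iU(g_i)$, and then propagate positivity to the closed subalgebra $\overline{\mathcal{A}_U}$ generated by $U(G)$ using norm-density of such $a^*a$ in its positive cone together with norm-continuity of $\rho$; this bypasses the machinery the paper invokes, namely Naimark's reconstruction theorem and the GNS representation of $C^*(G)$ attached to the induced functional $\rho_0=\rho\circ\pi_U$. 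Your alternative packaging through the classical bijection between normalized continuous positive-type functions and states of $C^*(G)$, followed by factoring through $\pi_U$, is essentially the paper's argument, so the two routes converge there. One point worth emphasizing: both arguments only deliver positivity of $\rho$ on $\pi_U(C^*(G))=\overline{\mathcal{A}_U}$, and the paper silently upgrades this to all of $\mathcal{A}$ by asserting $C^*(G)/\ker\pi_U\cong\mathcal{A}$, i.e.\ by implicitly assuming that $U(G)$ generates $\mathcal{A}$ as a $C^*$-algebra (without some such hypothesis the statement fails, e.g.\ for the trivial representation $U(g)=\mathbbm{1}$ on $\mathcal{B}(\mathcal{H})$). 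You are right to flag that the equivalence naturally lives on $\overline{\mathcal{A}_U}$; your elementary density argument makes this limitation visible, whereas the paper's quotient argument hides it, and that is arguably a gain in clarity rather than a gap in your proof.
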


\begin{proof} We have seen already, Eq. \eqref{positivity_functions_2}, that $\chi_\rho$ is of positive type if $\rho$ is a state and $\chi_\rho(e)=1$ because of the normalization of $\rho$. 

Conversely, if $\chi_\rho$ is a positive type function on $G$, because of Naimark reconstruction theorem \cite{Na64}, there exist a complex separable Hilbert space $\mathcal{H}$ supporting a strongly continuous unitary representation $V$ of $G$ and a vector $|0\rangle\in\mathcal{H}_\rho$ such that:
$$
\chi_\rho(g)=\langle 0|V_\rho(g)| 0\rangle.
$$
Actually such representation is obtained as the GNS representation of the $C^*$-algebra of the group $G$ determined by the state $\rho_0 \colon C^*(G) \to \mathbb{C}$ defined by $\rho_0(a) = \langle \rho, \pi_U(a) \rangle$.  Note that $\rho_0$ is a state because by hyphotesis:
$$
\rho_0(\mathbbm{1}) = \langle \rho, \pi_U(\mathbbm{1}) \rangle =  \langle \rho, \mathbbm{1}_\mathcal{A}\rangle =  \langle \rho, U(e)) \rangle  = \chi_\rho (e) = 1\, .
$$  
Moreover if $\chi_\rho$ is of positive type on $G$, then the natural extension of $\chi_\rho$ to $C^*(G)$ is positive. Then
$$
\rho_0(a^*a) = \langle \rho, \pi_U(a^*a) \rangle =   \chi_\rho (a^*a) \geq 0 \, .
$$ 
because $\chi_\rho$ is of positive type.  

Hence, the representation $\pi_{\rho_0} \colon C^*(G) \to \mathcal{B}(\mathcal{H}_{\rho_0})$ determines the representation $V$ of the group $G$ above.

Then because of Eq. \eqref{eq:rhoa} we conclude:
\begin{equation}\label{eq:rho0rho}
\langle \rho, \pi_U(a) \rangle = F_\rho (a) = \rho_0(a) = \langle 0 |\pi_{\rho_0}(a) | 0 \rangle \, ,
\end{equation}
Let $a \in C^*(G)$ be such that $a \in \ker \pi_U$, i.e., $\pi_U(a) = 0$, then because of \eqref{eq:rho0rho}, $\rho_0(a) = 0$. But $\ker \pi_U$ is a closed bilateral ideal and $C^*(G)/\ker \pi_U \cong \mathcal{A}$, then $\rho_0$ projects to a state on the quotient algebra  $C^*(G)/\textrm{ker}\, \pi_U$, but because of \eqref{eq:rho0rho} again, $\rho$ coincides with the projection of $\rho_0$ on $\mathcal{A}$ and $\rho$ is a state.
\end{proof}

Note that the state $\rho_0$ on $C^*(G)$ used in the previous proof is just the extension of the state $\rho$ on $\mathcal{A}$ to $C^*(G)$ by the homomorphism $\pi_U\colon C^*(G) \to \mathcal{A}$.  

In what follows we will assume that a fiducial state $\rho$ has been chosen and we will describe the tomographic theory of states $\boldsymbol{\sigma}$ on its folium. 


\subsection{The Quantum Radon Transform associated with a dynamical system}\label{section_tomograms_group}

Given the Lie group $G$ we can consider the space $\mathfrak{g}\times\mathbb{R}$, where $\mathfrak{g}$ denotes the Lie algebra of $G$, and the extended exponential map: $\exp\colon \mathfrak{g}\times\mathbb{R}\rightarrow G$, given by $\exp(\xi,t) = \exp(t\xi)$,  $t \in \mathbb{R}$, $\xi \in \mathfrak{g}$, where $\exp \colon \mathfrak{g}\rightarrow G$, is the ordinary exponential map.  In particular, if $G$ is a matrix Lie group, then,
\begin{equation}
\exp(t\xi)=\sum_{n=0}^\infty\frac{t^n}{n!}\xi^n \, .
\end{equation}
with $\xi$ a matrix in the matrix Lie algebra of $G$.

Given the representation $U$ of the group $G$ on the algebra $\mathcal{A}$ and the fiducial state $\rho$, we have the canonical unitary representation $U_\rho$ of $G$ on the Hilbert space $\mathcal{H}_\rho$ associated with the GNS construction, Sect. \ref{ps_GNS}. 
Also, given $\xi \in \mathfrak{g}$, we can consider the strongly continuous one-parameter group $U_t$ of unitary operators on $\mathcal{H}_\rho$ given by:
$$
U_t = U_\rho\big(\exp(t\xi)\big) = \pi_\rho(U(\exp(t\xi))\, ,
$$
with $t\in\mathbb{R}$.  Note that,
\begin{eqnarray*}
U_t^\dagger =  \pi_\rho(U(\exp(t\xi)^*) &=& \pi_\rho(U(\exp(t\xi)^{-1}) \\ &=& \pi_\rho(U(\exp(-t\xi))  = U_{-t} = U_t^{-1} \, .
\end{eqnarray*}

 Because of Stone's theorem \cite{St32}, there exists a densely defined self-adjoint operator $\boldsymbol{\xi}$ on $\mathcal{H}_\rho$ such that
\begin{equation}\label{selfadjoint_exp}
\e^{it\boldsymbol{\xi}}=U_\rho\big(\exp(t\xi)\big).
\end{equation}
Note that the element $\xi$ in the Lie algebra and the operator $\boldsymbol{\xi}$ have opposite symmetry because of the $i$ factor in the exponent in the r.h.s. of (\ref{selfadjoint_exp}), that is, if $G$ is a matrix Lie group, then $\xi\in\mathfrak{g}$ is a skew-Hermitian matrix while  the operator $\boldsymbol{\xi}$ is Hermitean.

Let us denote by $\Theta$ the canonical left-invariant Cartan $1$-form on $G$, i.e., the tautological $\mathfrak{g}$-valued 1-form given by $\Theta(\xi)=\xi$, where $\xi$ denotes both an element on the Lie algebra of $G$ and the corresponding left-invariant vector field on $G$.  

Let $\boldsymbol{\Theta}_\rho$ be the ``quantization'' of the Cartan $1$-form, i.e., $\boldsymbol{\Theta}_\rho$ is a left-invariant 1-form on $G$ with values in self-adjoint operators on $\mathcal{H}_\rho$ defined by:
$$
\boldsymbol{\Theta}_\rho (\xi )=\boldsymbol{\xi}\, ,\qquad  \forall\xi\in\mathfrak{g}\, . 
$$ 
In other words, $\boldsymbol{\Theta}_\rho$ maps the Lie algebra $\mathfrak{g}$  of the Lie group $G$ in the space of densely defined self-adjoint operators on $\mathcal{H}_\rho$, by assigning to each element $\xi \in \mathfrak{g}$ the generator $\boldsymbol{\xi}$ of the one-parameter group $U_t = \pi_\rho (U(\exp (t \xi))$ of unitary operators defined by $\exp t\xi$. 

We will often use the notation $\langle\boldsymbol{\Theta}_\rho,\xi\rangle=\boldsymbol{\xi}$ instead for the evaluation of the 1-form $\boldsymbol{\Theta}_\rho$ on the Lie algebra element $\xi$.  Thus if $E_a$, $a = 1, \ldots, r$, is a basis for the Lie algebra $\mathfrak{g}$, any element $\xi$ can be written as $\xi = \xi^a E_a$  (Einstein's summation convention understood). Then $\langle\boldsymbol{\Theta}_\rho,\xi\rangle= \boldsymbol{\Theta}_\rho (\xi )=\boldsymbol{\xi} = \xi^a \mathbf{E}_a$.  We will call $\boldsymbol{\Theta}_\rho$ the quantum Cartan 1-form associated with the representation $U$ and the state $\rho$ and in what follows we will omit the subindex $\rho$ if there is no risk of confussion.

It is easy to check that the operators $\boldsymbol{\xi}$ provide a representation of $\mathfrak{g}$ in $\mathcal{H}_\rho$, that is, it is satisfied:
\begin{equation}
[\boldsymbol{\xi},\boldsymbol{\zeta}]=i\langle\boldsymbol{\Theta},[\xi,\zeta]\rangle,\qquad\forall\xi,\zeta\in\mathfrak{g}.
\end{equation}

We may use the spectral theorem, \cite[Chap. 7]{Re80}, to represent each self-adjoint operator $\boldsymbol{\xi}$ on $\mathcal{H}_\rho$ as:
\begin{equation}\label{spectral_thm_ch_2}
\boldsymbol{\xi}=\hspace{-.25cm}\int\limits_{\hspace{-.04cm}-\infty}^{\hspace{.45cm}\infty}\hspace{-0.15cm}\lambda\, E_{\xi}(\diff\lambda),
\end{equation}
where $E_{\xi}$ denotes the spectral measure associated with $\boldsymbol{\xi}$. Then using \eqref{selfadjoint_exp}, we can write:
\begin{equation}\label{spectral_measure}
U_\rho\big(\exp(t\xi)\big)=e^{it\boldsymbol{\xi}}=\hspace{-.25cm}\int\limits_{\hspace{-.04cm}-\infty}^{\hspace{.45cm}\infty}\hspace{-0.15cm}\e^{it\lambda} E_{\xi}(\diff\lambda).
\end{equation}

Let $\sigma$ be a state on the folium of $\rho$, i.e., $\boldsymbol{\sigma}$ is a density operator on $\mathcal{H}_\rho$ and $\sigma$ is defined by Eq.\,\eqref{folium_rho}, then consider the measure $\mu_{\sigma,\xi}(\diff\lambda)=\textrm{Tr}\big(\boldsymbol{\sigma}E_{\xi}(\diff\lambda)\big)$, in other words, if $\Delta$ is a Borel set in $\mathbb{R}$, the probability $P(\boldsymbol{\xi},\sigma;\Delta)$ that the output of measuring the observable $\boldsymbol{\xi}$ will be in the set $\Delta$ when the system is in the state $\sigma$ is given by:
\begin{equation}\label{physical_int_measure}
P(\boldsymbol{\xi};\sigma ,\Delta)=\hspace{-0.15cm}\int\limits_{\hspace{.3cm}\Delta}\hspace{-0.1cm}\mu_{\sigma,\xi}(\diff\lambda)=\mu_{\sigma,\xi}(\Delta)=\textrm{Tr}\big(\boldsymbol{\sigma}E_{\xi}(\Delta)\big) \, .
\end{equation}
Then, obviously, we get $\mu_{\sigma,\xi}(\mathbb{R})=1$. Moreover, if the measure $\mu_{\sigma,\xi}(\diff\lambda)$ is absolutely continuous with respect to the Lebesgue measure $\diff X$ on $\mathbb{R}$, there will exist a function $\mathcal{W}_\sigma(X;\xi)$ in $L^1(\mathbb{R},\diff X)$ such that for all measurable $\Delta$:
\begin{equation}\label{absolutely_tom}
\int\limits_{\hspace{.3cm}\Delta}\hspace{-0.1cm}\mu_{\sigma,\xi}(\diff\lambda)=\hspace{-0.15cm}\int\limits_{\hspace{.3cm}\Delta}\hspace{-0.1cm}\mathcal{W}_\sigma(X;\xi)\, \diff X\geq 0.
\end{equation}

In general, this will not be true if the measure $\mu_{\sigma,\xi}(\diff\lambda)$ has singular part, for instance, if $\boldsymbol{\xi}$ has discrete spectrum, however, we get a similar result in such case too (see later on, Sect. \ref{section_recosntruction_states_groups}).

\begin{definition}\label{Radon_Nik}
Given a state $\boldsymbol{\sigma}$ in the folium of a reference state $\rho$ and a unitary representation $U$ of a Lie group $G$ on the unital $C^*$--algebra $\mathcal{A}$, the family of Borelian probability measures $\mu_{\sigma,\xi}(\diff\lambda)=\Tr\big(\boldsymbol{\sigma}E_{\xi}(\diff\lambda)\big)$, $\xi\in\mathfrak{g}$, will be called the quantum tomogram of $\boldsymbol{\sigma}$. 
\end{definition}

Then, the absolutely continuous part of the measure $\mu_{\sigma,\xi}(\diff\lambda)$ defines a function $\mathcal{W}_\sigma:\mathfrak{g}\times\mathbb{R}\rightarrow\mathbb{R}$ given by Eq.\,\eqref{absolutely_tom}, that will be also called the quantum continuous tomogram of $\boldsymbol{\sigma}$, in other words, $\mathcal{W}_\sigma(X;\xi)$ is the Radon--Nikodym derivative of the measure $\mu_{\sigma,\xi}(\diff X)$ with respect to the Lebesgue measure $\diff X$:
\begin{equation*}
\mathcal{W}_\sigma(X;\xi)=\frac{\delta\mu_{\sigma,\xi}(\diff X)}{\delta X}\, .
\end{equation*}

Notice that this  formula is another way of rewriting \eqref{absolutely_tom}.   Notice that if $\mathcal{W}_\sigma$ is continuous it is non-negative: $\mathcal{W}_\sigma\geq 0$.

From \eqref{character_folium} and \eqref{spectral_measure}, we get immediately:
\begin{equation}\label{smeared_character_Fourier}
\chi_\sigma\big(\hspace{-0.04cm}\exp(t\xi)\big)=\hspace{-.25cm}\int\limits_{\hspace{-.04cm}-\infty}^{\hspace{.45cm}\infty}\hspace{-0.15cm}\e^{itX}\mu_{\sigma,\xi}(\diff X),
\end{equation}
i.e., $\chi_\sigma\big(\exp(t\xi)\big)$ is the Inverse Fourier Transform of the measure $\mu_{\sigma,\xi}(\diff X)$ then, as we already know and, in agreement with Bochner's theorem, the smeared character (sampling function) $\chi_\rho$ is a function of positive type.  This, as indicated in Sect. \ref{sec:positive} (see the comments after Eq. (\ref{eq:Wrho_again})), will give us the clue to define the positive transform needed for a tomographic descripton of quantum states.  
The auxiliary space $\NN$ will be the set $\mathbb{R} \times \mathfrak{g}$, and the map $R \colon \NN \to \mathcal{D}(G)$ is the distribution defined by the Fourier kernel, that is: 
$$
R(X, \xi) = \frac{1}{2\pi}e^{-itX}\, ,
$$ 
i.e., the Positive Transform $\mathcal{R}(F)$ is given by, recall Eq. (\ref{eq:RF}):
$$
\mathcal{R}(F ) (X,\xi) = \langle R(X,\xi) , F\rangle = \frac{1}{2\pi}\hspace{-.15cm}\int\limits_{\hspace{-.04cm}-\infty}^{\hspace{.45cm}\infty}\hspace{-0.15cm}\e^{-itX} F(t,\xi) \, \diff t \, .
$$
 If the measure  $\mu_{\sigma,\xi}(\diff\lambda)$ had only continuous part, we would have that:
\begin{equation}\label{Fourier_Transform_Tom_chap_2}
\mathcal{W}_\sigma(X;\xi)= \mathcal{R}(\chi_\sigma)(X,\xi) = \langle R(X,\xi) , \chi_\sigma\rangle = \frac{1}{2\pi}\hspace{-.15cm}\int\limits_{\hspace{-.04cm}-\infty}^{\hspace{.45cm}\infty}\hspace{-0.15cm}\e^{-itX}\chi_\sigma\big(\hspace{-0.04cm}\exp(t\xi)\big)\diff t.
\end{equation}

Then, in such case, the following properties of the quantum tomograms $\mathcal{W}_\sigma$ are immediate consequences of Eq. \eqref{Fourier_Transform_Tom_chap_2}:

\begin{proposition}\label{tom_conds_prop}
Under the conditions stated previously, the quantum tomogram $\mathcal{W}_\sigma$ satisfies:
\begin{enumerate}

\item Normalization:  \hfil$\displaystyle{\hspace{-.25cm}\int\limits_{\hspace{-.04cm}-\infty}^{\hspace{.45cm}\infty}\hspace{-0.15cm}\mathcal{W}_\sigma(X;\xi)\, \diff X=1}$.

\item Homogeneity: \hfil$\displaystyle{\mathcal{W}_\sigma(sX;s\xi)=\frac{1}{s}\mathcal{W}_\sigma(X;\xi)},\quad s>0$.

\item Equivariance (\ref{eq:W_equivariance}): \hfil$\displaystyle{\mathcal{W}_\sigma(X; \mathrm{Ad\,}_{g^{-1}}\xi) = \mathcal{W}_{g^*\sigma}(X;\xi)}$.
\end{enumerate}
\end{proposition}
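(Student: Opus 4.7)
The plan is to read all three properties off the Fourier representation~\eqref{Fourier_Transform_Tom_chap_2}, which expresses $\mathcal{W}_\sigma(X;\xi)$ as the inverse Fourier transform (in the variable $t$) of the smeared character $t\mapsto\chi_\sigma(\exp(t\xi))$. Once this identification is in hand, each of the three assertions reduces to an elementary manipulation combined with facts already established in Sections~\ref{sec:dynamics} and~\ref{section_equivariant}.

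For normalization, Fourier inversion of~\eqref{Fourier_Transform_Tom_chap_2} is precisely the identity~\eqref{smeared_character_Fourier}; evaluating it at $t=0$ yields
$$
\int_{-\infty}^{\infty}\mathcal{W}_\sigma(X;\xi)\,\diff X \;=\; \chi_\sigma(e) \;=\; \langle\sigma,\mathbbm{1}\rangle \;=\; 1,
$$
the last equality because $\sigma$ is a normalized state. Equivalently, one may invoke directly that $\mu_{\sigma,\xi}$ is a Borel probability measure, as noted after~\eqref{physical_int_measure}, and that $\mathcal{W}_\sigma(\cdot;\xi)$ is by construction its Radon--Nikodym density with respect to Lebesgue measure.

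For homogeneity, given $s>0$, the substitution $u=ts$ in the Fourier integral yields
$$
\mathcal{W}_\sigma(sX;s\xi) \;=\; \frac{1}{2\pi}\!\int_{-\infty}^{\infty}\! e^{-it(sX)}\chi_\sigma(\exp(ts\xi))\,\diff t \;=\; \frac{1}{s}\cdot\frac{1}{2\pi}\!\int_{-\infty}^{\infty}\! e^{-iuX}\chi_\sigma(\exp(u\xi))\,\diff u \;=\; \frac{1}{s}\mathcal{W}_\sigma(X;\xi).
$$

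Equivariance is the particular instance of the general relation~\eqref{eq:W_equivariance} corresponding to the conjugation action of $G$ on the auxiliary space $\NN=\mathbb{R}\times\mathfrak{g}$ (trivial on the $\mathbb{R}$ factor and the adjoint action on $\mathfrak{g}$). Concretely, the identity $\exp(t\,\mathrm{Ad}_{g^{-1}}\xi)=g^{-1}\exp(t\xi)g$ combined with the equivariance of the sampling function~\eqref{property_equivariant_rho} gives
$$
\chi_\sigma\!\big(\exp(t\,\mathrm{Ad}_{g^{-1}}\xi)\big) \;=\; \chi_{g^*\sigma}\!\big(\exp(t\xi)\big),
$$
and inserting this into~\eqref{Fourier_Transform_Tom_chap_2} for $\mathcal{W}_\sigma(X;\mathrm{Ad}_{g^{-1}}\xi)$ recovers $\mathcal{W}_{g^*\sigma}(X;\xi)$.

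There is no substantive obstacle: all three assertions are light consequences of Fourier inversion, a single change of variable, and the equivariance machinery already set up abstractly in Section~\ref{section_equivariant}. The only point requiring attention is the careful bookkeeping of three different group actions --- conjugation on $G$, the adjoint action on $\mathfrak{g}$, and the dual action $T_g^*$ on states --- which must be tracked consistently through the exponential map in the equivariance item.
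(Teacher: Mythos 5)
Your proof is correct and follows exactly the route the paper intends: the paper offers no written argument beyond asserting that all three properties are immediate consequences of the Fourier representation \eqref{Fourier_Transform_Tom_chap_2}, and your three computations (Fourier inversion at $t=0$, or equivalently the fact that $\mu_{\sigma,\xi}$ is a probability measure; the substitution $u=ts$; and the specialization of \eqref{eq:W_equivariance} to the adjoint action via $\exp(t\,\mathrm{Ad}_{g^{-1}}\xi)=g^{-1}\exp(t\xi)g$) supply precisely the missing details. The only caveat is the $g$ versus $g^{-1}$ bookkeeping in the equivariance item, where a literal application of \eqref{property_equivariant_rho} to $F_\sigma\big(g^{-1}\exp(t\xi)g\big)$ yields $T_{g^{-1}}^*\sigma$ rather than $T_g^*\sigma$; since this sign convention is equally ambiguous in the paper's own statement of \eqref{eq:W_equivariance}, it does not count against your argument.
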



\subsection{A universal formula for the quantum Radon transform}\label{sec:QRT}
We will end this section by providing a universal formula that will allow a direct computation of the tomogram $\mathcal{W}_\sigma$ and that includes the standard quantum Radon transform (\ref{eq:quantum_radon_heisenberg}) as a particular instance.
This will justify why such expression could be called the \textit{Quantum Radon Transform} of a given state.

In order to provide such formula we will collect first a few facts on functional calculus of self-adjoint operators that will be needed in what follows (see for instance \cite{Da95}, \cite{Ak63} and references therein).

If $\mathbf{T}$ is a densely defined sefl-adjoint operator with spectral measure $E_\mathbf{T}(\diff \lambda)$,  given a measurable and integrable function $f$ with respect to the spectral measure $E_\mathbf{T}(\diff\lambda)$, we define the linear operator $f(\mathbf{T})$ as 
$$
f(\mathbf{T}) = \int f(\lambda)\, E_{\mathbf{T}}(\diff\lambda) \, .
$$   
The map $F_\mathbf{T} \colon L^1(\mathbb{R}, E_\mathbf{T}(\diff \lambda)) \to \mathbf{B}(\mathcal{H})$, given by $F(f) = f(\mathbf{T})$ satisfies: $F (f + g) = F(f) + F(g)$, $F(c f) = c F(f)$, $F(fg) = F(f) F(g)$ and is called a functional calculus \cite[chap. 4.4]{Ja68}.    

Given a real number $X$ we may also define the operator:
\begin{equation}\label{eq:fX-T}
f(X \mathbbm{1}  - \mathbf{T}) = \int f(X - \lambda) \, E_\mathbf{T} (\diff\lambda) \, .
\end{equation}   
Then, we may extend the notion of functional calculus introducing the symbol $\delta (X \mathbbm{1} - \mathbf{T})$ defined formally using Eq.  (\ref{eq:fX-T}):
\begin{equation}\label{eq:deltaX-T}
\delta (X\mathbbm{1} - \mathbf{T}) =  \hspace{-.02cm}\int\limits_{\hspace{-.20cm}-\infty}^{\hspace{.15cm}\infty}\hspace{-.0cm} \delta (X - \lambda) \, E_\mathbf{T} (\diff \lambda)  \, ,
\end{equation}
In order to make sense of the previous expression, we will consider it as an element in the dual of a class of bounded operators in the Hilbert space $\mathcal{H}$.   More specifically, as the previous expression will be evaluated in density operators, it makes sense to consider the family of trace class operators whose dual space is the space of all bounded operators with respect to the natural pairing provided by the trace:
\begin{equation}\label{eq:pairing}
\langle \mathbf{A}, \boldsymbol{\eta} \rangle = \mathrm{Tr\, }(\boldsymbol{\eta}^\dagger \mathbf{A}) \, , \qquad \mathrm{Tr\,}\boldsymbol{\eta} < \infty \, ,\quad \mathbf{A} \in \mathcal{B}(\mathcal{H}) \, .
\end{equation}
 In this sense we will define $\delta (X\mathbbm{1} - \mathbf{T})$ as an element in the dual of the space of trace class operators by means of:
$$
\langle \delta (X \mathbbm{1} - \mathbf{T}) , \boldsymbol{\sigma} \rangle =  \frac{1}{2\pi}\hspace{-.02cm}\int\limits_{\hspace{-.20cm}-\infty}^{\hspace{.15cm}\infty}\hspace{-.0cm} \diff k \int\limits_{\hspace{-.08cm}-\infty}^{\hspace{.15cm}\infty}\e^{ik(X- \lambda)} \mathrm{Tr\,} (\boldsymbol{\sigma} E_{\textbf{T}}(\diff\lambda))\, ,
$$
Notice that the r.h.s. in last expression can also be written as:
\begin{equation}\label{eq:deltaXT}
\langle \delta (X \mathbbm{1} - \mathbf{T}) , \boldsymbol{\sigma} \rangle =  \frac{1}{2\pi}\hspace{-.02cm}\int\limits_{\hspace{-.20cm}-\infty}^{\hspace{.15cm}\infty}\hspace{-.0cm} \diff k\, \,  e^{ikX} \mathrm{Tr\,} (\boldsymbol{\sigma} e^{-ik\mathbf{T}}) = 
 \frac{1}{2\pi} \mathrm{Tr\,} \left( \boldsymbol{\sigma}   \hspace{-.02cm}\int\limits_{\hspace{-.20cm}-\infty}^{\hspace{.15cm}\infty}\hspace{-.0cm} \diff k\, \,  e^{ik(X-k\mathbf{T})} \right) 
\, ,
\end{equation}
and, for instance, $\delta (X \mathbbm{1} - \mathbf{T})$ will be well defined provided that $\mathrm{Tr\,} (\boldsymbol{\sigma} e^{-ik\mathbf{T}}) \in L^2(\mathbb{R}, \diff k)$.

Then, using Eq. \eqref{eq:deltaXT}, with a slight abuse of notation, we can write:
$$
\delta(X \mathbbm{1} - \mathbf{T} ) = \frac{1}{2\pi}\hspace{-.02cm}\int\limits_{\hspace{-.20cm}-\infty}^{\hspace{.15cm}\infty}\hspace{-.0cm}  e^{ik(X \mathbbm{1} - \mathbf{T})} \diff k \, .
$$

We are ready to establish the main formula describing quantum tomograms $\mathcal{W}_\sigma$.  

\begin{theorem}\label{tom_cart}
Given a state $\rho$ in a unital $C^*$--algebra $\mathcal{A}$, the quantum tomogram $\mathcal{W}_{\sigma}(X;\xi)$ of any state $\boldsymbol{\sigma}$ in the folium of $\rho$ associated to the strongly continuous unitary representation $U$ of the Lie group $G$ on $\mathcal{A}$ is given by:
\begin{equation}\label{eq:TrsigmaX}
\mathcal{W}_{\sigma}(X;\xi)=\Tr\!\big(\boldsymbol{\sigma}\,\delta(X\mathbbm{1}-\langle\boldsymbol{\Theta},\xi\rangle)\big),\qquad\forall\xi\in\mathfrak{g},\, X\in\mathbb{R}.
\end{equation}
\end{theorem}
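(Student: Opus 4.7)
The plan is to match the Fourier inversion characterization of $\mathcal{W}_\sigma$ already derived in Eq.~(\ref{Fourier_Transform_Tom_chap_2}) against the distributional definition of $\delta(X\mathbbm{1}-\boldsymbol{\xi})$ recorded in Eq.~(\ref{eq:deltaXT}). The identity is, at its core, the statement that both sides are the same Fourier transform; the substance of the proof is bookkeeping together with a careful interpretation when the tomogram is not a genuine function.

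First I would rewrite the smeared character of a folium state $\boldsymbol{\sigma}$ along the one-parameter subgroup $\exp(t\xi)$. Using Eq.~(\ref{character_folium}), the definition of $U_\rho$ in \eqref{U_rho_GNS}, and Stone's theorem via \eqref{selfadjoint_exp}, we obtain
$$
\chi_\sigma\!\big(\exp(t\xi)\big) = \Tr\!\big(\boldsymbol{\sigma}\,U_\rho(\exp(t\xi))\big) = \Tr\!\big(\boldsymbol{\sigma}\,e^{it\boldsymbol{\xi}}\big),
$$
with $\boldsymbol{\xi}=\langle\boldsymbol{\Theta},\xi\rangle$. Plugging this into \eqref{Fourier_Transform_Tom_chap_2} yields
$$
\mathcal{W}_\sigma(X;\xi) = \frac{1}{2\pi}\int_{-\infty}^{\infty}e^{-itX}\Tr\!\big(\boldsymbol{\sigma}\,e^{it\boldsymbol{\xi}}\big)\,\diff t,
$$
and the change of variables $t\mapsto -k$ identifies the right-hand side with the pairing on the right of Eq.~(\ref{eq:deltaXT}) applied to $\mathbf{T}=\boldsymbol{\xi}$ and the density operator $\boldsymbol{\sigma}$. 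Because $\boldsymbol{\sigma}=\boldsymbol{\sigma}^{\dagger}$ the pairing \eqref{eq:pairing} reduces to the ordinary trace, and one reads off $\mathcal{W}_\sigma(X;\xi)=\Tr(\boldsymbol{\sigma}\,\delta(X\mathbbm{1}-\boldsymbol{\xi}))$, exactly as claimed.

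The subtle point I expect to be the main obstacle is the case in which the spectral measure $\mu_{\sigma,\xi}$ carries a singular part, so that $\mathcal{W}_\sigma(\cdot;\xi)$ is not a pointwise function and \eqref{Fourier_Transform_Tom_chap_2} is valid only in a distributional sense. To handle this cleanly I would promote \eqref{eq:TrsigmaX} to an identity of tempered distributions: for any Schwartz test function $\varphi\in\mathcal{S}(\mathbb{R})$, functional calculus gives
$$
\int_{\mathbb{R}}\varphi(X)\,\mu_{\sigma,\xi}(\diff X) = \Tr\!\big(\boldsymbol{\sigma}\,\varphi(\boldsymbol{\xi})\big),
$$
while substituting the definition of $\delta(X\mathbbm{1}-\boldsymbol{\xi})$ from \eqref{eq:deltaXT} and using Fubini produces the same number for $\int_{\mathbb{R}}\varphi(X)\,\langle\delta(X\mathbbm{1}-\boldsymbol{\xi}),\boldsymbol{\sigma}\rangle\,\diff X$. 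The interchange of trace and integral required by Fubini is legitimate because $|\Tr(\boldsymbol{\sigma}\,e^{-ik\boldsymbol{\xi}})|\le\|\boldsymbol{\sigma}\|_{1}=1$ together with $\widehat{\varphi}\in L^{1}(\mathbb{R})$. Pointwise \eqref{eq:TrsigmaX} then holds Lebesgue almost everywhere on the absolutely continuous part of $\mu_{\sigma,\xi}$ where the Radon--Nikodym density $\mathcal{W}_\sigma$ actually exists, and elsewhere it is to be read as a distributional identity on $\mathbb{R}$, consistent with Def.~\ref{Radon_Nik}.
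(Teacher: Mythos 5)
Your argument is essentially the paper's own proof: both identify $\chi_\sigma(\exp(t\xi))=\Tr(\boldsymbol{\sigma}\,e^{it\boldsymbol{\xi}})$ with $\boldsymbol{\xi}=\langle\boldsymbol{\Theta},\xi\rangle$ and then match the Fourier formula \eqref{Fourier_Transform_Tom_chap_2} against the definition \eqref{eq:deltaXT} of $\delta(X\mathbbm{1}-\mathbf{T})$ via the change of variable $t\mapsto -k$, the only difference being the direction in which the chain of equalities is read. Your added paragraph on the distributional reading when $\mu_{\sigma,\xi}$ has a singular part is a welcome refinement the paper leaves implicit, but it does not change the route.
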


\begin{proof} 
First notice that $\mathrm{Tr\,} (\boldsymbol{\sigma} U_\rho(\exp (t\xi) ))$ is the smeared character $\chi_\sigma$ of the representation $U_\rho$ with respect to the state $\sigma$, then, because of Eq. (\ref{eq:deltaXT}), we get:
$$
\langle \delta (X \mathbbm{1} - \boldsymbol{\xi}) , \boldsymbol{\sigma} \rangle =  \frac{1}{2\pi}\hspace{-.02cm}\int\limits_{\hspace{-.20cm}-\infty}^{\hspace{.15cm}\infty}\hspace{-.0cm} \diff k\, \,  e^{ikX} \mathrm{Tr\,} (\boldsymbol{\sigma} e^{-ik\boldsymbol{\xi}}) = 
\frac{1}{2\pi}\hspace{-.02cm}\int\limits_{\hspace{-.20cm}-\infty}^{\hspace{.15cm}\infty}\hspace{-.0cm} \diff k\, \,  e^{ikX} \chi_\sigma (\exp (-k\xi)) \, . 
$$
and because of Eq. (\ref{Fourier_Transform_Tom_chap_2}), we get:
$$
\langle \delta (X \mathbbm{1} - \boldsymbol{\xi}) , \boldsymbol{\sigma} \rangle =  \mathcal{W}_\sigma (X, \xi) \, ,
$$
hence, substituting $\boldsymbol{\xi} = \langle\boldsymbol{\Theta},\xi\rangle$ and using the canonical pairing (\ref{eq:pairing}), we get the desired formula
$\mathcal{W}_{\sigma}(X;\xi)=\Tr\!\big(\boldsymbol{\sigma}\delta(X\mathbbm{1}-\langle\boldsymbol{\Theta},\xi\rangle)\big)$.
\end{proof}

Note that from the formula (\ref{eq:TrsigmaX}), we get easily that $\mathcal{W}_\sigma(X;\xi)$ is real:
\begin{multline}
\overline{\mathcal{W}_\sigma(X;\xi)}=\overline{\Tr\!\big(\boldsymbol{\sigma}\delta(X\mathbbm{1}-\langle\boldsymbol{\Theta},\xi\rangle)\big)}=\Tr\!\big(\boldsymbol{\sigma}\delta(X\mathbbm{1}^\dagger-\langle\boldsymbol{\Theta},\xi\rangle^\dagger)\big)\\
=\Tr\!\big(\boldsymbol{\sigma}\delta(X\mathbbm{1}-\langle\boldsymbol{\Theta},\xi\rangle)\big)=\mathcal{W}_\sigma(X;\xi) \, ,
\end{multline}
completing the list of conditions satisfied by quantum tomograms in Prop. \ref{tom_conds_prop}.


\subsection{Reconstruction of states from quantum tomograms on groups}\label{section_recosntruction_states_groups}

The lasts previous sections \ref{section_tomograms_group}, \ref{sec:QRT}, ended with the construction, using a group representation $U$ and a fiduciary state $\rho_0$, of tomograms $\mathcal{W}_\sigma$ for a family of normal states  $\boldsymbol{\sigma}$.  Because of \eqref{smeared_character_Fourier}, we can obtain the smeared character $\chi_\sigma$ of such states, as the inverse Fourier transform of the tomogram $\mathcal{W}_\sigma$:
\begin{equation}\label{smeared_character_Fourier_new}
\chi_\sigma\big(\hspace{-0.04cm}\exp(t\xi)\big) = \Tr\!\Big(\boldsymbol{\sigma}U_{\rho}\big(\hspace{-0.04cm}\exp(t\xi)\big)\Big)= \hspace{-.25cm}\int\limits_{\hspace{-.04cm}-\infty}^{\hspace{.45cm}\infty}\hspace{-0.15cm}\e^{itX}\mathcal{W}_\sigma(X;\xi)\diff X\, .
\end{equation}
What is needed now is to recover the state $\boldsymbol{\sigma}$ from the smeared character $\chi_\sigma$, in other words, we would like to know under what conditions the representation $U$ determines a tomographic pair. 

The notion of frames \cite{Da92}, \cite{Ka94}, a widely studied subject with many applications\footnote{Even if we will focus here on their use in sampling theory, very close in spirit to the problem we are facing here  (see, for instance, \cite{Ga02} and references therein for classical sampling theory, and \cite{Ga08}, for the use of frames in such context).}, will provide the key idea to achieve this.   In the following paragraphs, we will succinctly review the main notions of the theory that will be needed in this context.

Let $\mathcal{H}$ be a complex separable Hilbert space and $\MM$ a measurable space with a measure $\mu$.  A family of vectors $\mathcal{F} = \{ |\psi_x\rangle  \mid x \in \MM \}$ is called a frame based on $\MM$ if it satisfies:
\begin{enumerate}
\item For every $|\psi \rangle \in \mathcal{H}$, the function $\mathrm{ev}_\psi \colon \MM \to \mathbb{C}$ given by $\mathrm{ev}_\psi (x) = \langle \psi \mid \psi_x \rangle$ is $\mu$-measurable and belongs to $L^2(\MM,\mu)$ (we may also say that the frame $\mathcal{F}$ is square integrable).

\item (Stability condition) There are real numbers $0 < A \leq B$ such that:
$$
A ||\psi ||^2 \leq || \mathrm{ev}_\psi ||^2_{L^2(\MM)} = \int_{\MM} |\langle \psi, \psi_x \rangle |^2 \diff \mu (x) \leq B ||\psi ||^2 \, ,
$$
for all $|\psi \rangle \in \mathcal{H}$.  The numbers $A$ and $B$ in the previous expression are called (lower ande upper) frame bounds for the frame $\mathcal{F}$, respectively.  If $A =B$, the frame is called tight.
\end{enumerate}

A frame $\mathcal{F} = \{ |\psi_x\rangle  \mid x \in \MM \}$ defines a bounded linear operator, called the frame operator, $\mathfrak{F} \colon \mathcal{H} \to L^2(\MM,\mu)$, given by $\mathfrak{F}|\psi \rangle = \langle \psi_{(\cdot)} | \psi \rangle$, in other words, the function $\mathfrak{F}|\psi \rangle \colon \mathcal{M} \to \mathbb{C}$, denoted otherwise as $F_\psi$, is defined as $(\mathfrak{F} |\psi \rangle ) (x) = F_\psi(x) = \langle \psi_x | \psi \rangle$, $x \in \MM$. Moreover, the frame operator $\mathfrak{F}$ is injective and admits a bounded left inverse.   

The adjoint frame operator $\mathfrak{F}^* \colon L^2(\MM,\mu) \to \mathcal{H}$ is defined as: 
$$
\langle \mathfrak{F}^* f | \psi \rangle_\mathcal{H} = \langle f , \mathfrak{F}|\psi \rangle_{L^2(\mathcal{M})}\, ,\qquad   \forall  f \in L^2(\MM)\,,|\psi \rangle \in \mathcal{H} \, .
$$
Then, 
\begin{eqnarray*}
\langle \mathfrak{F}^* f | \psi \rangle_\mathcal{H} &=& \int_{\MM} \overline{f(x)} F_\psi(x)\, \diff \mu(x) \\ &=& \int_{\MM} \overline{f(x)} \langle \psi_x | \psi \rangle \, \diff\mu(x)  =  \langle \Big(\int_{\MM} f(x) \psi_x \, \diff\mu(x)  \Big)| \psi \rangle  \, ,
\end{eqnarray*}
and
$$
\mathfrak{F}^* f = \int_{\MM} f(x) | \psi_x \rangle \, \diff\mu(x) \, , \qquad \forall f \in L^2(\MM,\mu) \, ,
$$
in the weak sense (as all other operator-valued or vector-valued integrals we have encountered so far).

Using the frame operator and its adjoint we can define the metric operator $S = \mathfrak{F}^* \mathfrak{F} \colon \mathcal{H} \to \mathcal{H}$, which is bounded and definite positive, with bounded definite positive inverse $S^{-1}$, and satisfying $0 < A I \leq S \leq B I$.  

The metric operator $S$ allows to define the dual frame $\mathcal{F}^*$ of the frame $\mathcal{F} = \{ |\psi_x\rangle  \mid x \in \MM \}$, namely, the family of vectors:
$$
\mathcal{F}^* = \{ |\psi^x \rangle = S^{-1} |\psi_x\rangle  \mid  x \in \MM \} \, ,
$$
which is actually a frame too.   We will also say that the frames $\mathcal{F}$ and $\mathcal{F}^*$ are dual to each other.

Using a couple of dual frames  $\mathcal{F}$ and $\mathcal{F}^*$, we obtain some remarkable formulas:
$$
I = \int_{\MM} |\psi^x \rangle \langle \psi_x |\, \diff \mu (x) = \int_{\MM} |\psi_x \rangle \langle \psi^x | \, \diff \mu (x) \, ,
$$
and we get the following reconstruction formula (compare with Eq. (\ref{reconstruction_state_D})):
$$
|\psi \rangle  = \int_\mathcal{M} F_\psi (x) |\psi^x \rangle \, \diff \mu(x)  \, , \qquad \forall |\psi \rangle \in \mathcal{H} \, .
$$
together with a similar formula for the dual frame transform.

The main observation regarding the use of frames in the context of quantum tomography is that a relevant class of unitary representations of groups, the so called square integrable, define frames.  A square integrable representation $\pi$ of a locally compat group $G$ on a Hilbert space $\mathcal{H}$ is an irreducible unitary representation such that there exist vectors $|\psi \rangle, |\phi \rangle \in \mathcal{H}$ for which:
$$
\int_G | \langle \phi ,\pi (g) \psi \rangle|^2 \diff \mu_G (g)  < \infty \, .
$$
with $\mu_G$ the left-invariant Haar measure on $G$.  Sometimes the function $c_{\phi,\psi}(g) = \langle \phi ,\pi (g) \psi \rangle$ is called a coefficient of the representation and, with this terminology, the representation is square integrable if there is a square integrable coefficient function on $G$.

If a representation $\pi$ is square integrable, the set of vectors for which the function $ \langle \phi |\pi (g) |\psi \rangle$ is in $L^2(G, \mu_G)$ define a dense linear subspace of $\mathcal{H}$.  Moreover if the group $G$ is unimodular, i.e., the left-invariant Haar measure is right-invariant, then the function $\langle \phi |\pi (g) |\psi \rangle$ is square integrable for all $|\psi\rangle$, $|\phi\rangle$.  More specifically we have the following Theorem \cite[Thm. 14.3.3]{Di77} (see also \cite{Du76},  \cite{An08} for the non-unimodular case) suited for the purposes of this paper. 

\begin{theorem}\label{thm:square} Let $\pi \colon G \to \mathcal{U}(\mathcal{H})$ be  a square integrable irreducible unitary representation of the unimodular Lie group $G$, then the coefficient function $c_{\phi,\psi} = \langle \phi, \pi (.) \psi \rangle$ is  in $L^2(G, \mu_G)$ for all  $|\psi\rangle$, $|\phi\rangle \in \mathcal{H}$, and the orthogonality relations:
\begin{equation}\label{eq:orthogonality}
\int_G \langle \phi_1 ,\pi (g) \psi_1 \rangle \langle \pi (g) \phi_2 , \psi_2\rangle \diff \mu_G(g) =d^{-1} \langle \phi_1 , \phi_2 \rangle \langle \psi_1 , \psi_2\rangle \, ,
\end{equation}
hold for all $|\psi_a\rangle$, $|\phi_a\rangle$, $a = 1,2$, where $0< d < +\infty$ is a positive real number called the formal degree of the representation.
\end{theorem}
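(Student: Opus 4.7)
The proof follows the classical Duflo–Moore/Dixmier strategy: exploit Schur's lemma applied to the coefficient map viewed as an intertwiner. Call a vector $|\psi\rangle \in \mathcal{H}$ \emph{admissible} if $g \mapsto \langle \phi, \pi(g)\psi\rangle$ lies in $L^2(G,\mu_G)$ for every $|\phi\rangle \in \mathcal{H}$, and denote by $\mathcal{D} \subset \mathcal{H}$ the set of admissible vectors. The first step is to show that $\mathcal{D}$ is a non-zero, $\pi(G)$-invariant linear subspace, hence dense by irreducibility. Non-triviality follows from the square integrability hypothesis: if $c_{\phi_0,\psi_0} \in L^2(G)$, then by left invariance of $\mu_G$ the set of $|\phi\rangle$ such that $c_{\phi,\psi_0}\in L^2(G)$ is closed and $\pi(G)$-invariant, hence all of $\mathcal{H}$ by irreducibility, so $|\psi_0\rangle \in \mathcal{D}$.

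For each $|\psi\rangle \in \mathcal{D}$ I would define the coefficient operator $V_\psi \colon \mathcal{H} \to L^2(G,\mu_G)$ by $(V_\psi \phi)(g) = \langle \phi, \pi(g)\psi\rangle$. Closedness of $V_\psi$ follows from continuity of each point evaluation, so the closed graph theorem yields boundedness. Left invariance of $\mu_G$ provides the intertwining relation $V_\psi\,\pi(h) = \lambda(h)\,V_\psi$, where $\lambda$ is the left regular representation on $L^2(G,\mu_G)$. Consequently, for any $|\psi_1\rangle,|\psi_2\rangle \in \mathcal{D}$, the bounded operator $V_{\psi_2}^* V_{\psi_1}\colon \mathcal{H} \to \mathcal{H}$ commutes with every $\pi(h)$, and Schur's lemma forces it to be a scalar $K(\psi_1,\psi_2)\,I_\mathcal{H}$. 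Unwinding definitions gives an identity of the shape of \eqref{eq:orthogonality} on $\mathcal{D}\times\mathcal{D}$, with $K$ in place of $d^{-1}\langle\psi_1,\psi_2\rangle$.

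The next step is to identify the constant. Sesquilinearity in $(\psi_1,\psi_2)$ and positivity $K(\psi,\psi) \geq 0$ are direct, so $K$ is represented by a densely defined positive self-adjoint operator $C$ on $\mathcal{H}$ via $K(\psi_1,\psi_2) = \langle \psi_1, C\psi_2\rangle$. Unimodularity enters crucially at this point: combining the left-intertwining identity for $V_\psi$ with right invariance of $\mu_G$ produces a second intertwining, now of the assignment $|\psi\rangle \mapsto V_\psi$ with the right regular representation, from which a short computation shows that $C$ commutes with every $\pi(h)$. Schur's lemma applied to $C$ then forces $C = d^{-1}\,I_\mathcal{H}$ for some $d > 0$ (the formal degree), establishing the orthogonality relation \eqref{eq:orthogonality} on $\mathcal{D} \times \mathcal{D}$. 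The resulting isometry-up-to-scale identity $\|V_\psi\phi\|_{L^2(G)}^2 = d^{-1}\|\psi\|^2 \|\phi\|^2$ on the dense set $\mathcal{D}$ shows that $|\psi\rangle \mapsto V_\psi|\phi\rangle$ extends continuously to all of $\mathcal{H}$, so $\mathcal{D} = \mathcal{H}$ and \eqref{eq:orthogonality} extends to $\mathcal{H} \times \mathcal{H}$ by continuity.

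The main obstacle I anticipate is precisely the symmetry argument that turns the Schur scalar $K$ into a \emph{bounded} operator on the $\psi$-side: one has to convert the left-intertwining relation for $V_\psi$ into a corresponding statement on the variable $|\psi\rangle$ and check that the resulting positive operator $C$ commutes with all $\pi(g)$. For a non-unimodular $G$ this operator is genuinely unbounded (the Duflo–Moore operator) and the orthogonality relations acquire a more intricate form; it is the right invariance of $\mu_G$ that collapses $C$ to a scalar and yields the clean formulation stated in the theorem.
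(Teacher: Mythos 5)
The paper does not actually prove this statement: Theorem \ref{thm:square} is imported verbatim from Dixmier \cite[Thm. 14.3.3]{Di77} (with \cite{Du76}, \cite{An08} cited for the non-unimodular generalization), so there is no internal proof to compare yours against. Your sketch is the standard Duflo--Moore/Dixmier argument, and its skeleton --- coefficient operators $V_\psi$ intertwining $\pi$ with the left regular representation, Schur's lemma applied to $V_{\psi_2}^*V_{\psi_1}$, and a second application of Schur's lemma to the positive operator $C$ on the $\psi$-side, with unimodularity entering through the right regular representation --- is exactly the argument that establishes the theorem.

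Two steps as written need repair. First, your non-triviality argument asserts that $\{\,|\phi\rangle : c_{\phi,\psi_0}\in L^2(G,\mu_G)\,\}$ is \emph{closed}; it is a nonzero $\pi(G)$-invariant linear subspace, hence dense by irreducibility, but closedness is not automatic for a subspace cut out by an integrability condition on a non-compact group (uniform convergence $c_{\phi_n,\psi_0}\to c_{\phi,\psi_0}$, which is all that $\phi_n\to\phi$ gives you, does not preserve square integrability). Without closedness you cannot conclude the set is all of $\mathcal{H}$, so $\mathcal{D}$ could a priori be empty and the construction of $V_\psi$ as an everywhere-defined operator collapses. The standard fix uses tools you already invoke, in a different order: define $T\phi=c_{\phi,\psi_0}$ on the dense invariant domain, check that $T$ is closed (an $L^2$-convergent sequence has an a.e.-convergent subsequence, while $|c_{\phi_n-\phi,\psi_0}(g)|\le\|\phi_n-\phi\|\,\|\psi_0\|$ gives uniform pointwise convergence, so the two limits agree), observe that the closed positive self-adjoint operator $T^*T$ commutes with the irreducible $\pi$, and apply the unbounded version of Schur's lemma to get $T^*T=c\,I$ on all of $\mathcal{H}$; only then does the domain close up and $V_{\psi_0}$ become bounded and everywhere defined. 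Second, and relatedly, ``continuity of each point evaluation'' is not a property of $L^2(G,\mu_G)$; the closed-graph argument is sound, but it runs through the a.e.-subsequence device just described, not through continuous evaluations on the target space. With these repairs your proof is complete and coincides with the one the paper's citation points to.
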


An immediate consequence of Thm. \ref{thm:square} is that, if $G$ is a unimodular Lie group, $U \colon G \to \mathcal{A}$ a unitary representation of $G$, and $\rho_0$ a state such that the unitary representation $U_0  = \pi_{\rho_0}\circ U$ on the GNS Hilbert space $\mathcal{H}_{\rho_0}$ is square integrable, then the family of vectors $U_0 (g) |\psi \rangle$ is a frame.

\begin{proposition}
Let $U\colon G \to \mathcal{A}$ be a strongly continuous unitary representation of the unimodular Lie group $G$ on the $C^*$-algebra $\mathcal{A}$ and $\rho_0$ a pure fiducial state on $\mathcal{A}$.  Consider the associated unitary representation $U_0 \colon G \to \mathcal{U}(\mathcal{H}_{\rho_0})$ of the group $G$ on the GNS Hilbert space $\mathcal{H}_{\rho_0}$ defined by the state $\rho_0$ and assume that $U_0$ is square integrable with formal degree $d$.  Then, given a vector $|\psi_0 \rangle$, the set of vectors $\mathcal{F}_{U,\rho_0}  = \{ |\psi_g \rangle = U_0(g) |\psi_0 \rangle \mid g \in G\}$ defines a tight frame, whose dual frame is given by $\mathcal{F}^*_{U,\rho_0} =  \{ |\psi^g \rangle = \frac{d}{||\psi_0||^2}\, U_0(g) |\psi_0 \rangle \mid g \in G\}$.
\end{proposition}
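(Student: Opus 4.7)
The strategy is to specialize the orthogonality relations of Theorem~\ref{thm:square} to the coefficient function $c_\psi(g) = \langle U_0(g)\psi_0, \psi\rangle$, and to read off from a single identity both the (tight) frame property and the scalar form of the metric operator.

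First I would verify the defining frame axioms simultaneously. The measurability of $g \mapsto \langle \psi_g, \psi\rangle$ on $G$ is immediate from the strong continuity of $U_0$. Specializing the orthogonality relations \eqref{eq:orthogonality} to the coefficient $c_\psi$ (taking the vectors as $\psi_0, \psi, \psi_0, \psi$ in the appropriate positions, i.e.\ in the Duflo--Moore form valid on unimodular $G$) yields
\begin{equation*}
\int_G |\langle U_0(g)\psi_0,\psi\rangle|^2 \, \diff\mu_G(g) \;=\; \frac{\|\psi_0\|^2}{d}\,\|\psi\|^2\, ,\qquad \forall |\psi\rangle\in\mathcal{H}_{\rho_0}.
\end{equation*}
This identity already delivers three facts: the coefficient function $\mathrm{ev}_\psi = \langle \psi_{(\cdot)}, \psi\rangle$ belongs to $L^2(G,\mu_G)$; the upper and lower frame bounds coincide, $A=B=\|\psi_0\|^2/d$; hence $\mathcal{F}_{U,\rho_0}$ is a tight frame with that common bound.

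Second, I would exploit the same identity to compute the metric operator $S = \mathfrak{F}^*\mathfrak{F}$ explicitly. By definition $\langle\psi|S\psi\rangle = \|\mathfrak{F}\psi\|^2_{L^2(G)}$, which by the display above equals $(\|\psi_0\|^2/d)\|\psi\|^2$ for every $|\psi\rangle$. Self-adjointness of $S$ together with polarization then forces
\begin{equation*}
S \;=\; \frac{\|\psi_0\|^2}{d}\, I_{\mathcal{H}_{\rho_0}}\, ,\qquad\text{hence}\qquad S^{-1} \;=\; \frac{d}{\|\psi_0\|^2}\, I_{\mathcal{H}_{\rho_0}}\, .
\end{equation*}
Applying the general prescription for the dual frame, $|\psi^g\rangle = S^{-1}|\psi_g\rangle$, recorded in Section~\ref{section_recosntruction_states_groups}, I obtain $|\psi^g\rangle = (d/\|\psi_0\|^2)\,U_0(g)|\psi_0\rangle$, which is the claimed formula.

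The only genuine obstacle is verifying that the hypotheses of Theorem~\ref{thm:square} are in force. Unimodularity of $G$ is an explicit assumption, and square integrability of $U_0$ is assumed by name; irreducibility is implicit in the cited theorem's notion of a square integrable representation, and is compatible with the purity of $\rho_0$, which by the GNS construction makes $\pi_{\rho_0}(\mathcal{A})$ act irreducibly on $\mathcal{H}_{\rho_0}$. Once these prerequisites are granted, the proof collapses to a single invocation of \eqref{eq:orthogonality} followed by polarization, so no further analytic difficulty is expected.
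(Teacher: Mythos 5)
Your proposal is correct and follows essentially the same route as the paper: both invoke irreducibility of $U_0$ from purity of $\rho_0$ via the GNS construction, specialize the orthogonality relations \eqref{eq:orthogonality} to obtain $\int_G |\langle \psi,\psi_g\rangle|^2\,\diff\mu_G(g) = d^{-1}\|\psi_0\|^2\|\psi\|^2$, and conclude that the metric operator $S$ is the scalar $d^{-1}\|\psi_0\|^2\,I$, whence the stated dual frame. Your explicit polarization step merely fills in what the paper calls ``a simple computation.''
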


\begin{proof}
If the fiducial state $\rho_0$ is pure, the GNS representation $\pi_{\rho_0}$ is irreducible \cite{Ha96} and, consequently, the unitary representation $U_0 = \pi_{\rho_0} \circ U$ is irreducible.  Then, because of Thm. \ref{thm:square}, the orthogonality relations (\ref{eq:orthogonality}) imply:
$$
\int_G |\langle \psi , \psi_g \rangle |^2 \, \diff \mu_G(g) = \int_G |\langle \psi , U_0(g) \psi_0 \rangle |^2  \, \diff \mu_G(g) = d^{-1} || \psi_0 ||^2 || \psi ||^2 \, ,
$$
for all $|\psi \rangle \in \mathcal{H}$, and the set $\{ |\psi_g \rangle \}$ is a tight frame with frame constant $d^{-1} || \psi_0 ||^2$.

A simple computation shows that in such case the frame map $\mathfrak{F} \colon \mathcal{H} \to L^2(G,\mu_G)$ is an isometry (note the the range of $\mathfrak{F}$ is a closed subspace invariant under $G$, hence because the representation $U_0$ is irreducible it must be $\mathcal{H}$), and the map $S$ is a multiple of the identity with scaling factor the frame constant.
\end{proof}

In what follows, in order to obtain simpler formulas, it will be assumed that the vector $|\psi_0\rangle$ is unitary, that is, $||\psi_0 || = 1$.   

We will end this summary by recalling the trace theorem for frames that states that the range of the frame operator $\mathfrak{F}(\mathcal{H}) \subset L^2(\MM,\mu)$ is a reproducing kernel Hilbert space with kernel given by the function $\kappa (x,y) = \langle \psi_x, \psi^y \rangle$, $x,y \in \MM$.  This implies that (compare with Eq. (\ref{eq:reproducing})):
$$
\Phi (y) = \int_{\MM} \kappa (y,x) \Phi (x)\, \diff \mu (x) \, , \qquad \forall \Phi \in \mathfrak{F}(\mathcal{H}) \subset L^2(\MM, \mu)\, ,
$$
that in our particular setting becomes:
\begin{equation}\label{eq:reproducingG}
\Phi (g) = \int_G \kappa (g,g') \Phi (g')\, \diff \mu_G (g') \, , \qquad \forall \Phi \in \mathfrak{F}(\mathcal{H}) \subset L^2(G,\mu_G) \, .
\end{equation}
with the function,
 $\kappa (g,g') = d\, \langle U_0(g)\psi_0, U_0(g')\psi_0\rangle = d\, \langle \psi_0, U_0(g^{-1}g')\psi_0\rangle$, $g,g' \in G$, called the reproducing kernel, which is $L^2 (G,\mu_G)$ on each argument.   Note that an immediate consequence of Eq. (\ref{eq:reproducingG}) is that:
 \begin{equation}\label{eq:kappakappa}
 \kappa (g,g' ) = \int_G \kappa (g,g'') \kappa (g'', g')\, \diff \mu_G(g'') \, .
 \end{equation}
 From the previous expressions we will obtain the following formula for the trace of a trace class operator $\mathbf{A}$:
 \begin{equation}\label{eq:trace}
 \mathrm{Tr\,} \mathbf{A} = \int_G \kappa (\mathbf{A},g,g)\, \diff \mu_G(g) \, , 
 \end{equation}
 where $ \kappa (\mathbf{A},g,g') =  \langle U_0(g)\psi_0, \mathbf{A} U_0(g')\psi_0\rangle$.  In particular if we choose $\mathbf{A} = |\psi_g \rangle \langle \psi_{g'} |$, $g,g' \in G$, we get:
 \begin{eqnarray}\label{eq:Trpsigpsig}
 \mathrm{Tr\,} (|\psi_g \rangle \langle \psi_{g'}| ) &=& \int_G 
 \langle \psi_0 | U(g'')^\dagger |\psi_g \rangle \langle \psi_{g'} | U(g'') |\psi_0 \rangle\, \diff \mu_G (g'')  \nonumber \\ &=& \int_G \kappa (g'',g) \kappa(g', g'') \diff \mu_G(g'') = \kappa (g',g) \, ,
\end{eqnarray}
where we have used (\ref{eq:kappakappa}) to get the final formula.

We will apply now the previous results to the situation described in Sects. \ref{sec:dynamics}, \ref{section_tomograms_group}.   That is, because the states $\boldsymbol{\sigma}$ we want to describe tomographically are the states in the folium of $\rho_0$, then we will consider the $C^*$-algebra $\mathcal{A}_0 = \mathcal{B}(\mathcal{H}_{\rho_0})$ of all bounded operators in $\mathcal{H}_{\rho_0}$  whose states are represented by density operators $\boldsymbol{\sigma}$.  

The unimodular Lie group $G$ is represented in the Hilbert space $\mathcal{H}_{\rho_0}$ by a irreducible square integrable unitary representation $U_0$ of formal degree $d$.     Then we may consider the map $U_0 \colon G \to \mathcal{A}_0$ as a tomographic set and $D_0 \colon G\to \mathcal{A}_0'$, defined as $D_0(g) = d\,  U_0(g)^\dagger$, $g \in G$, as the dual tomographic map.    That these notions correspond to the general scheme depicted in Sect. \ref{sec:sampling} is the content of the following resuts.

\begin{theorem}
The pair $U_0$, $D_0$ is a tomographic pair.
\end{theorem}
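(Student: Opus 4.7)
The plan is to check in turn the two ingredients of the defining conditions: that $U_0$ is a tomographic set, and that $D_0$ is a dual tomographic map for it; then to verify the normalization and let Theorem \ref{GFT} deliver the reconstruction automatically.

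For the tomographic set conditions, strong continuity of $U_0$ immediately yields continuity, hence measurability, of every sampling function $F_\sigma(g)=\Tr(\boldsymbol{\sigma} U_0(g))$. I would spectrally decompose the density operator as $\boldsymbol{\sigma}=\sum_n p_n|\phi_n\rangle\langle\phi_n|$ with unit eigenvectors and $\sum_n p_n=1$, so that $F_\sigma$ becomes a convex combination of the diagonal matrix coefficients $c_n(g)=\langle\phi_n,U_0(g)\phi_n\rangle$, each of which belongs to $L^2(G,\mu_G)$ with norm $d^{-1/2}$ by Theorem \ref{thm:square}. The triangle inequality together with $\sum_n p_n=1$ then yields $F_\sigma\in L^2(G,\mu_G)$. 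Separation of states follows from irreducibility: if $F_{\sigma_1}=F_{\sigma_2}$, the vanishing of $\Tr((\boldsymbol{\sigma}_1-\boldsymbol{\sigma}_2)U_0(g))$ for all $g$ combined with the double commutant theorem --- which, since $U_0$ is irreducible, forces the family $\{U_0(g)\}_{g\in G}$ to weakly span $\mathcal{B}(\mathcal{H}_{\rho_0})$ --- gives $\boldsymbol{\sigma}_1=\boldsymbol{\sigma}_2$.

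For the dual tomographic map, following the construction of Section 3.1.5 with $U'(g)=U_0(g)^\dagger$ and fiducial state $\rho_0$, the natural pairing to use is $\langle D_0(g),a\rangle=d\,\rho_0(U_0(g)^\dagger a)$, for which the kernel reads
\[
\kappa(y,x)=\langle D_0(x),U_0(y)\rangle=d\,\langle\psi_0,U_0(x^{-1}y)\psi_0\rangle=d\,\langle U_0(x)\psi_0,U_0(y)\psi_0\rangle,
\]
which is exactly the reproducing kernel of the RKHS described by \eqref{eq:reproducingG}, namely the range of the frame operator attached to the tight frame $\mathcal{F}_{U,\rho_0}$. Square-integrability of the test functions $F_a(x)=\langle D_0(x),a\rangle$ follows from the same orthogonality estimates as in the first step.

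The heart of the proof, and the main obstacle, is the reproducing identity $F_\sigma(y)=\int_G\kappa(y,x)F_\sigma(x)\,\diff\mu_G(x)$ for $F_\sigma$ in the range of $\mathcal{F}_{U_0}$. My plan is to insert the frame resolution of the identity $\mathbbm{1}=d\int_G|\psi_g\rangle\langle\psi_g|\,\diff\mu_G(g)$ into $F_\sigma(y)=\Tr(\boldsymbol{\sigma}\,\mathbbm{1}\,U_0(y))$; cyclicity of the trace converts the result into a single integral in the frame variable which, via the orthogonality relations \eqref{eq:orthogonality}, one recognizes as $\int_G\kappa(y,x)F_\sigma(x)\,\diff\mu_G(x)$. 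For a pure state $|\phi\rangle\langle\phi|$ the identity collapses to an instance of \eqref{eq:orthogonality} applied to the four vectors $\phi,\phi,U_0(y)\psi_0,\psi_0$; extension to general normal states in the folium of $\rho_0$ then proceeds by dominated convergence using the uniform $L^2$-bounds already in hand. The subtle point is ensuring this identification is valid uniformly over the full range of $\mathcal{F}_{U_0}$ --- not merely for vector states of the form $|\psi\rangle\langle\psi_0|$ where it is manifest --- because the frame kernel controls functions $\langle U_0(\cdot)\psi_0,\psi\rangle$ while $F_\sigma$ is built from diagonal coefficients $\langle\phi,U_0(\cdot)\phi\rangle$; bridging the two requires a careful use of the irreducibility of $U_0$ and the self-adjointness of $\boldsymbol{\sigma}$. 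Once the reproducing condition is established, the normalization $\int_G F_\sigma(x)\langle D_0(x),\mathbbm{1}\rangle\,\diff\mu_G(x)=1$ is the same identity at $y=e$ combined with $F_\sigma(e)=\sigma(\mathbbm{1})=1$, and Theorem \ref{GFT} concludes the argument.
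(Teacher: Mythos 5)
Your preliminary steps are fine --- in fact the $L^2$ estimate via the spectral decomposition of $\boldsymbol{\sigma}$ and the separation argument via the double commutant are justified more carefully than in the paper, which simply asserts both. The gap is exactly in the step you flag as ``the heart of the proof'': the reproducing identity for your kernel is false. With $\langle D_0(g),a\rangle=d\,\langle\psi_0,U_0(g)^\dagger a\,\psi_0\rangle$, the kernel $\kappa(y,x)=d\,\langle U_0(x)\psi_0,U_0(y)\psi_0\rangle$ is the rank-one frame kernel of Eq.~(\ref{eq:reproducingG}); it reproduces only the subspace $\mathfrak{F}(\mathcal{H})=\{\langle U_0(\cdot)\psi_0,\psi\rangle \mid \psi\in\mathcal{H}\}$, of dimension $\dim\mathcal{H}$ in the finite-dimensional case, whereas the range of $\mathcal{F}_{U_0}$ is spanned by \emph{all} matrix coefficients (dimension $(\dim\mathcal{H})^2$). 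Concretely, for a pure state $\boldsymbol{\sigma}=|\phi\rangle\langle\phi|$ the orthogonality relations (\ref{eq:orthogonality}) give
\begin{equation*}
\int_G \kappa(y,x)F_\sigma(x)\,\diff\mu_G(x)=d\!\int_G\!\langle\phi,U_0(x)\phi\rangle\langle U_0(x)\psi_0,U_0(y)\psi_0\rangle\,\diff\mu_G(x)=\langle\phi,\psi_0\rangle\langle\phi,U_0(y)\psi_0\rangle ,
\end{equation*}
which is not $F_\sigma(y)=\langle\phi,U_0(y)\phi\rangle$ and even vanishes identically when $\phi\perp\psi_0$. No appeal to irreducibility or to the self-adjointness of $\boldsymbol{\sigma}$ can bridge this: the identity you need is simply not true for that kernel, so the ``subtle point'' you postpone is not a technicality but an obstruction. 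The same objection applies to your normalization step, which relies on the same identity at $y=e$.

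The kernel that actually corresponds to the paper's dual map $D_0(g)=d\,U_0(g)^\dagger$ under the trace pairing is the character kernel $\kappa(y,x)=d\,\Tr\!\big(U_0(x^{-1}y)\big)$ (a distribution when $\mathcal{H}_{\rho_0}$ is infinite dimensional), and verifying its reproducing property on the range of $\mathcal{F}_{U_0}$ is essentially equivalent to the full reconstruction formula $\boldsymbol{\sigma}=d\int_G\Tr(\boldsymbol{\sigma}U_0(g))\,U_0(g)^\dagger\,\diff\mu_G(g)$. This is why the paper does not argue axiom-by-axiom as you propose: its proof of the present theorem consists of the separation and square-integrability observations plus a reduction to Thm.~\ref{thm:reconstruction_density}, which is then proved by a direct trace computation. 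The frame machinery enters there through the operator identity $\Tr(\mathbf{A})\,\mathbbm{1}=d\int_G U_0(g)\mathbf{A}U_0(g)^\dagger\,\diff\mu_G(g)$ of Eq.~(\ref{eq:trace_operator}) together with (\ref{eq:Trpsigpsig}) and (\ref{eq:kappakappa}), not through the rank-one reproducing kernel. To repair your argument you would have to replace your $D_0$ and $\kappa$ by these and then, in effect, reprove Thm.~\ref{thm:reconstruction_density}.
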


\begin{proof}  
The assertion follows from the direct proof of the reconstruction formula given in Thm. \ref{thm:reconstruction_density} below.  

We observe that the set $\mathcal{U}_0  = \{  U_0(g) \in \mathcal{A}_0 \mid g \in G\}$ separates states in the folium of $\rho_0$.  Note that if this were not the case, there will be two states $\boldsymbol{\sigma}$, $\boldsymbol{\sigma}'$, such that $\mathrm{Tr\, } (\boldsymbol{\sigma} U(g)) = \mathrm{Tr\, } (\boldsymbol{\sigma}' U(g)) $ for all $g\in G$, i.e., $\mathrm{Tr\, } ((\boldsymbol{\sigma}- \boldsymbol{\sigma}') U(g)) = 0$ for all $g \in G$.  But the representation $U_0$ is irreducible, then  $(\boldsymbol{\sigma}- \boldsymbol{\sigma}')  = 0$.

Moreover, because the representation $U_0$ is square integrable, the function $\chi_\sigma (g) = \mathrm{Tr\,}(\boldsymbol{\sigma} U(g))$ is square integrable and the map $\mathcal{F}_{U_0}(\boldsymbol{\sigma}) (g) = \mathrm{Tr\,}(\boldsymbol{\sigma} U_0(g))$ maps states in square integrable functions.  Finally, because of Eq. (\ref{eq:reproducingG}), the reproducing property (\ref{eq:reproducing}) of the kernel is verified.  
\end{proof}

Then we will proof the reconstruction formula, the particular instance of the general reconstruction equation, Sect. \ref{sec:reconstruction}, Eq. (\ref{reconstruction_state_D}),  for density operators:

\begin{theorem}\label{thm:reconstruction_density}
Under the conditions stated previously, we have:
\begin{equation}\label{eq:reconstruction_operator}
\boldsymbol{\sigma} = \int_G \mathrm{Tr\,}(\boldsymbol{\sigma} U_0(g)) D_0(g)\, \diff \mu_G(g) = d \int_G \mathrm{Tr\,}(\boldsymbol{\sigma} U_0(g)) U_0(g)^\dagger \diff \mu_G(g) \, ,
\end{equation}
for all density operators $\boldsymbol{\sigma}$ in $\mathcal{H}_{\rho_0}$.
\end{theorem}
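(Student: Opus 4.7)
The plan is to realize the right-hand side of \eqref{eq:reconstruction_operator} as $\mathcal{W}^\ast\mathcal{W}(\boldsymbol{\sigma})$ for a Wigner-type isometry $\mathcal{W}$ from the Hilbert--Schmidt ideal $\mathcal{B}_2(\mathcal{H}_{\rho_0})$ into $L^2(G,\mu_G)$ given by $\mathcal{W}(\mathbf{B})(g)=\sqrt{d}\,\Tr\!\big(\mathbf{B}\,U_0(g)\big)$. Since any density operator $\boldsymbol{\sigma}$ is trace class, hence Hilbert--Schmidt, the identity $\mathcal{W}^\ast\mathcal{W}(\boldsymbol{\sigma})=\boldsymbol{\sigma}$ delivers the reconstruction formula once the integral expression of $\mathcal{W}^\ast$ has been identified.

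First I would verify that $\mathcal{W}$ is an isometry. For rank-one operators $\mathbf{B}_a=|\alpha_a\rangle\langle\beta_a|$, $a=1,2$, one has $\mathcal{W}(\mathbf{B}_a)(g)=\sqrt{d}\,\langle\beta_a|U_0(g)|\alpha_a\rangle$, which is square integrable by Thm.\,\ref{thm:square}, and the orthogonality relation \eqref{eq:orthogonality} applied to these coefficient functions gives $\langle\mathcal{W}(\mathbf{B}_1),\mathcal{W}(\mathbf{B}_2)\rangle_{L^2(G,\mu_G)}=\langle\mathbf{B}_1,\mathbf{B}_2\rangle_{HS}$. Because the linear span of rank-one operators is dense in $\mathcal{B}_2(\mathcal{H}_{\rho_0})$ with the Hilbert--Schmidt norm, this identity extends by sesquilinearity and continuity to all pairs of Hilbert--Schmidt operators, so $\mathcal{W}$ is indeed an isometry.

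Next I would compute the adjoint $\mathcal{W}^\ast$ explicitly. Unwinding $\langle\mathcal{W}^\ast f,\mathbf{B}\rangle_{HS}=\langle f,\mathcal{W}\mathbf{B}\rangle_{L^2(G,\mu_G)}$ and using cyclicity of the trace identifies it as $\mathcal{W}^\ast f=\sqrt{d}\int_G f(g)\,U_0(g)^\dagger\,\diff\mu_G(g)$, interpreted weakly: for any $\phi,\chi\in\mathcal{H}_{\rho_0}$ the scalar integrand $f(g)\langle\phi|U_0(g)^\dagger|\chi\rangle$ is a product of two $L^2(G,\mu_G)$ functions (the second one being a coefficient of the square integrable representation $U_0$) and is therefore absolutely integrable by Cauchy--Schwarz. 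The isometry property then yields $\mathcal{W}^\ast\mathcal{W}=\mathrm{Id}$ on $\mathcal{B}_2(\mathcal{H}_{\rho_0})$; substituting $\mathcal{W}(\boldsymbol{\sigma})(g)=\sqrt{d}\,\Tr\!\big(\boldsymbol{\sigma}\,U_0(g)\big)$ and $D_0(g)=d\,U_0(g)^\dagger$ in $\mathcal{W}^\ast\mathcal{W}(\boldsymbol{\sigma})=\boldsymbol{\sigma}$ produces exactly \eqref{eq:reconstruction_operator}. The main delicate points are the weak-sense interpretation of the operator-valued integral and the density/polarization step used to extend the isometry from rank-one to arbitrary Hilbert--Schmidt operators; both are routine consequences of the square integrability hypothesis and of Thm.\,\ref{thm:square}.
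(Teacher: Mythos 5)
Your proof is correct, but it takes a genuinely different route from the paper's. The paper first establishes, via Schur's lemma, the trace formula $\mathrm{Tr\,}(\mathbf{A})\,\mathbbm{1} = d\int_G U_0(g)\mathbf{A}U_0(g)^\dagger\,\diff\mu_G(g)$, and then shows that the candidate operator $\boldsymbol{\sigma}'$ defined by the right-hand side of \eqref{eq:reconstruction_operator} satisfies $\mathrm{Tr\,}(\mathbf{A}\boldsymbol{\sigma}')=\mathrm{Tr\,}(\mathbf{A}\boldsymbol{\sigma})$ for every trace-class $\mathbf{A}$ through a chain of manipulations with the reproducing kernel $\kappa(g,g')$ of the frame $\{U_0(g)\psi_0\}$, using Eqs.\ \eqref{eq:kappakappa}, \eqref{eq:Trpsigpsig} and \eqref{eq:trace}. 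You instead package the whole argument into the single statement that $\mathbf{B}\mapsto\sqrt{d}\,\Tr\!\big(\mathbf{B}\,U_0(\cdot)\big)$ is an isometry of the Hilbert--Schmidt class into $L^2(G,\mu_G)$, so that the reconstruction formula is literally $\mathcal{W}^\ast\mathcal{W}=\mathrm{Id}$. This is cleaner and gives slightly more: the formula holds for every Hilbert--Schmidt operator, not only for density operators. What the paper's longer computation buys is that it stays entirely within the frame/reproducing-kernel formalism built earlier in the section, which is what allows the authors to assert at the same time that $(U_0,D_0)$ is a tomographic pair in the sense of Definition \ref{biortho}.

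Two points in your argument deserve care. First, the identity $\langle\mathcal{W}\mathbf{B}_1,\mathcal{W}\mathbf{B}_2\rangle_{L^2}=\langle\mathbf{B}_1,\mathbf{B}_2\rangle_{HS}$ on rank-one operators requires the orthogonality relations with the pairing $\int_G\langle\phi_1,\pi(g)\psi_1\rangle\langle\pi(g)\phi_2,\psi_2\rangle\,\diff\mu_G(g)=d^{-1}\langle\phi_1,\psi_2\rangle\langle\phi_2,\psi_1\rangle$. If one applies Eq.\ \eqref{eq:orthogonality} literally as printed, with $\langle\phi_1,\phi_2\rangle\langle\psi_1,\psi_2\rangle$ on the right, one obtains $\langle\beta_2,\alpha_1\rangle\langle\alpha_2,\beta_1\rangle$ instead of $\langle\mathbf{B}_1,\mathbf{B}_2\rangle_{HS}$. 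The corrected pairing is the standard Duflo--Moore form, and it is the one the paper itself uses in its tight-frame computation, so this is a typo in \eqref{eq:orthogonality} that your argument silently fixes --- but you should write out the relation you actually use. Second, for a general Hilbert--Schmidt $\mathbf{B}$ the product $\mathbf{B}\,U_0(g)$ is Hilbert--Schmidt but need not be trace class, so $\Tr\!\big(\mathbf{B}\,U_0(g)\big)$ is not defined pointwise; $\mathcal{W}$ should be defined on finite-rank (or trace-class) operators first and then extended by continuity. Since the theorem only concerns density operators, which are trace class, neither point affects the validity of your conclusion.
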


In order to proof this theorem, we will need the following extension of the trace formula (\ref{eq:trace}):

\begin{proposition}
Let $\mathbf{A}$ be a trace class operaton on $\mathcal{H}_{\rho_0}$, then, under the conditions stated previously, we have:
\begin{equation}\label{eq:trace_operator}
\mathrm{Tr\,}(\mathbf{A}) \, \mathbbm{1} = d\, \int_G U_0(g) \mathbf{A} U_0(g)^\dagger \diff \mu_G(g) \, .
\end{equation}
\end{proposition}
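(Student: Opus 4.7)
The plan is to apply a standard Schur's lemma argument, computing the value of the resulting scalar from the orthogonality relations of Theorem \ref{thm:square}.

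First, I would set $\mathbf{B} = \int_G U_0(g)\mathbf{A} U_0(g)^\dagger \diff \mu_G(g)$, interpreted in the weak operator sense. The preliminary check is that for any $|\chi\rangle, |\chi'\rangle \in \mathcal{H}_{\rho_0}$ the scalar function $g \mapsto \langle \chi | U_0(g)\mathbf{A} U_0(g)^\dagger | \chi'\rangle$ is integrable. For rank-one $\mathbf{A} = |\phi\rangle\langle\phi'|$, this factors as a product of two coefficient functions of $U_0$, which are in $L^2(G,\mu_G)$ by square-integrability, so Cauchy--Schwarz gives integrability. For general trace-class $\mathbf{A}$, one approximates by finite-rank operators in trace norm and controls everything using $\|U_0(g)\| = 1$.

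Second, I would show $\mathbf{B}$ commutes with every $U_0(h)$. Conjugating by $U_0(h)$ and changing the variable $g \mapsto h^{-1}g'$, the left-invariance of $\mu_G$ (and, when needed at the right, unimodularity of $G$) yields
\begin{equation*}
U_0(h) \mathbf{B} U_0(h)^\dagger = \int_G U_0(hg)\mathbf{A} U_0(hg)^\dagger \diff\mu_G(g) = \mathbf{B}.
\end{equation*}
Because $U_0$ is irreducible, Schur's lemma then forces $\mathbf{B} = c \mathbbm{1}$ for some scalar $c$ depending (linearly) on $\mathbf{A}$.

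Third, I would identify $c$. Pick any unit vector $|\chi\rangle$ (for instance the cyclic vector $|\psi_0\rangle$); then $c = \langle \chi| \mathbf{B}|\chi\rangle$. For a rank-one $\mathbf{A} = |\phi\rangle\langle\phi'|$ this matrix element equals
\begin{equation*}
\int_G \langle \chi, U_0(g)\phi\rangle \, \langle U_0(g)\phi', \chi \rangle \, \diff\mu_G(g),
\end{equation*}
which is exactly the orthogonality integral \eqref{eq:orthogonality} with $(\phi_1,\psi_1,\phi_2,\psi_2) = (\chi,\phi,\phi',\chi)$. It therefore collapses to $d^{-1}\|\chi\|^2\,\langle\phi',\phi\rangle = d^{-1}\mathrm{Tr}(\mathbf{A})$. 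Thus $d\mathbf{B} = \mathrm{Tr}(\mathbf{A})\,\mathbbm{1}$ for all rank-one $\mathbf{A}$.

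Finally, linearity extends the identity to finite-rank operators, and continuity in the trace norm extends it to all trace-class $\mathbf{A}$: the map $\mathbf{A}\mapsto d\int_G U_0(g)\mathbf{A} U_0(g)^\dagger\diff\mu_G(g)$ is bounded on each matrix element (uniformly in $g$ by $\|U_0(g)\|=1$) and coincides with $\mathrm{Tr}(\mathbf{A})\mathbbm{1}$ on a dense subspace. The main technical nuisances I anticipate are the weak-integrability bookkeeping and, to a lesser extent, keeping straight the conjugation conventions so that the orthogonality relation is applied with the correct entries; once the rank-one case is established these are routine.
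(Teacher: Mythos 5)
Your proof is correct and follows essentially the same route as the paper: interpret the integral weakly, show that it intertwines $U_0$ via left invariance of the Haar measure, invoke Schur's lemma, and then pin down the resulting scalar using the formal degree. The only (cosmetic) difference is in how the scalar is identified---the paper evaluates $\langle \psi_0, C\psi_0\rangle$ through its trace formula \eqref{eq:trace}, whereas you reduce to rank-one operators and apply the orthogonality relations \eqref{eq:orthogonality} directly before extending by linearity and trace-norm density; both rest on the same orthogonality relations, and your version keeps the factor of $d$ bookkeeping somewhat cleaner.
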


\begin{proof}
Notice that, because the group $G$ is unimodular and the representation $U_0$ is square integrable, the operator 
$$
C = \int_G U_0(g) \mathbf{A} U_0(g)^\dagger \diff \mu_G(g) \, ,
$$ 
is well-defined and intertwines the representation $U_0$, i.e., $C U_0(g) = U_0(g) C$, for all $g \in G$, then because $U_0$ is irreducible $C$ is a scalar multiple of the identity, $c \mathbbm{1}$. Then $\langle \psi_0 , c \mathbbm{1} \psi_0 \rangle = c d$ with $|\psi_0\rangle$ unitary. Hence, because of (\ref{eq:trace}), we get:
$$
c d = \langle \psi_0, \int_G U_0(g) \mathbf{A} U_0(g)^\dagger \diff \mu_G(g) \psi_0 \rangle =  \int_G \kappa( \mathbf{A} ,g,g) \diff \mu_G(g) = \mathrm{Tr\, }(\mathbf{A}) \, ,
$$
and the conclussion follows.
\end{proof}

Noe we can prove Thm. \ref{thm:reconstruction_density} (the proof is inspired in the proof in \cite[Appendix]{Da03b}).

\begin{proof}(Theorem \ref{thm:reconstruction_density})   We will denote by $\boldsymbol{\sigma}'$ the r.h.s. of Eq. (\ref{eq:reconstruction_operator}), that is:
\begin{equation}\label{eq:sigma'}
\boldsymbol{\sigma}' = d\, \int_G \mathrm{Tr\,}(\boldsymbol{\sigma} U_0(g)) U_0(g)^\dagger \diff \mu_G(g) \, ,
\end{equation}
If we multiply the previous Eq. (\ref{eq:sigma'}) on the left  by a trace class operator $\mathbf{A}$ and we apply the  trace formula Eq. (\ref{eq:trace_operator}) to the r.h.s. we get:
\begin{eqnarray}\label{eq:main_eq_proof}
&& \int_G \mathrm{Tr\,}(\boldsymbol{\sigma} U_0(g)) \mathbf{A} U_0(g)^\dagger \diff \mu_G(g) =  \nonumber\\
&& = \int_G \int_G U_0(g') \boldsymbol{\sigma}U_0(g)  U(g')^\dagger \mathbf{A} U_0(g)^\dagger \diff \mu_G(g) \diff \mu_G(g') \, .
\end{eqnarray}
Then, using Fubini's theorem and the trace formula again on the r.h.s. of (\ref{eq:main_eq_proof}), we get:
\begin{equation}\label{eq:intermediate}
\int_G \mathrm{Tr\,}(\boldsymbol{\sigma} U_0(g)) \mathbf{A} U_0(g)^\dagger \diff \mu_G(g) = \int_G \mathrm{Tr\,}(U_0(g')^\dagger \mathbf{A}) U_0(g') \boldsymbol{\sigma}  \diff \mu_G(g') \,.
\end{equation}
Then, using the previous equation (\ref{eq:intermediate}), we get:
\begin{equation}\label{eq:last_step}
\mathrm{Tr\,}(\mathbf{A} \boldsymbol{\sigma}') =  \int_G \mathrm{Tr\,}(\mathbf{A} U_0(g')^\dagger )  \mathrm{Tr\,}(U_0(g') \boldsymbol{\sigma} ) \diff \mu_G(g') \,.
\end{equation}
where we have used the circularity of the trace $\mathrm{Tr\,}(\mathbf{A} U_0(g')^\dagger) = \mathrm{Tr\,}(U_0(g')^\dagger\mathbf{A} )$ (note that $\mathbf{A} U_0(g)$ is a trace-class operator).  Expanding the traces inside the r.h.s. of Eq. (\ref{eq:last_step}), integrating with respect to $g'$, and using the identities (\ref{eq:Trpsigpsig}) and (\ref{eq:trace_operator}), we obtain:
\begin{eqnarray*}
&& \int_G \mathrm{Tr\,}(\mathbf{A} U_0(g')^\dagger )  \mathrm{Tr\,}(U_0(g') \boldsymbol{\sigma} ) \diff \mu_G(g') = 
\\ && =  \int_G \langle \psi_g | \mathbf{A} U_0(g')^\dagger | \psi_g \rangle  \langle \psi_{g''} | U_0(g') \boldsymbol{\sigma} |\psi_{g''}\rangle \diff \mu_G(g') \diff \mu_G(g) \diff \mu_G (g'') = \\
&& = \int_G \langle \psi_g | \mathbf{A} \left( \int_G U_0(g')^\dagger | \psi_g \rangle  \langle \psi_{g''} | U_0(g') \diff \mu_G(g') \right) \boldsymbol{\sigma} |\psi_{g''}\rangle \diff \mu_G(g) \diff \mu_G (g'')  = \\
&& = \int_G \langle \psi_g | \mathbf{A}\, \kappa (g'',g) \boldsymbol{\sigma} |\psi_{g''}\rangle \diff \mu_G(g) \diff \mu_G (g'') =
\\ && =  \frac{1}{d} \int_G \kappa (g'',g) \kappa (\mathbf{A} \boldsymbol{\sigma}, g, g'' ) \diff \mu_G(g) \diff \mu_G (g'')  \\ && = \frac{1}{d} 
 \int_G \kappa (\mathbf{A} \boldsymbol{\sigma}, g, g ) \diff \mu_G(g) = \frac{1}{d}  \mathrm{Tr\,} ( \mathbf{A} \boldsymbol{\sigma} )\, .
\end{eqnarray*}
where, in the fourth step, we have used the identities (\ref{eq:trace_operator}) and (\ref{eq:Trpsigpsig}) to compute $ \int_G U(g')^\dagger | \psi_g \rangle  \langle \psi_{g''} | U_0(g') \diff \mu_G(g') $, and  the identities (\ref{eq:kappakappa}) and (\ref{eq:trace}) in the  last two steps. Then we conclude that:
$$
\mathrm{Tr\,}(\mathbf{A} \boldsymbol{\sigma}')  = \mathrm{Tr\,} ( \mathbf{A} \boldsymbol{\sigma} )\, ,
$$
for all trace class operators $\mathbf{A}$, but Hilbert-Schmidt operators are contained in the ideal of trace-class operators, and density operators are Hilbert-Schmidt operators, then the last identity can be written as:
$$
\langle \mathbf{A}, \boldsymbol{\sigma}' \rangle  = \langle \mathbf{A} ,\boldsymbol{\sigma} \rangle\, ,
$$
for all $\mathbf{A}$ Hilbert-Schmidt operators, and we conclude that $\boldsymbol{\sigma}'  = \boldsymbol{\sigma}$.
\end{proof}


\section{Examples and applications}\label{sec:examples}

\subsection{Compact and finite groups}

\subsubsection{Compact groups}

 Compact groups, being unimodular (the left and right Haar measures coincide), are particularly well suited to provide tomographic descriptions of quantum states by applying the theory developed in Sec. \ref{section_particular group}.
Moreover any irreducible representation is finite-dimensional and square integrable. 

Then, if $U_0$ is an irreducible representation of the compact Lie group $G$ on the finite-dimensional Hilbert space $\mathcal{H}_0$ of dimension $n$, the orthogonality relations \eqref{eq:orthogonality} become:
\begin{equation}\label{eq:orthogonality_compact}
\int_G \langle \phi_1 ,U_0 (g) \psi_1 \rangle \langle U_0 (g) \phi_2 , \psi_2\rangle \diff \mu_G(g) = \frac{1}{n} \langle \phi_1 , \phi_2 \rangle \langle \psi_1 , \psi_2\rangle \, ,
\end{equation}
with $\mu_G$ the normalized Haar measure on the group, that is, $\int_G\diff\mu_G(g)=1$.  
We will use the representation $U_0$ as the tomographic set on the $C^*$-algebra $\mathcal{B}(\mathcal{H}_0)$ (which is finite-dimensional).   States are density operators $\boldsymbol{\sigma}$, that is, normalized, Hermitean non-negative operators.
The sampling function, or smeared character of the representation $U_0$, associated with the state $\boldsymbol{\sigma}$ will be $\chi_\sigma (g) = \mathrm{Tr\,}(\boldsymbol{\sigma}U_0(g))$ and it defines a tomographic set.    
If we define $D(g) =  n U(g)^\dagger$, we will have that the kernel function $\kappa$ will be given by:
$$
\kappa (g,g') = n \mathrm{Tr\,}(U_0(g^{-1}g') ) \, ,
$$
amd the reconstruction formula 
for density operators $\boldsymbol{\sigma}$ on $\mathcal{H}_0$ becomes:
\begin{equation}\label{rec_states_compact_schur}
\boldsymbol{\sigma} = n\hspace{-0.15cm}\int\limits_{\hspace{.3cm}G}\hspace{-0.1cm}\chi_\sigma(g)U(g)^\dagger\diff\mu(g) \, .
\end{equation}

Tomograms were defined, recall formula \eqref{eq:TrsigmaX}, as $\mathcal{W}_\sigma (X, \xi) = \mathrm{Tr\,}(\boldsymbol{\sigma}(X \mathbbm{1} - \boldsymbol{\xi}))$.   The spectral measure associated with the operator $\boldsymbol{\xi}$ is singular and concentrated on the (finite) spectrum $\sigma (\boldsymbol{\xi})$ of the operator $\boldsymbol{\xi}$, i.e., $E_{\boldsymbol{\xi}} = \sum_{\lambda \in \sigma ( \boldsymbol{\xi})}  P_\lambda\, \delta (\lambda' - \lambda)$,  with $P_\lambda$ the orthogonal projector onto the eigenspace of the eigenvalue $\lambda$. Then, it is easy to check that the operator $\delta (X \mathbbm{1} - \boldsymbol{\xi})$ becomes:
\begin{equation}\label{eq:delta_compact}
\delta (X \mathbbm{1} - \boldsymbol{\xi}) = \sum_{\lambda \in \sigma ( \boldsymbol{\xi})} \delta_{X,\lambda}  P_\lambda \, ,
\end{equation}
with $\delta_{X,\lambda} = 1$, if $X = \lambda$, and zero otherwise.  Let us summarize these results as the following corollary:

\begin{corollary}\label{reconstruction_thm}
Let $G$ be a compact connected Lie group and $(\mathcal{A},U)$ a unitary representation of $G$ on a $C^*$--algebra $\mathcal{A}$.
Let $\rho_0$ be a fiducial state such that the unitary representation $U_0(g)=\pi_{\rho_0}\big(U(g)\big)$ of $G$ on the GNS Hilbert space $\mathcal{H}_0$ defined by $\rho_0$ is irreducbie.  Let $n$ be the dimension of $\mathcal{H}_0$.
Then, given a density operator $\boldsymbol{\sigma}$ on $\mathcal{H}_0$, its tomograms are given by:
\begin{equation}\label{eq:tom_compact}
\mathcal{W}_{\boldsymbol{\sigma}} (X, \xi) = \sum_{\lambda \in \sigma ( \boldsymbol{\xi})} \delta_{X,\lambda}  \mathrm{Tr\,}(\boldsymbol{\sigma} P_\lambda) , \, \qquad  \forall \xi \in\mathfrak{g} \, ,
\end{equation}
and the state can be reconstructed from its tomograms by means of
\begin{equation}\label{eq:recon_tom_com}
\boldsymbol{\sigma} = n \int \e^{itX}\mathcal{W}_\sigma(X;\xi) U_0(\exp (-t\xi)) \Delta (t, \xi) \, \diff X \, \diff t \, \diff \Omega (\xi) \, ,
\end{equation}
with $\diff \Omega (\xi)$ the Lebesgue measure on the Lie algebra $\mathfrak{g}$, and $ \Delta (t, \xi)$ the Jacobian of the restriction of the exponential map to the unit sphere on $\mathfrak{g}$. 
 \end{corollary}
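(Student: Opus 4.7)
The plan is to deduce both formulas from results already established in Sections 4--5, specializing to the compact setting where everything simplifies dramatically. First, for the tomogram formula \eqref{eq:tom_compact}, I would invoke Theorem \ref{tom_cart} to write $\mathcal{W}_{\boldsymbol{\sigma}}(X;\xi) = \Tr(\boldsymbol{\sigma}\,\delta(X\mathbbm{1}-\boldsymbol{\xi}))$ and then use the fact that for a compact Lie group every irreducible continuous unitary representation is finite-dimensional, so $U_0$ acts on the $n$-dimensional Hilbert space $\mathcal{H}_0$ and each infinitesimal generator $\boldsymbol{\xi}=\langle\boldsymbol{\Theta},\xi\rangle$ is a Hermitian operator with purely discrete spectrum $\sigma(\boldsymbol{\xi})$. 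Its spectral resolution is therefore $E_{\boldsymbol{\xi}}(\diff\lambda')=\sum_{\lambda\in\sigma(\boldsymbol{\xi})}P_\lambda\,\delta(\lambda'-\lambda)\diff\lambda'$, and substituting this into the definition \eqref{eq:deltaX-T} of $\delta(X\mathbbm{1}-\boldsymbol{\xi})$ yields \eqref{eq:delta_compact}. Taking the trace against $\boldsymbol{\sigma}$ and using linearity gives \eqref{eq:tom_compact} immediately.

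For the reconstruction formula \eqref{eq:recon_tom_com}, the strategy is to start from Theorem \ref{thm:reconstruction_density}, which under the hypotheses (unimodularity of the compact $G$ together with irreducibility and square integrability of $U_0$, both automatic here) gives
\begin{equation*}
\boldsymbol{\sigma} = d\int_{G}\chi_\sigma(g)\,U_0(g)^\dagger\,\diff\mu_G(g),
\end{equation*}
where $d$ is the formal degree. I would first identify $d=n$ by specializing the orthogonality relations \eqref{eq:orthogonality_compact} (obtained by comparing \eqref{eq:orthogonality} with the normalized Haar measure on a compact group) and then transport the integral from $G$ to $\mathfrak{g}$ via the exponential map. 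Since $G$ is compact and connected, $\exp\colon\mathfrak{g}\to G$ is surjective, so writing any $\xi\in\mathfrak{g}$ in polar coordinates $\xi\mapsto(t,\hat{\xi})$ with $\hat{\xi}$ on the unit sphere of $\mathfrak{g}$ gives a change of variables whose Jacobian between $\diff\mu_G(g)$ and $\diff t\,\diff\Omega(\hat{\xi})$ is, by definition, the density $\Delta(t,\hat{\xi})$.

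Combining these two steps with the unitarity relation $U_0(\exp(t\xi))^\dagger = U_0(\exp(-t\xi))$ and then inserting the Fourier representation \eqref{smeared_character_Fourier_new} of the smeared character, $\chi_\sigma(\exp(t\xi))=\int e^{itX}\mathcal{W}_\sigma(X;\xi)\,\diff X$, yields exactly \eqref{eq:recon_tom_com}, provided one is allowed to interchange the $X$, $t$ and $\xi$ integrations. This interchange is the only genuinely delicate point; I would justify it by Fubini's theorem, using that $\boldsymbol{\sigma}$ is trace class, that $\mathcal{W}_\sigma(\cdot;\xi)$ is a probability density for each $\xi$ by Proposition \ref{tom_conds_prop}, and that the Jacobian $\Delta(t,\xi)$ is bounded on the compact support arising from the surjectivity of $\exp$ on $G$.

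The main obstacle I anticipate is not the algebra but the measure-theoretic care needed around the exponential map: one should either restrict the $(t,\hat{\xi})$ integration to a fundamental domain where $\exp$ is a diffeomorphism (since the cut locus has measure zero, this does not affect integrals) or absorb the multiplicity into $\Delta(t,\xi)$. Once this is handled, the two formulas follow by direct substitution from Theorems \ref{tom_cart} and \ref{thm:reconstruction_density} together with \eqref{eq:orthogonality_compact}, making the corollary essentially a specialization rather than a new result.
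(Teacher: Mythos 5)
Your proposal is correct and follows essentially the same route as the paper, which presents this corollary explicitly as a summary of the immediately preceding computations: the tomogram formula comes from the singular spectral measure $E_{\boldsymbol{\xi}}=\sum_{\lambda\in\sigma(\boldsymbol{\xi})}P_\lambda\,\delta(\lambda'-\lambda)$ giving Eq.~(\ref{eq:delta_compact}), and the reconstruction formula comes from Thm.~\ref{thm:reconstruction_density} with $d=n$ (via the normalized orthogonality relations (\ref{eq:orthogonality_compact})) combined with the Fourier representation (\ref{smeared_character_Fourier_new}) of $\chi_\sigma$ and the exponential-map change of variables carrying $\diff\mu_G$ to $\Delta(t,\xi)\,\diff t\,\diff\Omega(\xi)$. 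Your additional care about the non-injectivity of $\exp$ and the Fubini interchange only makes explicit what the paper leaves implicit.
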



\subsubsection{Spin Tomography: the case of the $SU(2)$ group}\label{section_Spin_Tomography}

The compact group $SU(2)$ is the group that underlies the description of the states of a particle with spin $1/2$\,, (see for instance \cite[Chap. 5]{Ga90}, \cite[Chap. 11]{Es04}).  It also provides the background for the Jordan-Schwinger map that makes it so relevant in quantum optics and in quantum information theory. The states of a particle with spin $1/2$ may be represented using the so called \textit{Bloch's sphere}, Fig. \ref{B_sphere}.
\begin{figure}[h]
\centering
\includegraphics[width=5cm]{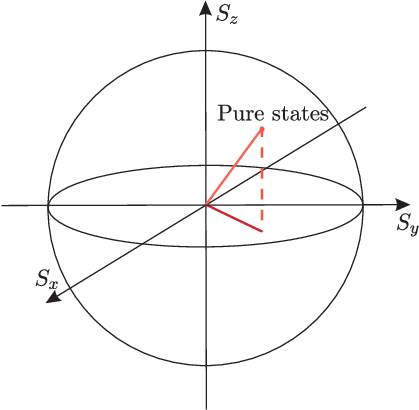}
\caption{Bloch's sphere representing states of a particle with spin $1/2$.}
\label{B_sphere}
\end{figure}
e.g., density states can be parametrized as:
\begin{equation}
\boldsymbol{\rho}=\frac{1}{2}\begin{pmatrix}
1+r\cos\theta & \sin\theta\e^{-i\phi}\\
\sin\theta\e^{i\phi} & 1-r\cos\theta
\end{pmatrix},
\end{equation}
with $0\leq r\leq 1$, $0\leq\theta\leq\pi$ and $0\leq\phi\leq 2\pi$. The states in the surface, i.e., the states with $r=1$, are the pure states of the system.

The $SU(2)$ group is a compact Lie group, hence given an irreducible representation of the group,  its tomograms are given by (\ref{eq:tom_compact}) and we can reconstruct a given state $\boldsymbol{\rho}$ by using the reconstruction equation \eqref{eq:recon_tom_com}.  The irreducible representations of the group $SU(2)$ are finite-dimensional of dimension $2j +1$ where $j$ is a half-integer.    The irreducible representation of dimension 2, defines the group itself as the set of $2\times 2$ unitary matrices $U$ of determinant 1 acting on $\mathbb{C}^2$ by standard matrix multiplication.  The elements of the group $U$ can be written in terms of the exponential of the elements of its Lie algebra $\mathfrak{su}(2)$ as:
\begin{equation}
U(s_x,s_y,s_z)=\e^{i(s_x\textbf{S}_x+s_y\textbf{S}_y+s_z\textbf{S}_z)},
\end{equation}
where the generators corresponding to the spin along the axis $x,y,z$ are:
\begin{equation}\label{spin_operators_tom}
\textbf{S}_a = \frac{\hbar}{2}\sigma_a\, ,  \qquad a = x,y,z\, .
\end{equation}
and $\sigma_a$, $ a = x,y,z$, are the Pauli matrices:
\begin{equation}\label{pauli_matrices}
\sigma_x=\begin{pmatrix}
0 & 1\\
1 & 0
\end{pmatrix},\quad\sigma_y=\begin{pmatrix}
0 & -i\\
i & 0
\end{pmatrix},\quad\sigma_z=\begin{pmatrix}
1 & 0\\
0 & -1
\end{pmatrix}.
\end{equation}

The tomograms of a state $\boldsymbol{\rho}$ will be:
\begin{equation}
\mathcal{W}_\rho(X,s_x,s_y,s_z)=\Tr\!\Big(\boldsymbol{\rho}\, \delta\big(X\mathbbm{1}-\frac{s_x}{2}\sigma_x-\frac{s_y}{2}\sigma_y-\frac{s_z}{2}\sigma_z\big)\Big).
\end{equation}
Then using Eq. (\ref{eq:delta_compact}) and taking into account that the eigenvalues of the operator $\mathbf{S} = \boldsymbol{s} \cdot \boldsymbol{\sigma} = s_x \sigma_x +  s_y \sigma_y + s_z \sigma_z$ are $\lambda=\pm|\boldsymbol{s}|/2$, with eigenvectors:\begin{equation}
|v_{+}\rangle=\frac{1}{\sqrt{2|\boldsymbol{s}|(|\boldsymbol{s}|-s_z)}}
\begin{pmatrix}
-s_x+is_y\\
s_z-|\boldsymbol{s}|
\end{pmatrix}\quad\text{and}\quad |v_{-}\rangle=\frac{1}{\sqrt{2|\boldsymbol{s}|(|\boldsymbol{s}|-s_z)}}
\begin{pmatrix}
s_z-|\boldsymbol{s}|\\
s_x+is_y
\end{pmatrix} \, ,
\end{equation}
the tomogram $\mathcal{W}_\rho(X,s_x,s_y,s_z)$ can be written as:
\begin{equation}
\mathcal{W}_\rho(X,s_x,s_y,s_z) = \delta(X -  \frac{1}{2}|\boldsymbol{s}|                                                                                                                                                                                                                                                                                                                                                                                                                                                                                                                                                                                                                                                                                                                                                                                                                                                                                                    \langle v_{+}|\boldsymbol{\rho}|v_{+}\rangle +
\delta(X + \frac{1}{2}|\boldsymbol{s}|                                                                                                                                                                                                                                                                                                                                                                                                                                                                                                                                                                                                                                                                                                                                                                                                                                                                                                    \langle v_{-}|\boldsymbol{\rho}|v_{-}\rangle \, ,
\end{equation}
and therefore, we finally get:
\begin{multline}
\mathcal{W}_\rho(X,s_x,s_y,s_z)=
\frac{1}{2}\left(\Big(1+\frac{s_z}{|\boldsymbol{s}|}r\cos\theta\Big)+\sin\theta\Big(\frac{s_x}{|\boldsymbol{s}|}\cos\phi+\frac{s_y}{|\boldsymbol{s}|}\sin\phi\Big)\right) \delta (X - \frac{1}{2}|\boldsymbol{s}|) \nonumber\\
\hspace{-0.6cm}+\frac{1}{2}\left(\Big(1-\frac{s_z}{|\boldsymbol{s}|}r\cos\theta\Big)-\sin\theta\Big(\frac{s_x}{|\boldsymbol{s}|}\cos\phi+\frac{s_y}{|\boldsymbol{s}|}\sin\phi\Big)\right)\delta (X + \frac{1}{2}|\boldsymbol{s}|)\, . 
\end{multline}


\subsubsection{Finite groups}
Finite groups are particular instances of compact groups where the previous analysis becomes particularly simple albeit interesting on itself.

Let $(U, \mathcal{H})$ be an irreducible unitary representations of the finite group $G$ of order $|G|$ on the finite-dimensional Hilbert space $\mathcal{H}$, $n=\textrm{dim}(\mathcal{H})$.  Let $e_i$, $i=1,\ldots,n$, be a given orthonormal basis on such space. We will denote by $U_{ij}(g)$ the elements of the unitary matrix associated to $U(g)$, $g\in G$, in the previous basis, i.e.,
\begin{equation}
U_{ij}(g)=\langle e_i,U(g)e_j\rangle \, .
\end{equation}

Schur's orthogonality relations is the form that the orthogonality relations for square integrable representations (\ref{eq:orthogonality}) take in the case of finite groups, and they assert that given two unitary irreducible representations $U^{(a)}$ and $U^{(b)}$ of dimensions $n_a$ and $n_b$, respectively, then:
\begin{equation}\label{schur_ortho}
\frac{n_a}{|G|}\sum_{g\in G} U_{ij}^{(a)}(g)^\dagger U_{rs}^{(b)}(g)=\delta_{ab}\delta_{ir}\delta_{js} \, .
\end{equation}
Therefore, if we choose as the dual tomographic map the Hermitean conjugate of the $n$-dimensional irreducible representattion $U(g)$, $D(g)= n \, U(g)^\dagger$, we will get:
\begin{equation}\label{biortho_finite_groups}
\frac{1}{|G|}\sum_{g\in G} D_{ij}(g) U_{jk}(g)=\delta_{ik} \, ,
\end{equation}
from which the biorthogonality condition for the pair $U, D$, is easily derived. 
Hence, if $\boldsymbol{\sigma}$ is a density operator, the reconstruction formula \eqref{reconstruction_state_D} becomes a particular instance of (\ref{eq:reconstruction_operator}):
\begin{equation}\label{rec_states_schur}
\boldsymbol{\sigma} = \frac{1}{|G|}\sum_{g\in G}\chi_\sigma (g)\, D(g) =  \frac{n}{|G|}\sum_{g\in G}\chi_\sigma (g)U(g)^\dagger.
\end{equation}

In the case of finite groups, the tomogram of the state $\rho$ can be obtained by using the discrete version of formula \eqref{Fourier_Transform_Tom_chap_2} (see also \cite{Ib11}). Let us transform $U(g)$ into a diagonal matrix $d_g$ by means of the unitary matrix $V_g$:
\begin{equation}
U(g)=V_gd_gV_g^\dagger,\qquad d_g=\textrm{diag}\left[\e^{i\theta_1(g)},\ldots,\e^{i\theta_n(g)}\right],
\end{equation}
then, compute the smeared character of $U(g)$:
\begin{equation}
\chi_\sigma (g)=\Tr\!\left(\boldsymbol{\sigma} V_gd_gV_g^\dagger\right)=\Tr\!\left(d_gV_g^\dagger\boldsymbol{\sigma} V_g\right)=\sum_{m=1}^n\e^{i\theta_m(g)}\left(V_g^\dagger\boldsymbol{\sigma} V_g\right)_{mm}.
\end{equation}
Therefore, the tomograms of the state $\boldsymbol{\sigma}$ are given by:
\begin{equation}\label{disc_tomogram}
\mathcal{W}_\sigma (m;g) = \left(V_g^\dagger\boldsymbol{\sigma} V_g\right)_{mm}.
\end{equation}
These tomograms define a stochastic vector, i.e.,
\begin{equation*}
\sum_{m=1}^n\mathcal{W}_\sigma (m;g)=\hspace{-0.1cm}\sum_{m=1}^n\left(V_g^\dagger\boldsymbol{\sigma} V_g\right)_{mm}\hspace{-0.25cm}=\hspace{0.1cm}\Tr(\boldsymbol{\sigma}) = 1\, ,
\end{equation*}
and they are non-negative:
\begin{equation*}
0\leq\mathcal{W}_\rho(m;g)\leq 1,\qquad m=1,\ldots,n,\qquad\forall g\in G,
\end{equation*}
since the density matrix $\boldsymbol{\sigma}$ is   positive semi-definite.
Thus, we have shown that the smeared characters can be obtained as a Discrete Fourier Transform of the tomograms \eqref{disc_tomogram}:
\begin{equation}
\chi_\sigma (g) = \sum_{m=1}^n\e^{i\theta_m(g)}\mathcal{W}_\sigma (m;g) \, ,
\end{equation}
and we obtain the following inverse Radon transform for quantum tomography on finite groups:
$$
\boldsymbol{\sigma}  =  \frac{n}{|G|}\sum_{g\in G}  \sum_{m=1}^n\e^{i\theta_m(g)}\mathcal{W}_\sigma (m;g) U(g)^\dagger \, .
$$
Let us consider now a subgroup $H\subset G$ of a finite or compact Lie group $G$. The restriction of the representation $U$ to the subgroup $H$, sometimes denoted by $U\!\downarrow\! H$ and called the subduced representation of $U$ to $H$, will be, in general, reducible even if $U$ is irreducible.

Let us suppose that the state $\rho$ satisfies the following orthogonality relations:
\begin{equation}\label{cond_adapted_states}
\Tr\!\big(\boldsymbol{\sigma} U(g)\big)=0,\qquad g\in G\setminus H,
\end{equation}
that is, the inner products with the unitary operators corresponding to the elements of $G$ not in the subgroup $H$ vanish. Therefore, in this case, we have similar formulas to \eqref{rec_states_schur} and \eqref{rec_states_compact_schur} even if the representation $U\!\downarrow\! H$ is reducible:
\begin{equation}\label{adapted_state_a}
\boldsymbol{\sigma} = \frac{n}{|G|}\sum_{g\in H}\chi_\sigma (g)U(g)^\dagger \, ,
\end{equation}
in the finite case, and
\begin{equation}\label{adapted_state_b}
\boldsymbol{\sigma} = n\hspace{-0.15cm}\int\limits_{\hspace{.3cm}H}\hspace{-0.1cm}\chi_\sigma (g)U(g)^\dagger\diff\mu(g) \, ,
\end{equation}
in the compact situation.

Such states will be said to be \textit{adapted states} to the subgroup $H$. These states have interesting properties because they share the same symmetry than the subgroup $H$. For example, they can be used to get the Clebsch--Gordan decomposition of a unitary reducible representation of a finite or compact Lie group (see the numerical algorithm presented in \cite{Ib17}).


\subsubsection{The regular representation}

Another instance that can be treated similarly is when we consider the regular representation of a finite group. The regular representation of the locally compact group $G$ is the unitary representation obtained from the action of the group $G$ on itself, in the Hilbert space of square integrable functions on the group, $\mathcal{H}=L^{2}(G,\mu_G)$, where $\mu_G$ denotes the left(right)-invariant Haar measure by left(right) translations.

The left regular representation $U^{reg}_L(h)$ is defined as follows:
\begin{equation}
\big(U^{reg}_L(h)\psi\big)(g)=\psi(h^{-1}g),\qquad \psi\in L^{2}(G,\mu_G),
\end{equation}
and the right regular representation is defined analogously.

If $G$ is finite, it is clear that $L^{2}(G)$ is isometrically isomorphic with the group algebra $\mathbb{C}[G]$:
\begin{equation}
\mathcal{H}\cong\mathbb{C}[G]=\Big\{|\alpha\rangle=\hspace{-0.05cm}\sum_{g\in G}\alpha_g|g\rangle\,\big|\,\alpha_g\in\mathbb{C}\Big\},
\end{equation}
with inner product $\displaystyle{\langle\alpha,\beta\rangle=\sum_{g\in G}\overline{\alpha}_g\beta_g}$. The action of the group is given by:
\begin{equation}
U_L^{reg}(h)|\alpha\rangle=\hspace{-0.05cm}\sum_{g\in G}\alpha_{h^{-1}g}|g\rangle=\hspace{-0.05cm}\sum_{g'\in G}\alpha_{g'}|hg'\rangle,
\end{equation}
then, we can interpret the left regular representation $U_L^{reg}$ as
\phantomsection\label{left_regular_rep_ps}\begin{equation}\label{left_regular_rep}
U_L^{reg}(h)|g\rangle=|hg\rangle,\qquad\forall g,h\in G.
\end{equation}

From the orthogonality relation satisfied by the regular representation:
\begin{equation}
\Tr\!\left(U_L^{reg}(g)^\dagger U_L^{reg}(g')\right)=n\delta_{g^{-1}g'},
\end{equation}
the character of the representation is easily computed:
\begin{equation}\label{regular_character}
\chi^{reg}(g)=n\delta_g=\left\{\begin{matrix}
n& g=e,\\
0&\mbox{otherwise\,,}
\end{matrix}\right.
\end{equation}
with $n=\dim\mathcal{H} = |G|$. In that case, the reconstruction formula is given by eq.\,\eqref{adapted_state_a}.

For compact groups, we have similar results, however the character $\chi^{reg}$ is now a Dirac delta distribution:
\begin{equation}
\chi^{reg}(g)=\delta(g),\qquad g\in G,
\end{equation}
and the theorem of Harish--Chandra (see \cite{Ar88}) allows to extend the result in eq.\,\eqref{regular_character} to semisimple Lie groups, in which case, the reconstruction formula is again Eq. \eqref{adapted_state_b} with $n=1$.


\subsection{Tomography with the Weyl-Heisenberg group}\label{resume_heis}

Another example, often discussed in applications because of its obvious experimental and historial origins, iss the Weyl-Heisenberg group which, contrary to the examples discussed in the previous section, is a nilpotent non-compact group.

Quantum state tomography, as discussed in the introduction, recall Eq. (\ref{eq:quantum_radon_heisenberg}), relies on the fact that
the position and momentum operators $\textbf{Q},\textbf{P}$, which satisfy the canonical commutation relation\footnote{In physical applications there is a dimensional constant $\hbar$ on the formula, that will be taken to be 1 in what follows.} $[\textbf{Q},\textbf{P}]=i\mathbbm{1}$, determine a realization of the Lie algebra of the Weyl-Heisenberg group (see for instance \cite{Ib09b}  and references therein).

Let $(V,\omega )$ be a symplectic, $2n$-dimensional real vector space, i.e., $\omega$ is a non-degenerate skew-symmetric form on $V$,  and
consider the $(2n+1)$-dimensional Weyl-Heisenberg group $W_n$
which is the central extension by the group $U(1)$  of the Abelian group $V$ with respect to the
2--cocycle defined by $\omega$, that is, the Weyl-Heisenberg group $W_n$
is the set of pairs $g = (\mathbf{v},u) \in V\times U(1)$, $\mathbf{v} \in V$, $u = e^{is} \in \mathbb{C}$,  with the following composition rule:
\begin{equation}
g\circ g^{\prime }= (\mathbf{v} ,u)\circ (\mathbf{v}^{\prime }, u^{\prime })=(\mathbf{v}+\mathbf{v}^{\prime
}, uu^{\prime }\e^{\frac{i}{2}\omega (\mathbf{v},\mathbf{v}^{\prime})}).
\end{equation}
The Haar measure in $W_n$ is given by the product measure of the standard Haar measure on $V$, i.e., the Lebesgue measure associated to the volume form $\omega^{2n}$,  and the biinvariant Haar measure on the group $U(1)$, that is, $ds/2\pi$.

The irreducible unitary representations of the Weyl-Heisenberg group $W_n$
may be constructed from a Weyl system on the symplectic space $(V,\omega )$,
that is, consider a strongly continuous map $W$ that associates to any vector $%
\mathbf{v}\in V$ a unitary operator $W(\mathbf{v})$ acting on a Hilbert space $%
\mathcal{H}$ and satisfying:
\begin{equation}
W(\mathbf{v})W(\mathbf{v}^{\prime })=W(\mathbf{v}+\mathbf{v}^{\prime })\exp
\left( \frac{i}{2}\omega (\mathbf{v},\mathbf{v}^{\prime })\right) ,
\label{weylComp}
\end{equation}
from which we get the Weyl exponentiated form of the canonical commutation relations:
\begin{equation}
W(\mathbf{v})W(\mathbf{v}^{\prime })=W(\mathbf{v}^{\prime })W(\mathbf{v}%
)\exp \left( i\omega (\mathbf{v},\mathbf{v}^{\prime })\right) .
\label{weylComm}
\end{equation}
Von Neumann theorem \cite{Es04} shows that it is always possible to
realize the Hilbert space $\mathcal{H}$ as the space of square integrable functions with support in $\mathcal{L}$ where $\mathcal{L}$ is any
Lagrangian subspace of $V$, i.e., $\mathcal{L}$ is a maximal isotropic subspace of $V$. The choice of a Lagrangian subspace induces the corresponding polarization, $V = \mathcal{L}
\oplus \mathcal{L}^*$, thus, any vector $\mathbf{v} \in V$, has the form $ \mathbf{v} = (\boldsymbol{\mu}, \boldsymbol{\nu}) \in \mathcal{L}\oplus \mathcal{L}^*$. 

The unitary operators $W(\mathbf{v})$ realizing the elements of the group are the usual displacement
operators: $(D(\boldsymbol{\mu},\boldsymbol{\nu})\psi) (\mathbf{x}) = \e^{i \boldsymbol{\mu}\cdot \mathbf{x}}\psi (\mathbf{x} + \boldsymbol{\nu})$, and a unitary
irreducible representation is provided by the expression:
\begin{equation}
U(g)=U(\boldsymbol{\mu},\boldsymbol{\nu},e^{is})=D(\boldsymbol{\mu},\boldsymbol{\nu}) e^{i sI} \, .
\label{reppol}
\end{equation}
or, even more explicitly, 
$$
(U(\boldsymbol{\mu},\boldsymbol{\nu},e^{is})\psi)(\mathbf{x}) = e^{is} \e^{i \boldsymbol{\mu}\cdot \mathbf{x}}\psi (\mathbf{x} + \boldsymbol{\nu}) \, ,
$$
for any function $\psi \in L^2(\mathcal{L}, \diff^n\nu)$.  Such representation is square integrable. In fact, for any $\psi \in L^2(\mathcal{L}, \diff^n \nu)$ the map $(\mathbf{x}, \boldsymbol{\nu})\mapsto \overline{\psi (\mathbf{x}+ \boldsymbol{\nu})}\psi(\mathbf{x})$ is measurable and, as it can be checked easily:
$$
\int_{\mathcal{L}}  \diff^n x \int_{\mathcal{L}} \diff^n \nu \,  |\overline{\psi (\mathbf{x}+ \boldsymbol{\nu})}\psi(\mathbf{x}) |^2 = || \psi ||^4 \, .
$$
Then, the map $(\boldsymbol{\nu}, \boldsymbol{\mu}, e^{is} ) \mapsto \langle U(\boldsymbol{\nu}, \boldsymbol{\mu}, e^{is} )\psi, \psi \rangle$, satisfies:
\begin{eqnarray*}
&& \frac{1}{2\pi}\int_{W_n} \diff^n \nu \, \diff^n \mu \, \diff s\, |\langle U(\boldsymbol{\nu}, \boldsymbol{\mu}, e^{is}) \psi, \psi \rangle|^2 = \\
&& =  \frac{1}{2\pi} \int_0^{2\pi} \diff s \, \int_{\mathcal{L}}\diff^n \nu   \int_{\mathcal{L}^*} \diff^n \mu\, \left| \int_{\mathcal{L}} \diff^n x \e^{-is} \e^{-i \boldsymbol{\mu}\cdot \mathbf{x}} \overline{\psi (\mathbf{x} + \boldsymbol{\nu})} \psi(\mathbf{x})\right|^2  = \\
&& =  \int_{\mathcal{L}}\diff^n \nu   \int_{\mathcal{L}^*} \diff^n \mu\, \left| \int_{\mathcal{L}} \diff^n x \e^{-i \boldsymbol{\mu}\cdot \mathbf{x}} \overline{\psi (\mathbf{x} + \boldsymbol{\nu})} \psi(\mathbf{x})\right|^2  = \\
&& =  \int_{\mathcal{L}}\diff^n \nu  \int_{\mathcal{L}} \diff^n x\, \left| \overline{\psi (\mathbf{x} + \boldsymbol{\nu})} \psi(\mathbf{x})) \right|^2  = || \psi||^4 \, ,
\end{eqnarray*}
and the representation is square integrable with virtual degree $d = 1$.

The self-adjoint operators $\mathbf{Q}$, $\mathbf{P}$ associated to the previous irreducible representation that provide the representation of the Lie algebra $\mathfrak{w}_n$ of the Weyl-Heisenberg group $W_n$ described in Eq. (\ref{selfadjoint_exp}) are given by the so called standard canonical quantization: 
$$
(\mathbf{Q} \psi) (\mathbf{x}) = \mathbf{x} \psi (\mathbf{x}) \, , \qquad (\mathbf{P} \psi) (\mathbf{x}) = -i \frac{\partial \psi}{\partial \mathbf{x} } \, ,
$$
and 
\begin{equation}\label{eq:QP}
U(\exp (t \xi)) = U(t\boldsymbol{\mu},t\boldsymbol{\nu}, e^{ist}) = \e^{ist I} \e^{it(\mu\textbf{Q}+\nu\textbf{P})} \, ,
\end{equation}
where $\xi\in \mathfrak{w}_n$ denotes a generic element on the Lie algebra of the Weyl-Heisenberg group, that is, $\xi = (\boldsymbol{\mu}, \boldsymbol{\nu}, s)$. The Lie algebra of the Abelian group $V$ is identified with $V$ itself (and the exponential map is the identity) and the Lie algebra of $U(1)$ is identified with $\mathbf{R}$ with exponential map $\exp (s) = e^{is}$.  In (\ref{eq:QP}), with a slight abuse of notation, we have indicated by $\mu$, $\nu$ the coordinates $\mu_k$, $\nu_k$, of the elements of the Lie algebra (i.e., vectors on the linear space $V$) $\boldsymbol{\mu}$, $\boldsymbol{\nu}$ respectively, so that $\mu \mathbf{Q} + \nu \mathbf{P} = \sum_{k =1}^n  \mu_k \mathbf{Q}_k + \nu_k \mathbf{P}_k$.

According with Thm. \ref{tom_cart},  Eq. (\ref{eq:TrsigmaX}), we obtain the tomogram for the state $\boldsymbol{\sigma}$:
$$
\mathcal{W}_{\sigma}(X;\xi)=\Tr\!\big(\boldsymbol{\sigma}\,\delta(X\mathbbm{1}-\langle\boldsymbol{\Theta},\xi\rangle)\big)
 \, ,
$$
for all $\xi = (\boldsymbol{\mu},\boldsymbol{\nu}, s) \in\mathfrak{w}_n$ and $X\in\mathbb{R}$, that again, using the more convenient notation $\mu$, $\nu$ for the coordinates of the elements on the Lie algebra and relabelling the variable $X-s$ as $X$, we get the well-known expression:
$$
\mathcal{W}_{\sigma}(X;\mu, \nu )= \Tr \! \big(\boldsymbol{\sigma}\,\delta( X\mathbbm{1}- \mu \mathbf{Q} -  \nu \mathbf{P})\big) \, ,
$$
that was discussed in the Introduction, Eq. (\ref{eq:quantum_radon_heisenberg}).

Finally, the reconstruction formula, Eq. (\ref{eq:reconstruction_operator}), for the state $\boldsymbol{\sigma}$ reads as
\begin{equation}\label{reconstruction_rho_HW}
\boldsymbol{\sigma} = \frac{1}{2\pi}\hspace{-0cm}\int_{W_n} \chi_\sigma (\mu,\nu, s) \, U(\mu,\nu, e^{is})^\dagger\diff^n\mu\, \diff^n \nu \, \diff s \, ,
\end{equation}
but then, because of Eq. (\ref{smeared_character_Fourier_new}), we get:
\begin{eqnarray}
\boldsymbol{\sigma} & = & \frac{1}{2\pi}\hspace{-0cm}\int_{W_n} \diff^n\mu\, \diff^n \nu \, \diff s \int\limits_{\hspace{-.04cm}-\infty}^{\hspace{.45cm}\infty}\hspace{-0.15cm} \diff X\,  \mathcal{W}_\sigma (X-s, \mu,\nu ) \, U(\mu,\nu, e^{is})^\dagger
\e^{itX} \, ,
\\ & = &\int\limits_{\hspace{.4cm}\mathbb{R}^{2n +1}}\hspace{-0.17cm}\mathcal{W}_\sigma (X,\mu,\nu)\e^{i(X \mathbbm{1} - \mu\textbf{Q}-\nu\textbf{P})}\, \diff X\, \diff^n\mu\, \diff^n\nu\, . \nonumber 
\end{eqnarray}

If we compare this equation with \eqref{Inverse_Radon_Transform_2} we see that are similar only except that the role of the probability density $f(q,p)$ is played now by the density operator $\boldsymbol{\sigma}$ and the role of classical position and momentum $q$ and $p$ is played by the operators $\textbf{Q}$ and $\textbf{P}$.


\subsection{Tomograms of an ensemble of quantum harmonic oscillators}\label{section_Toms_q_h_o}

As an application of the quantum Radon transform we will exhibit the tomograms of pure states of an ensemble of quantum harmonic oscillators described by the Hamiltonian operator:
\begin{equation}
\textbf{H}=\sum_{k=1}^n\omega_ka^\dagger_ka_k+\frac{1}{2}\sum_{k=1}^n\omega_k,
\end{equation}
with $a_k$, $a_k^\dagger$ standard creation and annihilation operators satisfying the canonical commutation relations:
\begin{equation}\label{Commutator_creation_annihilation_f}
\big[a_k,a_{k'}^\dagger\hspace{-0.cm}\big]=\delta_{kk'},\quad\big[a_k,a_{k'}\big]=\big[a_k^\dagger,a_{k'}^\dagger\big] = 0\, .
\end{equation}
Let $\boldsymbol{\rho}$ be the pure state corresponding to the system in which each particle has momentum $k_i$, $i=1,\ldots,n$:
\begin{equation}
\boldsymbol{\rho}=|1_{k_1},\ldots,1_{k_n}\rangle\langle1_{ k_1},\ldots,1_{k_n}|.
\end{equation}
Recall that the annihilation and creation operators act on the ground state $|0,\ldots,0\rangle$ as:
\begin{equation}
a_j^\dagger|0,\ldots,0\rangle=|0,\ldots,\overset{j}{1},\ldots,0\rangle,\qquad a_j|0,\ldots,0\rangle=0.
\end{equation}
The tomogram of the state $\boldsymbol{\rho}$ associated to the representation of the Weyl-Heisenberg group $W_n$ discussed in the previous section, is given by:
\begin{equation}
\mathcal{W}_{\boldsymbol{\rho}}(X,\mu,\nu) = \Tr\!\big(\boldsymbol{\rho}\, \delta(X\mathbbm{1}- \mu\textbf{Q}-\nu\textbf{P})\big) \, .
\end{equation}
Introducing holomorphic variables $w_j=\frac{\mu_j+i\nu_j}{\sqrt{2}}$, we get:
\begin{eqnarray}
\mathcal{W}_{\boldsymbol{\rho}}(X,w,\overline{w}) &=& \Tr\!\big(\boldsymbol{\rho}\, \delta(X-\overline{\boldsymbol{w}}\cdot\boldsymbol{a}-\boldsymbol{w}\cdot\boldsymbol{a}^\dagger)\big) \nonumber \\
&=& \vphantom{\sum^n}\langle 0|a_1\,\cdots\, a_n \, \delta(X-\overline{\boldsymbol{w}}\cdot\boldsymbol{a}-\boldsymbol{w}\cdot\boldsymbol{a}^\dagger) \, a_n^\dagger\,\cdots\, a_1^\dagger|0\rangle \\
&=& \frac{1}{2\pi} \int\limits_{\hspace{-.04cm}-\infty}^{\hspace{.45cm}\infty}\hspace{-0.15cm}\e^{ikX}\e^{-k^2|\boldsymbol{w}|^2/2}                                                                                                                                                                                                                                                                                     \langle 0|a_1\,\cdots\, a_n \e^{-ik(\overline{\boldsymbol{w}}\cdot\boldsymbol{a}+\boldsymbol{w}\cdot\boldsymbol{a}^\dagger)}a_n^\dagger\,\cdots\, a_1^\dagger|0\rangle\diff k\, . \nonumber
\end{eqnarray}

From the canonical commutation relations \eqref{Commutator_creation_annihilation_f} we get $\left[\vphantom{a^\dagger}\right.\hspace{-0.1cm}a^n_k,a^\dagger_j\hspace{-0.1cm}\left.\vphantom{a^\dagger}\right]=na^{n-1}_k\delta_{kj}$, therefore:
\begin{equation}
\langle 0|a_k\e^{-ikw_ia^\dagger_j}=\langle 0|(a_k-ikw_i\delta_{kj}),\qquad
\e^{-ik\overline{w}_ia_j}a^\dagger_k|0\rangle=(a^\dagger_k-ik\overline{w}_i\delta_{kj})|0\rangle.
\end{equation}
Hence, using this result and the Baker-Campbell-Hausdorff formula repeatedly, we get:
\begin{multline}\label{Tomogram_q_h_oscillator}
\mathcal{W}_{\boldsymbol{\rho}}(X,\boldsymbol{w},\overline{\boldsymbol{w}})=\frac{1}{2\pi}\hspace{-.15cm}\int\limits_{\hspace{-.04cm}-\infty}^{\hspace{.45cm}\infty}\hspace{-0.15cm}\e^{ikX}\e\begin{matrix}
                                                                                                                                                                                                                                                                                                    \hspace{-.1cm}\vspace{-.1cm}\scriptstyle{-k^2|\boldsymbol{w}|^2} \\
                                                                                                                                                                                                                                                                                                    \vspace{0cm} \\
                                                                                                                                                                                                                                                                                                  \end{matrix}\begin{matrix}
                                                                                                                                                                                                                                                                                                    \hspace{-0.03cm}\vspace{-.2cm}\scriptstyle{/} \\
                                                                                                                                                                                                                                                                                                    \vspace{0cm} \\
                                                                                                                                                                                                                                                                                                  \end{matrix}\begin{matrix}
                                                                                                                                                                                                                                                                                                    \hspace{0cm}\vspace{-.3cm}\scriptstyle{2} \\
                                                                                                                                                                                                                                                                                                    \vspace{0cm} \\
                                                                                                                                                                                                                                                                                                  \end{matrix}(1-k^2|w_1|^2)\cdots(1-k^2|w_n|^2)\diff k\\
                                                                                                                                                                                                                                                                                                  =\frac{1}{\sqrt{\pi(\boldsymbol{\mu}^2+\boldsymbol{\nu}^2)}}\left(1+\alpha_1\frac{\diff^2}{\diff X^2}
                                                                                                                                                                                                                                                                                               +\alpha_{n}\frac{\diff^{2n}}{\diff X^{2n}}\right)\e^{-\frac{X^2}{\boldsymbol{\mu}^2+\boldsymbol{\nu}^2}},                                                                                                                                                                                                                                                                                                                                                                                                                                                                                                                                                                       
\end{multline}
where
\begin{align}
&\alpha_1=\sum_{i_1=1}^n|w_{i_1}|^2=|\boldsymbol{w}|^2=2^{-1}(\boldsymbol{\mu}^2+\boldsymbol{\nu}^2),\nonumber\\
&\hspace{-0.1cm}\alpha_2\hspace{0.1cm}=\hspace{-0.3cm}\sum_{\hspace{0.2cm}i_1,\,i_2>i_1}^n\hspace{-0.33cm}|w_{i_1}|^2|w_{i_2}|^2=2^{-2}\hspace{-0.4cm}\sum_{\hspace{0.2cm}i_1,\,i_2>i_1}^n\hspace{-0.3cm}(\mu_{i_1}^2+\nu_{i_1}^2)(\mu_{i_2}^2+\nu_{i_2}^2)\,  , \cdots\nonumber\\
&\hspace{-0.5cm}\alpha_{n-1}\hspace{0.1cm}=\hspace{-0.4cm}\sum_{\substack{i_1,\,i_2>i_1,\ldots,\\i_{n-1}>\cdots>i_1}}^n\hspace{-0.5cm}|w_{i_1}|^2\cdots|w_{i_n}|^2=2^{-(n-1)}\hspace{-0.6cm}\sum_{\substack{i_1,\,i_2>i_1,\ldots,\\i_{n-1}>\cdots>i_1}}\hspace{-0.25cm}(\mu_{i_1}^2+\nu_{i_1}^2)\cdots(\mu_{i_n}^2+\nu_{i_n}^2),\nonumber\\
&\alpha_n=|w_1|^2\cdots\,|w_n|^2=2^{-n}(\mu_1^2+\nu_1^2)\,\cdots\,(\mu_n^2+\nu_n^2).
\end{align}
Thus, using the Hermite polynomials:
\begin{equation}\label{Hermite_pols}
H_n(x)=(-1)^n\e^{x^2}\frac{\diff^n}{\diff x^n}\e^{-x^2},
\end{equation}
we finally obtain:
\begin{multline}\label{NDimqharm}
\mathcal{W}_{\boldsymbol{\rho}}(X,\boldsymbol{\mu},\boldsymbol{\nu})=\frac{1}{\sqrt{\pi(\boldsymbol{\mu}^2+\boldsymbol{\nu}^2)}}\left[1+\frac{\alpha_1}{\boldsymbol{\mu}^2+\boldsymbol{\nu}^2}H_2\left(\frac{X}{\sqrt{\boldsymbol{\mu}^2+\boldsymbol{\nu}^2}}\right)+\cdots \right.\\
\cdots+\left.\frac{\alpha_{n}}{(\boldsymbol{\mu}^2+\boldsymbol{\nu}^2)^n}H_{2n}\left(\frac{X}{\sqrt{\boldsymbol{\mu}^2+\boldsymbol{\nu}^2}}\right)\right]\e^{-\frac{X^2}{\boldsymbol{\mu}^2+\boldsymbol{\nu}^2}}.
\end{multline}

\section*{Acknowledgments} A.I. acknowledges financial support from the Spanish Ministry of Economy and Competitiveness, through the Severo Ochoa Programme for Centres of Excellence in RD (SEV-2015/0554).
A.I. would like to thank partial support provided by the MINECO grant MTM2017-84098-P, and QUITEMAD++, S2018/TCS-A4342.

\medskip
Received xxxx 20xx; revised xxxx 20xx.
\medskip

\end{document}